\newcommand{\Nat}{\bbbn_0}
\newcommand{\pows}[1]{\mathscr{P}(#1)}
\newcommand{\exptime}{\textsc{exptime}}
\newcommand{\nexptime}{\textsc{nexptime}}
\newcommand{\bigo}{\mathcal{O}}
\newcommand{\opt}[1]{#1^\bot}
\newcommand{\updfkt}[3]{#1[#2\mapsto#3]}
\newcommand{\seq}[1]{\mathrm{Seq}(#1)}
\newcommand{\ximp}{\;\Rightarrow\;}
\newcommand{\xand}{\;\&\;}
\newcommand{\xor}{\text{ or }}
\newcommand{\xnot}{\text{not }}
\newcommand{\fml}{\mathrm{Fml}}
\newcommand{\afml}{\mathrm{AFml}}
\newcommand{\prg}{\mathrm{Prg}}
\newcommand{\aprg}{\mathrm{APrg}}
\newcommand{\lprg}{\mathrm{LPrg}}
\newcommand{\feal}{$\pea{\mathrm{lp}}{}$}
\newcommand{\fmlev}{\mathrm{Ev}}
\newcommand{\pnot}[1]{\lnot #1}
\newcommand{\pneg}[1]{\sim\!#1}
\newcommand{\pand}[2]{#1 \land #2}
\newcommand{\por}[2]{#1 \lor #2}
\newcommand{\pea}[2]{\langle#1\rangle #2}
\newcommand{\paa}[2]{[#1] #2}
\newcommand{\pcv}[1]{#1^-}
\newcommand{\psp}[2]{#1;#2}
\newcommand{\pup}[2]{#1 \cup #2}
\newcommand{\prp}[1]{#1*}
\newcommand{\pip}[1]{#1?}
\newcommand{\pimp}[2]{#1 \to #2}
\newcommand{\peq}[2]{#1 \leftrightarrow #2}
\newcommand{\evv}{\mathcal{V}}
\newcommand{\psr}[2]{#1 \Vdash #2}
\newcommand{\npsr}[2]{#1 \nVdash #2}
\newcommand{\prel}[3]{#1 \, #2 \, #3}
\newcommand{\pnnf}{\mathop{\mathrm{nnf}}}
\newcommand{\pnnp}[1]{#1^\smallsmile}
\newcommand{\pcl}{\mathop{\mathrm{cl}}}
\newcommand{\ppre}{\mathop{\mathrm{pre}}}
\newcommand{\pzz}{\mathrel{\rightsquigarrow}}
\newcommand{\extr}{\mathop{\mathrm{ex}}}
\newcommand{\mchn}{\mathop{\mathrm{chn}}}
\newcommand{\bcmt}{$(\!*$}
\newcommand{\ecmt}{$*\!)$}
\newcommand{\prcdr}{\textbf{Procedure}}
\newcommand{\issat}{\texttt{is-sat}}
\newcommand{\expand}{\texttt{expand}}
\newcommand{\detsts}{\texttt{det-status}}
\newcommand{\detstsb}{\texttt{det-sts-$\beta$}}
\newcommand{\detstss}{\texttt{det-sts-state}}
\newcommand{\detstsd}{\texttt{det-sts-spl}}
\newcommand{\detstscint}{\texttt{det-prs-child}}
\newcommand{\filter}{\texttt{filter}}
\newcommand{\fset}{\Gamma}
\newcommand{\ann}{\mathop{\mathrm{ann}}}
\newcommand{\pst}{\mathrm{pst}}
\newcommand{\pap}{\mathrm{ppr}}
\newcommand{\tms}{\mathrm{idx}}
\newcommand{\sts}{\mathrm{sts}}
\newcommand{\stsv}{\mathfrak{S}}
\newcommand{\unexp}{\mathtt{unexp}}
\newcommand{\undef}{\mathtt{undef}}
\newcommand{\closed}{\mathop{\mathtt{closed}}}
\newcommand{\open}{\mathop{\mathtt{open}}}
\newcommand{\prs}{\mathop{\mathrm{prs}}}
\newcommand{\alt}{\mathrm{alt}}
\newcommand{\mrk}{\mathrm{cs}}
\newcommand{\prsval}{\Lambda}
\newcommand{\rt}{\mathrm{rt}}
\newcommand{\create}{\mathop{\mbox{create-new-node}}}
\newcommand{\chld}{\mathop{\mathrm{getChild}}}
\newcommand{\eann}{\mathop{\mathrm{ann^\bot}}}
\newcommand{\eprs}{\mathop{\mathrm{prs^\bot}}}
\newcommand{\annchn}{\mathop{\mathrm{defer}}}
\newcommand{\prsreach}{\mathop{\mathrm{reach}}}
\newcommand{\nords}{\mathrel{\sqsubset}}
\newcommand{\gm}{G_{\mathrm{p}}}
\newcommand{\fchn}{\sigma}
\newcommand{\nchn}{\omega}
\newcommand{\ini}[1]{#1_{\mathrm{i}}}
\newcommand{\ptya}{\mathop{\mathfrak{P}}}
\begin{document}

\pagestyle{plain}

\title{Optimal and Cut-free Tableaux for Propositional Dynamic Logic with Converse}

\author{Rajeev Gor{\'e}\inst{1} \and Florian Widmann\inst{1,2}}

\institute{
  Logic and Computation Group$^1$ and NICTA$^2$\thanks{NICTA is funded by the
    Australian Government's Department of Communications, Information
    Technology and the Arts and the Australian Research Council
    through Backing Australia's Ability and the ICT Centre of
    Excellence program.},
  The Australian National University\\
  Canberra, ACT 0200, Australia,
  \email{$\{$Rajeev.Gore,Florian.Widmann$\}$@anu.edu.au}
}

\maketitle

\begin{abstract}
  We give an optimal (\exptime{}), sound and complete tableau-based algorithm
  for deciding satisfiability for propositional dynamic logic with converse (CPDL)
  which does not require the use of analytic cut.  
  Our main contribution is a sound method
  to combine our previous optimal method for tracking least fix-points in PDL
  with our previous optimal method for handling converse in the description logic~$ALCI$.
  The extension is non-trivial as the two methods cannot be combined naively.
  We give sufficient details to enable an implementation by others.
  Our OCaml implementation seems to be the first theorem prover for CPDL.
\end{abstract}

\section{Introduction}

Propositional dynamic logic (PDL) is an important logic for reasoning about programs.
Its formulae consist of traditional Boolean formulae
plus ``action modalities'' built from a finite set of atomic programs
using sequential composition~$(\psp{}{})$, non-deterministic choice~$(\pup{}{})$,
repetition~$(\prp{})$, and test~$(\pip{})$.
The logic CPDL is obtained by adding  converse ($\pcv{}$),
which allows us to reason about previous actions.
The satisfiability problem for CPDL is \exptime{}-complete~\cite{vardi85:tamin_conver}.

De Giacomo and Massacci~\cite{de-giacomo-massacci-tableaux-converse-pdl}
give an \nexptime{} tableau algorithm for deciding CPDL-satisfiability,
and discuss ways to obtain optimality,
but do not give an actual \exptime{} algorithm.
The tableau method of Nguyen and Sza{\l}as~\cite{nguyen-szalas-cpdl} is optimal.
Neither method has been implemented,
and since both require an explicit analytic cut rule,
it is not at all obvious that they can be implemented efficiently.
Optimal game-theoretic methods for fix-point logics~\cite{lange-stirling-focus-games}
can be adapted to handle CPDL~\cite{lange03:satis_compl_conver_pdl_replay}
but involve significant non-determinism.
Optimal automata-based methods~\cite{vardi-wolper-automata}
for fix-point logics are still in their infancy
because good optimisations are not known.
We know of no resolution methods for CPDL.

We give an optimal tableau method for deciding CPDL-satisfiability
which does not rely on a cut rule.
Our main contribution is a sound method to combine our method
for tracking and detecting unfulfilled eventualities
as early as possible in PDL~\cite{gore-widmann-pdl-cade09}
with our method for handling converse for $ALCI$~\cite{gore-widmann-alci-tableaux09}.
The extension is non-trivial as the two methods cannot be combined naively.

We present a mixture of pseudo code and tableau rules
rather than a set of traditional tableau rules to enable easy implementation by others.
Our unoptimised OCaml implementation appears to be the first automated theorem prover for CPDL
(\url{http://rsise.anu.edu.au/~rpg/CPDLTabProver/}).
The proofs can be found in the appendix.

\section{Syntactic Preliminaries}
\label{secd:syntax-semantics}

\begin{definition}
  Let~$\afml$ and~$\aprg$ be two disjoint and countably infinite sets
  of propositional variables and \emph{atomic programs}, respectively.
  The set~$\lprg$ of \emph{literal programs} is defined
  as $\lprg := \aprg \cup \{ \pcv{a} \mid a \in \aprg \}$.
  The set~$\fml$ of all formulae and the set~$\prg$ of all \emph{programs}
  are defined mutually inductively as follows where~$p \in \afml$ and~$l \in \lprg$:

  \begin{tabular}[c]{ccccccccccccccc}
    $\fml$ \ \ \ & $\varphi$ & ::=
    & $p$ & $\mid$
    & $\pnot{\varphi}$ & $\mid$
    & $\pand{\varphi}{\varphi}$ & $\mid$
    & $\por{\varphi}{\varphi}$ & $\mid$
    & $\pea{\gamma}{\varphi}$ & $\mid$ 
    & $\paa{\gamma}{\varphi}$
    \\
    $\prg$ \ \ \ & $\gamma$ & ::=
    & $l$ & $\mid$
    & $\psp{\gamma}{\gamma}$ & $\mid$
    & $\pup{\gamma}{\gamma}$ & $\mid$
    & $\prp{\gamma}$ & $\mid$
    & $\pip{\varphi} \enspace.$
  \end{tabular}

  \noindent{} A \feal-formula is a formula~$\pea{\gamma}{\varphi}$
  where~$\gamma \in \lprg$ is a \underline{\emph{l}}iteral \underline{\emph{p}}rogram.
\end{definition}

Implication~($\pimp{}{}$) and equivalence~($\peq{}{}$)
are not part of the core language but can be defined as usual.
In the rest of the paper, let~$p \in \afml$ and~$l \in \lprg$.

We omit the semantics as it is a straightforward extension of PDL~\cite{gore-widmann-pdl-cade09}
and write~$\psr{M, w}{\varphi}$ if~$\varphi \in \fml$ holds in the world~$w \in W$ of the model~$M$.

\begin{definition}
  For a literal program~$l \in \lprg$, we define~$\pnnp{l}$ as~$a$
  if~$l$ is of the form~$\pcv{a}$, and as~$\pcv{l}$ otherwise.
  A formula~$\varphi \in \fml$ is in \emph{negation normal form}
  if the symbol~$\pnot{}$ appears only directly before propositional variables.
  For every~$\varphi \in \fml$, we can obtain a formula~$\pnnf(\varphi)$ in negation normal form
  by pushing negations inward such that~$\peq{\varphi}{\pnnf{\varphi}}$ is valid.
  We define~$\pneg{\varphi} := \pnnf(\pnot{\varphi})$.
\end{definition}

We categorise formulae as $\alpha$- or $\beta$-formulae as shown in Table~\ref{tabd:alphabeta}
so that the formulae of the form~$\peq{\alpha}{\pand{\alpha_1}{\alpha_2}}$ and~$\peq{\beta}{\por{\beta_1}{\beta_2}}$ are valid.
\begin{table}[t]
  \caption{Smullyan's $\alpha$- and $\beta$-notation to classify formulae}
  \label{tabd:alphabeta}
  \centering
  \begin{tabular}{|c|c|c|c|c|c|c|}
    \hline
    $\alpha$ 
    & $\pand{\varphi}{\psi}$ 
    & $\paa{\pup{\gamma}{\delta}}{\varphi}$ 
    & $\paa{\prp{\gamma}}{\varphi}$ 
    & $\pea{\pip{\psi}}{\varphi}$ 
    & $\pea{\psp{\gamma}{\delta}}{\varphi}$ 
    & $\paa{\psp{\gamma}{\delta}}{\varphi}$ 
    \\ \hline
    $\alpha_1$ 
    & $\varphi$ 
    & $\paa{\gamma}{\varphi}$ 
    & $\varphi$ 
    & $\varphi$ 
    & $\pea{\gamma}{\pea{\delta}{\varphi}}$ 
    & $\paa{\gamma}{\paa{\delta}{\varphi}}$ 
    \\ \hline
    $\alpha_2$
    & $\psi$
    & $\paa{\delta}{\varphi}$
    & $\paa{\gamma}{\paa{\prp{\gamma}}{\varphi}}$
    & $\psi$
    & 
    & 
    \\ \hline
  \end{tabular}
  \begin{tabular}{|c|c|c|c|c|}
    \hline
    $\beta$ 
    & $\por{\varphi}{\psi}$ 
    & $\pea{\pup{\gamma}{\delta}}{\varphi}$ 
    & $\pea{\prp{\gamma}}{\varphi}$ 
    & $\paa{\pip{\psi}}{\varphi}$ 
    \\ \hline
    $\beta_1$ 
    & $\varphi$ 
    & $\pea{\gamma}{\varphi}$ 
    & $\varphi$ 
    & $\varphi$ 
    \\ \hline
    $\beta_2$
    & $\psi$
    & $\pea{\delta}{\varphi}$
    & $\pea{\gamma}{\pea{\prp{\gamma}}{\varphi}}$
    & $\pneg{\psi}$
    \\ \hline
  \end{tabular}
\end{table}
An \emph{eventuality} is a formula of the form~$\pea{\gamma_1}{\dotsc \pea{\gamma_k}{\pea{\prp{\gamma}}{\varphi}}}$,
and~$\fmlev$ is the set of all eventualities. 
Using Table~\ref{tabd:alphabeta}, the binary relation~``$\pzz$''
relates a $\pea{}{}$-formulae~$\alpha$ (respectively~$\beta$),
to its reduction~$\alpha_1$ (respectively~$\beta_1$ and~$\beta_2$).
See~\cite[Def.~7]{gore-widmann-pdl-cade09} for their formal definitions.

\section{An Overview of our Algorithm}
\label{sec:algorithm}

Our algorithm builds an and-or graph~$G$
by repeatedly applying four rules (see Table~\ref{tabd:rules})
to try to build a model for a given~$\phi$ in negation normal form. 
Each node~$x$ carries a formula set~$\Gamma_x$, a status~$\sts_x$,
and other fields to be described shortly. 
Rule~1 applies the usual expansion rules to a node to create its children.
These expansion rules capture the semantics of CPDL.
We use Smullyan's $\alpha/\beta$-rule notation for classifying rules and nodes.
As usual, a node~$x$ is a (``saturated'') \emph{state}
if no $\alpha/\beta$-rule can be applied to it.
If~$x$ is a state then for each~$\pea{l}{\xi}$ in~$\Gamma_x$,
we create a node~$y$ with~$\Gamma_y = \{ \xi \} \cup \Delta$, where~$\Delta = \{\psi \mid \paa{l}{\psi} \in \Gamma_x\}$,
and add an edge from~$x$ to~$y$ labelled with~$\pea{l}{\xi}$
to record that~$y$ is an $l$-successor of~$x$. 

If~$\Gamma_x$ contains an obvious contradiction during
expansion, its status becomes ``closed'', which is irrevocable.
Else, at some later stage, Rule~2 determines its status as either ``closed'' or ``open''.
``Open'' nodes contain additional information
which depends on the status of other nodes.
Hence, if a node changes its status, it might affect the status of another (``open'') node.
If the stored status of a node does not match its current status,
the node is no longer \emph{up-to-date}.
Rule~3, which may be applied multiple times to the same node,
ensures that ``open'' nodes are kept up-to-date
by recomputing their status if necessary.
Finally, Rule~4 detects eventualities which are impossible to fulfil
and closes nodes which contain them. 
We first describe the various important components of our algorithm separately.

\paragraph{Global State Caching.}
For optimality, the graph~$G$ never contains two state nodes
which carry the same set of formulae~\cite{gore-widmann-alci-tableaux09}.
However, there may be multiple non-states which carry the same set of formulae.
That is, a non-state node~$x$ carrying~$\fset$
which appears while saturating a child~$y$ of a state~$z$ is unique to~$y$.
If a node carrying~$\fset$ is required in some other saturation phase,
a new node carrying~$\fset$ is created.
Hence the nodes of two saturation phases are distinct.

\paragraph{Converse.}
Suppose state~$y$ is a descendant of an $l$-successor of a state~$x$,
with no intervening states. 
Call~$x$ the parent state of~$y$ since all intervening nodes are not states.
We require that $\{ \psi \mid \paa{\pcv{l}}{\psi} \in \Gamma_y \} \subseteq \Gamma_x$,
since~$y$ is then compatible with being a $l$-successor of~$x$
in the putative model under construction.
If some~$\paa{\pcv{l}}{\psi} \in \Gamma_y$ has~$\psi \notin \Gamma_x$ then~$x$ is ``too small'',
and must be ``restarted'' as an alternative node~$x^{+}$ containing all such~$\psi$.
If any such~$\psi$ is a complex formula to which an $\alpha/\beta$-rule is applicable
then~$x^{+}$ is not a state and may have to be ``saturated'' further.
The job of creating these alternatives is done by  
\emph{special nodes}~\cite{gore-widmann-alci-tableaux09}.
Each special node monitors a state and creates the alternatives when needed. 

\paragraph{Detecting Fulfilled and Unfulfilled Eventualities.}
Suppose the current node~$x$ contains an eventuality~$e_x$. 
There are three possibilities.
The first is that~$e_x$ can be fulfilled in the part of the graph which is ``older'' than~$x$.
Else, it may be possible to reach a node~$z$ in the parts of the graph ``newer'' than~$x$
such that~$z$ contains a reduction~$e_z$ of~$e_x$.
Since this ``newer'' part of the graph is not fully explored yet,
future expansions may enable us to fulfil~$e_x$ via~$z$,
so the pair~$(z, e_z)$ is a ``potential rescuer'' of~$e_x$.
The only remaining case is that~$e_x$ cannot be fulfilled in the ``older'' part of the graph,
and has no potential rescuers.
Thus future expansions of the graph cannot possibly help to fulfil~$e_x$
since it cannot reach these ``newer'' parts of the future graph.
In this case~$x$ can be ``closed''. 
The technical machinery to maintain this information for PDL is
from~\cite{gore-widmann-pdl-cade09}.
However, the presence of ``converse'' and the resulting need for alternative nodes
requires a more elaborate scheme for CPDL.

\section{The Algorithm}

Our algorithm builds a directed graph~$G$
consisting of nodes and directed edges.
We first explain the structure of~$G$ in more detail.
\begin{definition}
  Let~$X$ and~$Y$ be sets.
  We define~$\opt{X} := X \uplus \{ \bot \}$
  where~$\bot$ indicates the undefined value and~$\uplus$ is the disjoint union.
  If~$f: X \to Y$ is a function and~$x \in X$ and~$y \in Y$
  then the function~$\updfkt{f}{x}{y}: X \to Y$ is defined as
  $\updfkt{f}{x}{y}(x') := y$ if~$x' = x$
  and $\updfkt{f}{x}{y}(x') := f(x')$ if~$x' \not= x$.
\end{definition}
\begin{definition}
  \label{def:nodes}
  Let~$G = (V, E)$ be a graph
  where~$V$ is a set of nodes and~$E$ is a set of directed edges.
  Each node~$x \in V$ has six attributes:
  $\fset_x \subseteq \fml$, $\ann_x: \fmlev \to \opt{\fml}$,
  $\pst_x \in \opt{V}$, $\pap_x \in \opt{\lprg}$,
  $\tms_x \in \opt{Nat}$, and~$\sts_x \in \stsv$
  where~$\stsv := \{ \unexp, \undef \} \cup \{ \closed(\alt) \mid \alt \subseteq \pows{\fml} \} \cup
  \{ \open(\prs,\alt) \mid \prs: \fmlev \to \opt{(\pows{V \times \fmlev})} \xand \alt \subseteq \pows{\fml} \}$.
  Each directed edge~$e \in E$ is labelled with a label~$l_e \in \opt{(\fml \cup \pows{\fml} \cup \{ \mrk \})}$
  where~$\mrk$ is just a constant.
\end{definition}

All attributes of a node~$x \in V$ are initially set at the creation of~$x$,
possibly with the value~$\bot$ (if allowed).
Only the attributes~$\tms_x$ and~$\sts_x$ are changed at a later time.
We use the function~$\create(\fset, \ann, \pst, \pap, \tms, \sts)$
to create a new node and initialise its attributes in the obvious way.

The finite set~$\fset_x$ contains the formulae which are assigned to~$x$.
The attribute~$\ann_x$ is defined for the eventualities in~$\fset_x$ at most.
If~$\ann_x(\varphi) = \varphi'$ then~$\varphi' \in \fset_x$ and~$\varphi \pzz \varphi'$.
The intuitive meaning is that~$\varphi$ has already been ``reduced'' to~$\varphi'$ in~$x$.
For a state (as defined below) we always have that~$\ann_x$ is undefined everywhere
since we do not need the attribute for states.

The node~$x$ is called a \emph{state} iff both attributes~$\pst_x$ and~$\pap_x$ are undefined.
For all other nodes, the attribute~$\pst_x$ identifies the,
as we will ensure, unique ancestor~$p \in V$ of~$x$
such that~$p$ is a state and there is no other state between~$p$ and~$x$ in~$G$.
We call~$p$ the \emph{\underline{p}arent \underline{st}ate} of~$x$.
The creation of the child of~$p$ which lies on the path from~$p$ to~$x$ (it could be~$x$)
was caused by a \feal-formula~$\pea{l}{\varphi}$ in~$\fset_p$.
The literal program~$l$ which we call the \emph{\underline{p}arent \underline{pr}ogram} of~$x$
is stored in~$\pap_x$.
Hence, for nodes which are not states, both~$\pst_x$ and~$\pap_x$ are defined.

The attribute~$\sts_x$ describes the \emph{status} of~$x$.
Unlike the attributes described so far, its value may be modified several times.
The value~$\unexp$, which is the initial value of each node,
indicates that the node has not yet been expanded.
When a node is expanded, its status becomes either~$\closed(\cdot)$
if it contains an immediate contradiction,
or~$\undef$ to indicate that the node has been expanded
but that its ``real'' status is to be determined.
Eventually, the status of each node is set to either~$\closed(\cdot)$ or~$\open(\cdot, \cdot)$.
If the status is~$\open(\cdot, \cdot)$, it might be modified several times later on,
either to~$\closed(\cdot)$ or to~$\open(\cdot, \cdot)$ (with different arguments),
but once it becomes $\closed(\cdot)$, it will never change again.

We call a node \emph{undefined} if its status is~$\unexp$ or~$\undef$
and \emph{defined} otherwise.
Hence a node is undefined initially,
becomes defined eventually, and then never becomes undefined again.
Furthermore, we call~$x$ \emph{closed} iff its status is~$\closed(\alt)$
for some~$\alt \subseteq \pows{\fml}$.
In this case, we define~$\alt_x := \alt$.
We call~$x$ \emph{open} iff its status is~$\open(\prs,\alt)$
for some~$\prs: \fmlev \to \opt{(\pows{V \times \fmlev})}$ and some~$\alt \subseteq \pows{\fml}$.
In this case, we define~$\prs_x := \prs$ and~$\alt_x := \alt$.
To avoid some clumsy case distinctions,
we define~$\alt_x := \emptyset$ if~$x$ is undefined.

The value~$\closed(\alt)$ indicates
that the node is ``useless'' for building an interpretation
because it is either unsatisfiable or ``too small''.
In the latter case,
the set~$\alt$ of \emph{alternative sets} contains information about missing formulae.
Finally, the value~$\open(\prs, \alt)$ indicates
that there is still hope that~$x$ is ``useful''
and the function~$\prs_x$ contains information about each eventuality~$e_x \in \fset_x$
as explained in the overview.
Although~$x$ itself may be useful,
we need its alternative sets in case it becomes closed later on.
Hence it also has a set of alternative sets.

The attribute~$\tms_x$ serves as a time stamp.
It is set to~$\bot$ at creation time of~$x$
and becomes defined when~$x$ becomes defined.
When this happens, the value of~$\tms_x$ is set
such that~$\tms_x > \tms_y$ for all nodes~$y$ which became defined earlier than~$x$.
We define~$y \nords x$ iff~$\tms_y \not= \bot$ and either~$\tms_x = \bot$ or~$\tms_y < \tms_x$.
Note that~$y \nords x$ depends on the current state of the graph.
However, once~$y \nords x$ holds, it will do so for the rest of the time.

To track eventualities, we label an edge between a state and one of its children
by the \feal-formula~$\pea{l}{\varphi}$ which creates this child.
Additionally, we label edges from special nodes (see overview)
to their \underline{c}orresponding \underline{s}tates with the marker~$\mrk$.
We also label edges from special nodes to its alternative nodes
with the corresponding alternative set.

\begin{definition}
  Let~$\eann: \fmlev \to \opt{\fml}$ and~$\eprs: \fmlev \to \opt{(\pows{V \times \fmlev})}$
  be the functions which are undefined everywhere.
  For a node~$x \in V$ and a label~$l \in \fml \cup \pows{\fml} \cup \{ \mrk \}$,
  let~$\chld(x, l)$ be the node~$y \in V$
  such that there exists an edge~$e \in E$ from~$x$ to~$y$ with~$l_e = l$.
  If~$y$ does not exists or is not unique, let the result be~$\bot$.
  For a function~$\prs: \fmlev \to \opt{(\pows{V \times \fmlev})}$,
  a node~$x \in V$, and an eventuality~$\varphi \in \fmlev$,
  we define the set~$\prsreach(\prs, x, \varphi)$ of eventualities as follows:
  \[ \prsreach(\prs, x, \varphi) \begin{array}[t]{l}
    := \Big\{ \psi \in \fmlev \mid 
    \exists k \in \Nat.\: \exists \varphi_0, \dotsc, \varphi_k \in \fmlev.\: \Big( \psi = \varphi_k \xand\\
    (x, \varphi_0) \in \prs(\varphi) \xand 
    \forall i \in \{ 0, \dotsc, k-1 \}.\: (x, \varphi_{i+1}) \in \prs(\varphi_i) \Big) \Big\} \enspace.
  \end{array} \]
  The function~$\annchn: V \times \fmlev \to \opt{\fml}$ is defined as follows:
  \begin{displaymath}
    \annchn(x, \varphi) := 
    \left\{
      \begin{array}{ll}
        \psi & \text{ if }
        \begin{array}[t]{l}
          \exists k \in \Nat.\: \exists \varphi_0, \dotsc, \varphi_k \in \fml.\: \Big( \varphi_0 = \varphi \xand \varphi_k = \psi \xand\\
          \forall i \in \{ 0, \dotsc, k-1 \}.\: \big( \varphi_i \in \fmlev \xand \ann_x(\varphi_i) = \varphi_{i+1} \big) \xand\\
          \big( \varphi_k \notin \fmlev \xor \ann_x(\varphi_k) = \bot \big) \Big)
        \end{array}\\
        \bot & \text{ otherwise.}
      \end{array}
    \right .
  \end{displaymath}
\end{definition}

The function~$\chld(x, l)$ retrieves a particular child of~$x$.
It is easy to see that, during the algorithm,
the child is always unique if it exists.

Intuitively, the function~$\prsreach(\prs, x, \varphi)$ computes all eventualities
which can be ``reached'' from~$\varphi$ inside~$x$ according to~$\prs$.
If a potential rescuer~$(x, \psi)$ is contained in~$\prs(\varphi)$,
the potential rescuers of~$\psi$ are somehow relevant for~$\varphi$ at~$x$.
Therefore~$\psi$ itself is relevant for~$\varphi$ at~$x$.
The function~$\prsreach(\prs, x, \varphi)$ computes exactly the transitive closure
of this relevance relation.

Intuitively, the function~$\annchn(x, \varphi)$ follows the ``$\ann_x$-chain''.
That is, it computes~$\varphi_1 := \ann_x(\varphi)$, $\varphi_2 := \ann_x(\varphi_1)$, and so on.
There are two possible outcomes.
The first outcome is that we eventually encounter a~$\varphi_k$
which is either not an eventuality or has~$\ann_x(\varphi_k) = \bot$.
Consequently, we cannot follow the ``$\ann_x$-chain'' any more.
In this case we stop and return~$\annchn(x, \varphi) := \varphi_k$.
The second outcome is that we can follow the ``$\ann_x$-chain'' indefinitely.
Then, as~$\fset_x$ is finite, there must exist a cycle~$\varphi_0, \dotsc, \varphi_n, \varphi_0$ of eventualities
such that~$\ann_x(\varphi_i) = \varphi_{i+1}$ for all~$0 \leq i < n$, and~$\ann_x(\varphi_n) = \varphi_0$.
In this case we say that~$x$ (or~$\fset_x$) contains an \emph{``at a world'' cycle}
and return~$\annchn(x, \varphi) := \bot$.

Next we comment on all procedures given in pseudocode.

\begin{procedure}[t]
  \caption{\issat{}($\phi$) for testing whether a formula~$\phi$ is satisfiable}
  \dontprintsemicolon
  \SetVline
  \Input{a formula~$\phi \in \fml$ in negation normal form}
  \Output{true iff~$\phi$ is satisfiable}
  \BlankLine
  $G :=$ a new empty graph; \quad $\tms := 1$ \;
  let~$d \in \aprg$ be a dummy atomic program which does not occur in~$\phi$ \;
  $\rt := \create(\{ \pea{d}{\phi} \}, \eann, \bot, \bot, \bot, \unexp)$ \;
  insert~$\rt$ in~$G$ \;
  \While{one of the rules in Table~\ref{tabd:rules} is applicable}{
    apply any one of the applicable rules in Table~\ref{tabd:rules} \;
  }
  \lIf{$\sts_\rt = \open(\cdot,\cdot)$}{\Return true} \lElse{\Return false} \;
\end{procedure}
\begin{table}[t]
  \caption{Rules used in the procedure~\issat{}}
  \label{tabd:rules}
  \centering
  \begin{tabular}{|ll|}
    \hline
    Rule~1: & Some node~$x$ has not been expanded yet.\\
    Condition: & $\exists x \in V.\: \sts_x = \unexp$\\
    Action: & \expand($x$)\\
    \hline
    Rule~2: & The status of some node~$x$ is still undefined.\\
    Condition: & $\exists x \in V.\: \sts_x = \undef$\\
    Action: & $\sts_x :=$ \detsts($x$) \xand $\tms_x := \tms$ \xand $\tms := \tms + 1$\\
    \hline
    Rule~3: & Some open node~$x$ is not up-to-date.\\
    Condition: & $\exists x \in V.\: \open(\cdot,\cdot) = \sts_x \not= $ \detsts($x$)\\
    Action: & $\sts_x :=$ \detsts($x$)\\
    \hline
    Rule~4: & All nodes are up-to-date, and some~$x$ has an unfulfilled eventuality~$\varphi$.\\
    Condition: & Rule~3 is not applicable and\\
    & $\exists x \in V.\: \sts_x = \open(\prs_x, \alt_x) \xand \exists \varphi \in \fmlev \cap \fset_x.\: \prs_x(\varphi) = \emptyset$\\
    Action: & $\sts_x := \closed(\alt_x)$\\
    \hline
  \end{tabular}
\end{table}

\noindent\prcdr{}~\issat{}($\phi$) is invoked to determine
whether a formula~$\phi \in \fml$ in negation normal form is satisfiable.
It creates a root node~$\rt$
and initialises the graph~$G$ to contain only~$\rt$.
The dummy program~$d$ is used to make~$\rt$ a state
so that each node in~$G$ which is not a state has a parent state.
The global variable~$\tms$ is used to set the time stamps of the nodes accordingly.

While at least one of the rules in Table~\ref{tabd:rules} is applicable,
that is its condition is true,
the algorithm applies any applicable rule.
If no rules are applicable,
the algorithm returns satisfiable iff~$\rt$ is open.

Rule~1 picks an unexpanded node and expands it.
Rule~2 picks an expanded but undefined node and computes its (initial) status.
It also sets the correct time stamp.
Rule~3 picks an open node whose status has changed and recomputes its status.
Its meaning is, that if we compute \detsts($x$) on the current graph
then its result is different from the value in~$\sts_x$,
and consequently, we update~$\sts_x$ accordingly.
Rule~4 is only applicable if all nodes are up-to-date.
It picks an open node containing an eventuality~$\varphi$
which is currently not fulfilled in the graph
and which does not have any potential rescuers either.
As this indicates that~$\varphi$ can never be fulfilled,
the node is closed.

This description leaves several questions open, most notably:
``How do we check efficiently whether Rule~3 is applicable?'' and
``Which rule should be taken if several rules are applicable?''.
We address these issues in Section~\ref{secd:strategy}.

\begin{procedure}[t]
  \caption{\expand{}($x$) for expanding a node~$x$}
  \dontprintsemicolon
  \SetVline
  \Input{a node~$x \in V$ with~$\sts_x = \unexp$}
  \BlankLine
  \If{$\exists\varphi \in \fset_x.\: \pneg{\varphi} \in \fset_x \xor (\varphi \in \fmlev \xand \annchn(x, \varphi) = \bot)$}{
    $\tms_x := \tms$; \quad $\tms := \tms + 1$; \quad  $\sts_x := \closed(\emptyset)$ \;
  }
  \Else(\bcmt{} $x$ does not contain a contradiction \ecmt{}){
    $\sts_x := \undef$ \;
    \uIf(\bcmt{} $x$ is a state \ecmt{}){$\pst_x = \bot$}{
      let~$\pea{l_1}{\varphi_1}, \dotsc , \pea{l_k}{\varphi_k}$ be all of the \feal-formulae in~$\fset_x$ \;
      \For{$i \longleftarrow 1$ \KwTo $k$}{
        $\fset_i := \{ \varphi_i \} \cup \{ \psi \mid \paa{l_i}{\psi} \in \fset_x \}$ \;
        $y_i := \create(\fset_i, \eann, x, l_i, \bot, \unexp)$ \;
        insert~$y_i$, and an edge from~$x$ to~$y_i$ labelled with~$\pea{l_i}{\varphi_i}$, into~$G$ \;
      }
    }
    \uElseIf{$\exists\alpha \in \fset_x.\: \{ \alpha_1, \dotsc, \alpha_k \} \not\subseteq \fset_x \xor (\alpha \in \fmlev \xand \ann_x(\alpha) = \bot)$}{
      $\fset := \fset_x \cup \{ \alpha_1, \dotsc, \alpha_k \}$ \;
      $\ann :=$ \lIf{$\alpha \in \fmlev$}{$\updfkt{\ann_x}{\alpha}{\alpha_1}$} \lElse{$\ann_x$} \;
      $y := \create(\fset, \ann, \pst_x, \pap_x, \bot, \unexp)$ \;
      insert~$y$, and an edge from~$x$ to~$y$, into~$G$ \;
    }
    \uElseIf{$\exists\beta \in \fset_x.\: \{ \beta_1, \beta_2 \} \cap \fset_x = \emptyset \xor (\beta \in \fmlev \xand \ann_x(\beta) = \bot)$}{
      \For{$i \longleftarrow$ 1 \KwTo 2}{
        $\fset_i := \fset_x \cup \{ \beta_i \}$ \;
        $\ann_i :=$ \lIf{$\beta \in \fmlev$}{$\updfkt{\ann_x}{\beta}{\beta_i}$} \lElse{$\ann_x$} \;
        $y_i := \create(\fset_i, \ann_i, \pst_x, \pap_x, \bot, \unexp)$ \;
        insert~$y_i$, and an edge from~$x$ to~$y_i$, into~$G$ \;
      }
    }
    \Else(\bcmt{} $x$ is a special node \ecmt{}){
      \uIf(\bcmt{} state already exists in~$G$ \ecmt{}){
        $\exists y \in V.\: \fset_y = \fset_x \xand \pst_y = \bot$
      }{
        insert an edge from~$x$ to~$y$ labelled with~$\mrk$ into~$G$ \;
      }
      \Else(\bcmt{} state does not exist in~$G$ yet \ecmt{}){
        $y := \create(\fset_x, \eann, \bot, \bot, \bot, \unexp)$ \;
        insert~$y$, and an edge from~$x$ to~$y$ labelled with~$\mrk$, into~$G$ \;
      }
    }
  }
\end{procedure}

\noindent\prcdr{}~\expand{}($x$) expands a node~$x$.
If~$\fset_x$ contains an immediate contradiction or an ``at a world'' cycle
then we close~$x$ and set the time stamp accordingly.
For the other cases, we assume implicitly that~$\fset_x$ does not contain either of these.

If~$x$ is a state, that is~$\pst_x = \bot$,
then we do the following for each \feal-formula~$\pea{l_i}{\varphi_i}$.
We create a new node~$y_i$
whose associated set contains~$\varphi_i$ and all~$\psi$ such that~$\paa{l_i}{\psi} \in \fset_x$.
As none of the eventualities in~$\fset_{y_i}$ is reduced yet, there are no annotations.
The parent state of~$y_i$ is obviously~$x$ and its parent program is~$l_i$.
In order to relate~$y_i$ to~$\pea{l_i}{\varphi_i}$,
we label the edge from~$x$ to~$y_i$ with~$\pea{l_i}{\varphi_i}$.
We call~$y_i$ the \emph{successor} of~$\pea{l_i}{\varphi_i}$.

If~$x$ is not a state and~$\fset_x$ contains an $\alpha$-formula~$\alpha$
whose decompositions are not in~$\fset_x$,
or which is an unannotated eventuality,
we call~$x$ an \emph{$\alpha$-node}.
In this case, we create a new node~$y$
whose associated set is the result of adding all decompositions of~$\alpha$ to~$\fset_x$.
If~$\alpha$ is an eventuality
then~$\ann_y$ extends~$\ann_x$ by mapping~$\alpha$ to~$\alpha_1$.
The parent state and the parent program of~$y$ are inherited from~$x$.
Note that~$\pst_x$ and~$\pap_x$ are defined as~$x$ is not a state.
Also note that~$\fset_y \supsetneq \fset_x$
or~$\alpha$ is an eventuality which is annotated in~$\ann_y$ but not in~$\ann_x$.

If~$x$ is neither a state nor an $\alpha$-node
and~$\fset_x$ contains a $\beta$-formula~$\beta$
such that neither of its immediate subformulae is in~$\fset_x$,
or such that~$\beta$ is an unannotated eventuality,
we call~$x$ a \emph{$\beta$-node}.
For each decomposition~$\beta_i$ we do the following.
We create a new node~$y_i$
whose associated set is the result of adding~$\beta_i$ to~$\fset_x$.
If~$\beta$ is an eventuality
then~$\ann_{y_i}$ extends~$\ann_x$ by mapping~$\alpha$ to~$\beta_i$.
The parent state and the parent program of~$y$ are inherited from~$x$.
Note that~$\pst_x$ and~$\pap_x$ are defined as~$x$ is not a state.
Also note that~$\fset_{y_i} \supsetneq \fset_x$
or~$\beta$ is an eventuality which is annotated in~$\ann_{y_i}$ but not in~$\ann_x$.

If~$x$ is neither a state nor an $\alpha$-node nor a $\beta$-node,
it must be fully saturated and we call it a \emph{special node}.
Intuitively, a special node sits between a saturation phase and a state
and is needed to handle the ``special'' issue arising from converse programs,
as explained in the overview.
Like $\alpha$- and $\beta$-nodes,
special nodes have a unique parent state and a unique parent program.
In this case we check whether there already exists a state~$y$ in~$G$
which has the same set of formulae as the special node.
If such a state~$y$ exists, we link~$x$ to~$y$;
else we create such a state and link~$x$ to it.
In both cases we label the edge with the marker~$\mrk$
since a special node can have several children (see below)
and we want to uniquely identify the $\mrk$-child~$y$ of~$x$.
Note that there is only at most one state for each set of formulae
and that states are always fully saturated since special nodes are.

\begin{procedure}[t]
  \caption{\detsts{}($x$) for determining the status of a node~$x$}
  \dontprintsemicolon
  \SetVline
  \Input{a node~$x \in V$ with~$\unexp \not= \sts_x \not= \closed(\cdot)$}
  \BlankLine
  \lIf{$x$ is an $\alpha$-or a $\beta$-node}{$\sts_x :=$ \detstsb($x$)} \;
  \lElseIf{$x$ is a state}{$\sts_x :=$ \detstss($x$)} \;
  \Else(\bcmt{} $x$ is a special node, in particular~$\pst_x \not= \bot \not= \pap_x$ \ecmt{}){
    $\fset_\alt := \{ \varphi \mid \paa{\pnnp{\pap_x}}{\varphi} \in \fset_x \} \setminus \fset_{\pst_x}$ \;
    \lIf{$\fset_\alt = \emptyset$}{$\sts_x :=$ \detstsd($x$)}
    \lElse{$sts_x := \closed(\fset_\alt)$} \;
  }
\end{procedure}

\noindent\prcdr{}~\detsts{}($x$) determines the current status of a node~$x$.
Its result will always be~$\closed(\cdot)$ or~$\open(\cdot,\cdot)$.
If~$x$ is an $\alpha$/$\beta$-node or a state,
the procedure just calls the corresponding sub-procedure.
If~$x$ is a special node,
we determine the set~$\fset_\alt$ of all formulae~$\varphi$
such that~$\paa{\pnnp{\pap_x}}{\varphi}$ is in~$\fset_x$
but~$\varphi$ is not in the set of the parent state of~$x$.
If there is no such formula, that is~$\fset_\alt$ is the empty set,
we say that~$x$ is \emph{compatible} with its parent state~$\pst_x$.
Note that incompatibilities can only arise because of converse programs.

If~$x$ is compatible with~$\pst_x$, all is well,
so we determine its status via the corresponding sub-procedure.
Else we cannot connect~$\pst_x$ to a state
with~$\fset_x$ assigned to it in the putative model as explained in the overview,
and, thus, we can close~$x$.
That does not, however, mean that~$\pst_x$ is unsatisfiable;
maybe it is just missing some formulae.
We cannot extend~$\pst_x$ directly
as this may have side-effects elsewhere;
but to tell~$\pst_x$ what went wrong, we remember~$\fset_\alt$.
The meaning is that if we create an alternative node for~$\pst_x$
by adding the formulae in~$\fset_\alt$,
we might be more successful in building an interpretation.

\begin{procedure}[t]
  \caption{\detstsb{}($x$) for determining the status of an $\alpha$- or a $\beta$-node}
  \dontprintsemicolon
  \SetVline
  \Input{an $\alpha$- or a $\beta$-node~$x \in V$ with~$\unexp \not= \sts_x \not= \closed(\cdot)$}
  \Output{the new status of~$x$}
  \BlankLine
  let~$y_1, \dotsc, y_k \in V$ be all children of~$x$ \;
  $\alt := \bigcup_{i=1}^k \alt_{y_i}$ \;
  \lIf{$\forall i \in \{ 1, \dotsc, k \}.\: \sts_{y_i} = \closed(\cdot)$}{\Return $\closed(\alt)$} \;
  \Else(\bcmt{} at least one child is not closed \ecmt{}){
    $\prs := \eprs$ \;
    \ForEach{$\varphi \in \fset_x \cap \fmlev$}{
      \lFor{$i \longleftarrow 1$ \KwTo $k$}{$\prsval_i :=$ \detstscint($x, y_i, \varphi$)} \;
      $\prsval :=$ \lIf{$\exists i \in \{ 1, \dotsc, k \}.\: \prsval_i = \bot$}{$\bot$} \lElse{$\bigcup_{i=1}^k \prsval_i$} \;
      $\prs := \updfkt{\prs}{\varphi}{\prsval}$ \;
    }
    $\prs' :=$ \filter($x, \prs$) \;
    \Return $\open(\prs', \alt)$ \;  
  }
\end{procedure}

\noindent\prcdr{}~\detstsb{}($x$)
computes the status of an $\alpha$- or a $\beta$-node~$x \in V$.
For this task, an $\alpha$-node can be seen as a $\beta$-node with exactly one child.
The set of alternative sets of~$x$
is the union of the sets of alternative sets of all children.
If all children of~$x$ are closed then~$x$ must also be closed.
Otherwise we compute the set of potential rescuers
for each eventuality~$\varphi$ in~$\fset_x$ as follows.
For each child~$y_i$ of~$x$
we determine the potential rescuers of~$\varphi$ which result from following~$y_i$
by invoking \detstscint{}.
If the set of potential rescuers corresponding to some~$y_i$ is~$\bot$
then~$\varphi$ can currently be fulfilled via~$y_i$ and~$\prs_x(\varphi)$ is set to~$\bot$.
Else~$\varphi$ cannot currently be fulfilled in~$G$,
but each child returned a set of potential rescuers,
and the set of potential rescuers for~$\varphi$ is their union.
Finally, we deal with potential rescuers in~$\prs$
of the form~$(x, \chi)$ for some~$\chi \in \fmlev$ by calling \filter{}.

\begin{procedure}[t]
  \caption{\detstss{}($x$) for determining the status of a state}
  \dontprintsemicolon
  \SetVline
  \Input{a state~$x \in V$ with~$\unexp \not= \sts_x \not= \closed(\cdot)$}
  \Output{the new status of~$x$}
  \BlankLine
  let~$\pea{l_1}{\varphi_1}, \dotsc , \pea{l_k}{\varphi_k}$ be all of the \feal-formulae in~$\fset_x$ \;
  \lFor{$i \longleftarrow 1$ \KwTo $k$}{$y_i := \chld(x, \pea{l_i}{\varphi_i})$} \;
  \lIf{$\exists i \in \{ 1, \dotsc, k \}.\: \sts_{y_i} = \closed(\alt)$}{\Return $\closed(\alt)$} \;
  \Else(\bcmt{} no child is closed \ecmt{}){
    $\alt := \bigcup_{i=1}^k \alt_{y_i}$ \;
    $\prs := \eprs$ \;
    \For{$i \longleftarrow 1$ \KwTo $k$}{
      \If{$\varphi_i \in \fmlev$}{
        $\prsval :=$ \detstscint($x, y_i, \varphi_i$) \;
        $\prs := \updfkt{\prs}{\pea{l_i}{\varphi_i}}{\prsval}$ \;
      }
    }
    $\prs' :=$ \filter($x, \prs$) \;
    \Return $\open(\prs', \alt)$ \;  
  }
\end{procedure}

\noindent\prcdr{}~\detstss{}($x$) computes the status of a state~$x \in V$.
We obtain the successors for all \feal-formulae in~$\fset_x$.
If any successor is closed then~$x$ is closed with the same set of alternative sets.
Else the set of alternative sets of~$x$
is the union of the sets of alternative sets of all children
and we compute the potential rescuers
for each eventuality~$\pea{l_i}{\varphi_i}$ in~$\fset_x$
by invoking \detstscint{}.
Finally, we deal with potential rescuers in~$\prs$
of the form~$(x, \chi)$ for some~$\chi \in \fmlev$ by calling \filter{}.
Note that we do not consider eventualities which are not \feal-formulae.
The intuitive reason is that the potential rescuers of such eventualities
are determined by following the annotation chain (see below).
However, different special nodes which have the same set, and hence all link to~$x$,
might have different annotations.
Hence we cannot (and do not need to) fix the potential rescuer sets
for eventualities in~$x$ which are not \feal-formulae.

\begin{procedure}[t]
  \caption{\detstsd($x$) for determining the status of a special node}
  \dontprintsemicolon
  \SetVline
  \Input{a special node~$x \in V$ with~$\unexp \not= \sts_x \not= \closed(\cdot)$}
  \Output{the new status of~$x$}
  \BlankLine
  $y_0 := \chld(x, \mrk)$ \;
  let~$\fset_1, \dotsc , \fset_{j}$ be all the sets in the set~$\alt_{y_0}$ \;
  \For{$i \longleftarrow 1$ \KwTo $j$}{
    $y_i := \chld(x, \fset_i)$ \;
    \If(\bcmt{} child does not exist \ecmt{}){$y_i = \bot$}{
      $y_i := \create(\fset_x \cup \fset_i, \ann_x, \pst_x, \pap_x, \bot, \unexp)$ \;
      insert~$y_i$, and an edge from~$x$ to~$y_i$ labelled with~$\fset_i$, into~$G$ \;
    }
  }
  let~$y_{j+1}, \dotsc, y_k$ be all the remaining children of~$x$ \;
  $\alt := \bigcup_{i=1}^k \alt_{y_i}$ \;
  \lIf{$\forall i \in \{ 0, \dotsc, k \}.\: \sts_{y_i} = \closed(\cdot)$}{\Return $\closed(\alt)$} \;
  \Else(\bcmt{} at least one child is not closed \ecmt{}){
    $\prs := \eprs$ \;
    \ForEach{$\varphi \in \fset_x \cap \fmlev$}{
      $\varphi' := \annchn(x, \varphi)$ \;
      \If{$\varphi' \in \fmlev$}{
        \lFor{$i \longleftarrow 0$ \KwTo $k$}{$\prsval_i :=$ \detstscint($x, y_i, \varphi'$)} \;
        $\prsval :=$ \lIf{$\exists i \in \{ 0, \dotsc, k \}.\: \prsval_i = \bot$}{$\bot$} \lElse{$\bigcup_{i=0}^k \prsval_i$} \;
        $\prs := \updfkt{\prs}{\varphi}{\prsval}$ \;
      }
    }
    $\prs' :=$ \filter($x, \prs$) \;
    \Return $\open(\prs', \alt)$ \;  
  }
\end{procedure}

\noindent\prcdr{}~\detstsd($x$)
computes the status of a special node~$x \in V$.
First, we retrieve the state~$y_0$ corresponding to~$x$,
namely the unique $\mrk$-child of~$x$.
For all alternative sets~$\fset_i$ of~$y_0$ we do the following.
If there does not exist a child of~$x$
such that the corresponding edge is labelled with~$\fset_i$,
we create a new node~$y_i$
whose associated set is the result of adding the formulae in~$\fset_i$ to~$\fset_x$.
The annotations, the parent state, and the parent program of~$y_i$ are inherited from~$x$.
We label the new edge from~$x$ to~$y_i$ with~$\fset_i$.
In other words we unpack the information
stored in the alternative sets in~$\alt_{y_0}$ into actual nodes
which are all children of~$x$.
Note that each~$\fset_i \neq \emptyset$ by construction in \detsts{}.
Some children of~$x$ may not be referenced from~$\alt_{y_0}$,
but we consider them anyway.

\begin{procedure}[p]
  \caption{\detstscint{}($x, y, \varphi$) for passing a $\prs$-entry of a child to a parent}
  \dontprintsemicolon
  \SetVline
  \Input{two nodes~$x, y \in V$ and a formula~$\varphi \in \fset_y \cap \fmlev$}
  \Output{$\bot$ or a set of node-formula pairs}
  \Remark{if \detstscint{}($x, y, \varphi$) has been invoked before
    with exactly the same arguments and
    \emph{under the same invocation of \mbox{\rm \detstsb{}}, \mbox{\rm \detstss{}}
    or \mbox{\rm \detstsd{}}},
    the procedure is not executed a second time
    but returns the cached result of the first invocation.
    We do not model this behaviour explicitly in the pseudocode.}
  \BlankLine
  \lIf{$\sts_y = \closed(\cdot)$}{\Return $\emptyset$} \;
  \lElseIf{$\sts_y = \unexp \xor \sts_y = \undef \xor \xnot y \nords x$}{\Return $\{ (y, \varphi) \}$} \;
  \Else(\bcmt{} \mbox{$\sts_y = \open(\cdot,\cdot) \xand y \nords x$} \ecmt{}){
    \lIf{$\prs_y(\varphi) = \bot$}{\Return $\bot$} \;
    \Else(\bcmt{} \mbox{$\prs_y(\varphi)$} is defined \ecmt{}){
      let~$(z_1, \varphi_1), \dotsc, (z_k, \varphi_k)$ be all of the pairs in~$\prs_y(\varphi)$\;
      \lFor{$i \longleftarrow 1$ \KwTo $k$}{$\prsval_i :=$ \detstscint($x, z_i, \varphi_i$)} \;
      \lIf{$\exists j \in \{ 1, \dotsc, k \}.\: \prsval_j = \bot$}{\Return $\bot$} \lElse{\Return $\bigcup_{i=1}^k \prsval_i$} \;
    }
  }
\end{procedure}

The set of alternative sets of~$x$
is the union of the sets of alternative sets of all children;
with the exception of~$y_0$
since the alternative sets of~$y_0$ are not related to~$\pst_x$
but affect~$x$ directly as we have seen.
If all children of~$x$ are closed then~$x$ must also be closed.
Otherwise we compute the set of potential rescuers
for each eventuality~$\varphi$ in~$\fset_x$ as follows.

First, we determine~$\varphi' := \annchn(x, \varphi)$.
Note that~$\varphi'$ is defined
because the special node~$x$ cannot contain an ``at a world'' cycle by definition.
If~$\varphi'$ is not an eventuality
then~$\varphi'$ is fulfilled in~$x$ and~$\prs(\varphi)$ remains~$\bot$.
If~$\varphi'$ is an eventuality, it must be a \feal-formula as~$x$ is a special node.
We use~$\varphi'$ instead of~$\varphi$
since only \feal-formula have a meaningful interpretation in~$\prs_{y_0}$ (see above).
For each child~$y_i$ of~$x$
we determine the potential rescuers of~$\varphi'$ by invoking \detstscint{}.
If the set of potential rescuers corresponding to some~$y_i$ is~$\bot$
then~$\varphi'$ can currently be fulfilled via~$y_i$ and so~$\prs_x(\varphi)$ is set to~$\bot$.
Otherwise~$\varphi'$ cannot currently be fulfilled in~$G$,
but each child returned a set of potential rescuers,
and the set of potential rescuers for~$\varphi$ is their union.
Finally, we deal with potential rescuers in~$\prs$
of the form~$(x, \chi)$ for some~$\chi \in \fmlev$ by calling \filter{}.

\noindent\prcdr{}~\detstscint{}($x, y, \varphi$)
determines whether an eventuality~$\psi \in \fset_x$,
which is not passed as an argument,
can be fulfilled via~$y$
such that~$\varphi$ is part of the corresponding fulfilling path;
or else which potential rescuers~$\psi$ can reach via~$y$ and~$\varphi$.
If~$y$ is closed, it cannot help to fulfil~$\psi$
as indicated by the empty set.
If~$x$ is undefined or did not become defined before~$x$
then~$(y, \varphi)$ itself is a potential rescuer of~$x$.
Else, if~$\varphi$ can be fulfilled, i.e.~$\prs_y(\varphi) = \bot$,
then~$\psi$ can be fulfilled too, so we return~$\bot$.
Otherwise we invoke the procedure recursively
on all potential rescuers in~$\prs_y(\varphi)$.
If at least one of these invocations returns~$\bot$
then~$\psi$ can be fulfilled via~$y$ and~$\varphi$
and the corresponding rescuer in~$\prs_y(\varphi)$.
If all invocations return a set of potential rescuers,
the set of potential rescuers for~$\psi$ is their union.
The recursion is well-defined
because if~$(z_i, \varphi_i) \in \prs_y(\varphi)$
then either~$z_i$ is still undefined or~$z_i$ became defined later than~$y$.

Each invocation of \detstscint{} can be uniquely assigned
to the invocation of \detstsb{}, \detstss{}, or \detstsd{}
which (possibly indirectly) invoked it.
To meet our complexity bound,
we require that under the same invocation of \detstsb{}, \detstss{}, or \detstsd{},
the procedure \detstscint{} is only executed at most once for each argument triple.
Instead of executing it a second time with the same arguments,
it uses the cached result of the first invocation.
Since \detstscint{} does not modify the graph,
the second invocation would return the same result as the first one.
An easy implementation of the cache is
to store the result of \detstscint{}($x, y, \varphi$) in the node~$y$
together with~$\varphi$ and a unique id number
for each invocation of \detstsb{}, \detstss{}, or \detstsd{}.

\begin{procedure}[t]
  \caption{\filter{}($x, \prs$) for handling self-loops in~$\prs$ chains in~$G$}
  \dontprintsemicolon
  \SetVline
  \Input{a node~$x \in V$ and a function~$\prs: \fmlev \to \opt{(\pows{V \times \fmlev})}$}
  \Output{$\prs$ where self-loops have been handled}
  \BlankLine
  $\prs' := \eprs$ \;
  \ForEach{$\varphi \in \fset_x \cap \fmlev$ such that $\prs(\varphi) \not= \bot$}{
    $\Delta := \{ \varphi \} \cup \prsreach(\prs, x, \varphi)$ \;
    \If{$\xnot \exists \chi \in \Delta.\: \prs(\chi) = \bot$}{
      $\prsval := \bigcup_{\chi \in \Delta} \big\{ (z, \psi) \in \prs(\chi) \mid z \not= x \big\}$\;
      $\prs' := \updfkt{\prs'}{\varphi}{\prsval}$ \;
    }
  }
  \Return $\prs'$ \;
\end{procedure}

\noindent\prcdr{}~\filter{}($x, \prs$) deals with the potential rescuers
for each eventuality of a node~$x$
which are of the form~$(x, \psi)$ for some~$\psi \in \fmlev$.
The second argument of \filter{} is a provisional~$\prs$ for~$x$.
If an eventuality~$\varphi \in \fset_x$ is currently fulfillable in~$G$
there is nothing to be done,
so let~$(x, \psi) \in \prs(\varphi)$.
If~$\psi = \varphi$ then~$(x, \varphi)$ cannot be a potential rescuer for~$\varphi$ in~$x$
and should not appear in~$\prs(\varphi)$.
But what about potential rescuers of the form~$(x, \psi)$ with~$\psi \not= \varphi$?
Since we want the nodes in the potential rescuers
to become defined later than~$x$,
we cannot keep~$(x, \psi)$ in~$\prs(\varphi)$;
but we cannot just ignore the pair either.

Intuitively~$(x, \psi) \in \prs(\varphi)$ means
that~$\varphi \in \fset_x$ can ``reach''~$\psi \in \fset_x$ by following a loop in~$G$
which starts at~$x$ and returns to~$x$ itself.
Thus if~$\psi$ can be fulfilled in~$G$, so can~$\varphi$;
and all potential rescuers of~$\psi$ are also potential rescuers of~$\varphi$.
The function~$\prsreach(\prs, x, \varphi)$ computes all eventualities in~$x$
which are ``reachable'' from~$\varphi$ in the sense above,
where transitivity is taken into account.
That is, it detects all self-loops from~$x$ to itself
which are relevant for fulfilling~$\varphi$.
We add~$\varphi$ as it is not in~$\prsreach(\prs, x, \varphi)$.
If any of these eventualities is fulfilled in~$G$
then~$\varphi$ can be fulfilled and is consequently undefined in the resulting~$\prs'$.
Otherwise we take all their potential rescuers
whose nodes are not~$x$.

 \begin{theorem}[Soundness, Completeness and Complexity]
  Let~$\phi \in \fml$ be a formula in negation normal form of size~$n$.
  The procedure $\issat(\phi)$ terminates, runs in \exptime{} in~$n$,
  and~$\phi$ is satisfiable iff $\issat(\phi)$ returns true.
\end{theorem}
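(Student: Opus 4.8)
The plan is to establish the three claims --- termination, the \exptime{} complexity bound, and correctness (soundness plus completeness) --- in that order, since each builds on the structural invariants secured by the previous one.

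\medskip

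\noindent\textbf{Termination and a bound on the size of $G$.}
First I would prove that the graph $G$ can only grow and that its size is exponentially bounded in~$n$. The key is to bound the set of formulae that can ever appear in a node. Define a Fischer--Ladner-style closure $\pcl(\phi)$ of the input and show that every $\fset_x$ is a subset of $\pcl(\phi)$ --- this requires checking each branch of \expand{}: $\alpha/\beta$-decompositions stay within the closure by Table~\ref{tabd:alphabeta}, state-successors take $\{\xi\}\cup\Delta$ which is again a subset, and the alternative sets $\fset_\alt$ produced in \detsts{} consist of formulae $\psi$ with $\paa{\pnnp{\pap_x}}{\psi}\in\fset_x\subseteq\pcl(\phi)$, hence $\psi\in\pcl(\phi)$. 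Thus $|\pcl(\phi)| = \bigo(n)$ and there are at most $2^{\bigo(n)}$ distinct formula sets. By Global State Caching there is at most one state per formula set, so at most $2^{\bigo(n)}$ states. Each state spawns at most $\bigo(n)$ saturation phases, and within a phase each $\alpha$-step strictly enlarges $\fset_y$ or strictly enlarges the domain of $\ann_y$ (both bounded), so each saturation phase is a finite tree of depth $\bigo(n)$ and branching~$2$, hence size $2^{\bigo(n)}$; special nodes add at most one child per alternative set, i.e.\ at most $2^{\bigo(n)}$ more. So $|V|, |E| = 2^{\bigo(n)}$ and no node is ever deleted. For termination of the rule loop I would set up a termination measure: Rule~1 strictly decreases the number of $\unexp$ nodes (after possibly adding finitely many new nodes, but the node count is capped); Rule~2 strictly decreases the number of $\undef$ nodes; Rules~3 and~4 only ever change a status ``towards'' $\closed$ --- specifically, once a node is $\closed$ it stays $\closed$, and I would show the pair (number of non-closed defined nodes, something measuring ``staleness'') decreases lexicographically, using that Rule~3 fires only when a recomputation genuinely changes the stored value and that $\detsts$ is monotone in a suitable sense. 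Combining, only finitely many rule applications are possible.

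\medskip

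\noindent\textbf{The \exptime{} bound.}
Given the size bounds above, I would argue that each individual rule application costs at most $2^{\bigo(n)}$ time. The only non-obvious case is \detsts{} and its subprocedures, because of the recursive calls in \detstscint{}. Here the caching remark is essential: under a single top-level invocation of \detstsb{}/\detstss{}/\detstsd{}, each triple $(x,y,\varphi)$ with $y\in V$, $\varphi\in\pcl(\phi)$ is evaluated at most once, so at most $|V|\cdot|\pcl(\phi)| = 2^{\bigo(n)}$ executions occur, each doing polynomial work plus recursive descent that is charged to fresh triples. Likewise \filter{} and $\prsreach$ compute a transitive closure over $\pcl(\phi)\times(V\times\pcl(\phi))$, again $2^{\bigo(n)}$. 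Multiplying the $2^{\bigo(n)}$ bound on the number of rule applications by the $2^{\bigo(n)}$ cost per application gives the overall \exptime{} bound. I would also note the strategy of Section~\ref{secd:strategy} is what makes ``check whether Rule~3 is applicable'' cheap, but for the crude \exptime{} bound a naive recomputation suffices.

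\medskip

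\noindent\textbf{Correctness.}
This is the main obstacle and splits into soundness (if $\issat(\phi)$ returns true, $\phi$ is satisfiable) and completeness (if $\phi$ is satisfiable, it returns true). For \emph{completeness} I would take a model $M,w\models\phi$, run the algorithm to its (terminating) final graph, and prove by induction on $\tms$ that no node whose formula set is realisable in $M$ ever gets status $\closed$ --- in particular $\rt$ cannot be closed, since $\fset_\rt = \{\pea{d}{\phi}\}$ is realised at $w$ in $M$ extended with a $d$-edge. The delicate points are: a realisable state cannot be closed because each of its \feal-successors is realised at the corresponding $M$-successor and hence (by induction) open; a realisable special node is compatible with its parent state because the $\paa{\pnnp{\pap_x}}{\psi}$ constraints are exactly the converse constraints satisfied in $M$; and, crucially, Rule~4 never closes a realisable open node --- here one shows that an eventuality true in $M$ has, at every stage, either a fulfilling path already present or a genuine potential rescuer $(z,e_z)$ pointing ``forward'', so $\prs_x(\varphi)\neq\emptyset$. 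This last part is where the interaction of the PDL eventuality-tracking machinery with converse-induced alternative nodes bites, and I expect it to need the full strength of the $\prsreach$/\filter{}/\annchn{} bookkeeping: one must verify that following $\ann_x$-chains and $\prs$-chains in $G$ mirrors unfolding the eventuality along a fulfilling path in $M$. For \emph{soundness} I would, conversely, assume $\sts_\rt = \open(\cdot,\cdot)$ in the final graph and build a model. The construction picks, for each open state and each eventuality in it, a fulfilling path witnessed by the $\prs_x(\varphi)\neq\emptyset$ information (which, because all rules are inapplicable at the end, is guaranteed to lead --- via $\nords$-decreasing chains that must bottom out at a node with $\prs(\cdot)=\bot$, i.e.\ an actual fulfilment --- to a genuine fulfilment), threads these together respecting the converse/compatibility constraints so that $\paa{\pcv{l}}{\psi}$ obligations are met at the parent state, and then checks the truth lemma: every formula in $\fset_x$ holds at the world built from $x$, by induction on formula structure with the eventuality case handled by the chosen fulfilling paths. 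The heart of the soundness argument is showing that the $\nords$ time-stamp discipline together with the ``no rule applicable'' fixpoint guarantees that the potential-rescuer graph has no ``bad cycles'' --- every eventuality that survives to the end really is fulfillable --- and this is precisely the lemma whose proof the paper defers to the appendix.
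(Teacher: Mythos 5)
Your proposal is correct and follows essentially the same route as the paper: an exponential bound on the number of nodes plus a counting argument for Rule~3 gives termination and the \exptime{} bound; one direction is proved by inducting on the order in which nodes become closed to show that no ``realisable'' node is ever closed (Rule~4 being the hard case, handled by chasing a finite model chain through the graph); and the other direction builds a Hintikka structure from the open subgraph, extracting fulfilling chains from the $\prs$/$\ann$ bookkeeping. The one point where your sketch is thinner than the paper's argument is the claimed ``monotonicity'' of \detsts{} used to bound Rule~3 applications: the paper must show, by induction on the definedness order~$\nords$, that any recomputation which removes a potential rescuer or turns a $\bot$ entry into a defined set forces some node to have been closed in the interim, which is exactly the content your lexicographic measure would need to carry.
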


\section{Implementation, Optimisations, and Strategy}
\label{secd:strategy}

It should be fairly straightforward to implement our algorithm.
It remains to show
an  efficient way to find nodes which are not up-to-date.
It is not too hard to see 
that the status of a node~$x$ can become outdated
only if its children change their status or 
\detstscint($x, y, \cdot$) was invoked when~$x$'s previous status was determined
and~$y$ now changes its status.
If we keep track of nodes of the second kind by inserting additional ``update''-edges
as described in \cite{gore-widmann-pdl-cade09},
we can use a queue for all nodes that might need updating.
When the status of a node is modified,
we queue all parents and all nodes linked by ``update''-edges.

We have omitted several refinements from our description for clarity.
The most important is that if a state~$s$ is closed,
all non-states which have~$s$ as a parent state are ignorable
since their status cannot influence any other node~$t$
unless~$t$ also has~$s$ as a parent state.
Moreover, if every special node parent~$x$ of a state~$s'$ is
incompatible or itself has a closed parent state,
then~$s'$ and the nodes having~$s'$ as parent state are ignorable.
This applies transitively,
but if~$s'$ gets a new parent whose parent state is not closed
then~$s'$ becomes ``active'' again.



Another issue is which rule to choose if several are applicable.
As we have seen, it is advantageous to close nodes as early as possible.
Apart from immediate contradictions,
we have Rule~4 which closes a node because it contains an unfulfillable eventuality.
If we can apply Rule~4 early while the graph is still small,
we might prevent big parts of the graph being built needlessly later.
Trying to apply Rule~4 has several consequences on the strategy of how to apply rules.

First, it is important to keep all nodes up-to-date
since Rule~4 is not applicable otherwise.
Second, it is preferable that a node~$x$ cannot reach open nodes
which became defined (or will be defined) after~$x$ did.
Hence, we should try to use Rule~2 on a node only if all children are already defined.

\section{An Example}

To demonstrate how the algorithm works,
we invoke it on the satisfiable toy formula~$\pea{a}{\phi}$ where~$\phi := \pea{\prp{a}}{\paa{\pcv{a}}{p}}$.
To save space, Fig.~\ref{figd:ex1} only shows the core subgraph of the tableau.
Remember that the order of rule applications is not fixed
but the example will follow some of the guidelines given in Section~\ref{secd:strategy}.
\begin{figure}[p]
  \centering
  \begin{math}
    \xymatrix{
      *++[F]{
        \begin{tabular}{c}
          (1) \hfill\phantom{M} state \\
          $\{\: \phi, \ \pea{a}{\phi} \:\}$ \\
          $\bot, \bot$ \hfill $5$
        \end{tabular}
      }
      \ar[r]^-{\pea{a}{\phi}}
      &
      *++[F]{
        \begin{tabular}{c}
          (2) \hfill\phantom{M} $\beta$-node \\
          $\{\: \phi \:\}$ \\
          $1, a$ \hfill $4$
        \end{tabular}
      }
      \ar[r]
      \ar[d]
      &
      *++[F]{
        \begin{tabular}{c}
          (3) \hfill\phantom{I} special node \\
          $\{\: \phi \pzz \paa{\pcv{a}}{p} \:\}$ \\
          $1, a$ \hfill $2$
        \end{tabular}
      }
      \ar[d]_-{\mrk}
      \\
      *++[F]{
        \begin{tabular}{c}
          (6) \hfill\phantom{I} special node \\
          $\{\: \phi \pzz \pea{a}{\phi}, \ p \:\}$ \\
          $1, a$ \hfill $9$
        \end{tabular}
      }
      \ar[d]_-{\mrk}
      &
      *++[F]{
        \begin{tabular}{c}
          (4) \hfill\phantom{I} special node \\
          $\{\: \phi \pzz \pea{a}{\phi} \:\}$ \\
          $1, a$ \hfill $3$
        \end{tabular}
      }
      \ar[lu]_-{\mrk}
      \ar[l]_-{\{p\}} 
      &
      *++[F]{
        \begin{tabular}{c}
          (5) \hfill\phantom{M} state \\
          $\{\: \phi, \ \paa{\pcv{a}}{p} \:\}$ \\
          $\bot, \bot$ \hfill $1$
        \end{tabular}
      }
      \\
      *++[F]{
        \begin{tabular}{c}
          (7) \hfill\phantom{M} state \\
          $\{\: \phi, \ \pea{a}{\phi}, \ p \:\}$ \\
          $\bot, \bot$ \hfill $8$
        \end{tabular}
      }
      \ar[r]^-{\pea{a}{\phi}}
      &
      *++[F]{
        \begin{tabular}{c}
          (8) \hfill\phantom{M} $\beta$-node \\
          $\{\: \phi \:\}$ \\
          $7, a$ \hfill $7$
        \end{tabular}
      }
      \ar[r]
      \ar[ld]
      &
      *++[F]{
        \begin{tabular}{c}
          (9) \hfill\phantom{I} special node \\
          $\{\: \phi \pzz \paa{\pcv{a}}{p} \:\}$ \\
          $7, a$ \hfill $6$
        \end{tabular}
      }
      \ar[u]^-{\mrk}
      \\
      *++[F]{
        \begin{tabular}{c}
          (10) \hfill\phantom{I} special node \\
          $\{\: \phi \pzz \pea{a}{\phi} \:\}$ \\
          $7, a$ \hfill $10$
        \end{tabular}
      }
      \ar`l/20pt[uuu] `[uuu]_-{\mrk} [uuu]
      \ar[r]^-{\{p\}} 
      &
      *++[F]{
        \begin{tabular}{c}
          (11) \hfill\phantom{I} special node \\
          $\{\: \phi \pzz \pea{a}{\phi}, \ p \:\}$ \\
          $7, a$ \hfill $11$
        \end{tabular}
      }
      \ar`u/10pt[l] `[ul]_<<{\mrk} [lu]
    }
  \end{math}
  \caption[]{An example: The graph~$G$ just before setting the status of node~(2)}
  \label{figd:ex1}
\end{figure}
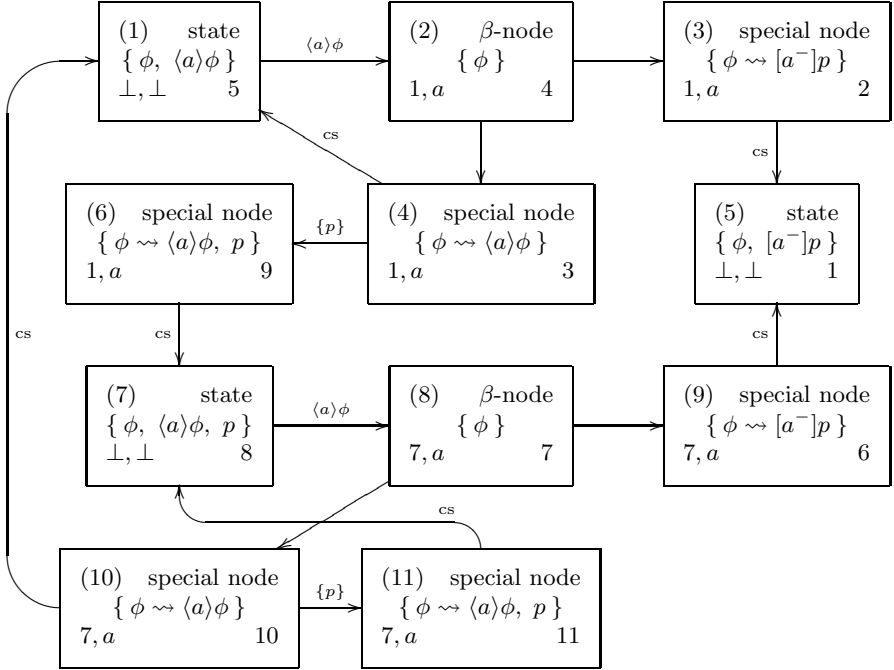

The nodes in Fig.~\ref{figd:ex1} are numbered in order of creation.
The annotation~$\ann$ is given using~``$\pzz$'' in~$\fset$.
For example, in node~(3), we have~$\fset_3 = \{\: \phi,\ \paa{\pcv{a}}{p} \:\}$,
and~$\ann_3$ maps the eventuality~$\phi$ to~$\paa{\pcv{a}}{p}$
and is undefined elsewhere.
The bottom line of a node contains the parent state and the parent program on the left,
and the time stamp on the right.
We do not show the status of a node 
since it changes during the algorithm, but explain it in the text.
If we write~$\sts_x = \open(\Lambda,\cdot)$ where~$\Lambda \subseteq V \times \fmlev$,
we mean that~$\prs_x$ maps all eventualities in~$\fset_x$,
with the exception of non-\feal-formulae if~$x$ is a state, to~$\Lambda$
and is undefined elsewhere.

We only consider the core subgraph of~$\phi$ and start by expanding node~(1)
which creates~(2).
Then we expand~(2) and create~(3) and~(4) which are both special nodes.
Next we expand~(3) and create the state~(5).
Expanding~(5) creates no new nodes since~$\fset_5$ contains no \feal-formula.
Now we define~(5) and then~(3).
This results in setting~$\sts_5 := \open(\eprs, \emptyset)$ according to \detstss{},
and~$\sts_3 := \closed(\{p\})$ since~(3) is not compatible with its parent state~(1).
Expanding~(4) inserts the edge from~(4) to~(1)
and defining~(4) sets~$\sts_4 := \open(\{(1, \pea{a}{\phi})\}, \emptyset)$ according to \detstsd{}.
Note that~(6) does not exist yet.
Next we define~(2) and then~(1)
which results in setting~$\sts_2 := \open(\{(1, \pea{a}{\phi})\}, \{p\})$ according to \detstsb{}
and~$\sts_1 := \open(\emptyset, \{p\})$ thanks to \filter{}.

Note that~$\pea{a}{\phi} \in \fset_1$ has an empty set of potential rescuers.
In PDL, we could thus close~(1),
but converse programs complicate matters for CPDL
as reflected by the fact that Rule~4 is not applicable for~(1)
because~(4) is not up-to-date.
Updating~(4) creates~(6) and sets~$\sts_4 := \open(\{(1, \pea{a}{\phi}), (6, \pea{a}{\phi})\}, \emptyset)$.
Updating~(2) and then~(1) sets~$\sts_2 := \open(\{(1, \pea{a}{\phi}), (6, \pea{a}{\phi})\}, \{p\})$
and~$\sts_1 := \open(\{(6, \pea{a}{\phi})\}, \{p\})$.
Now all nodes are up-to-date,
but Rule~4 is not applicable for~(1)
because the set of potential rescuers for~$\phi$ is no longer empty.

Next we expand~(6), which creates~(7),
then~(7), which creates~(8),
then~(8), which creates~(9) and~(10),
and finally~(9), which creates no new nodes.
Node~(9) is similar to~(3), but unlike~(3),
it is compatible with its parent state~(7)
which results in~$\sts_9 := \open(\bot, \emptyset)$.
Using our strategy from the last section,
we would now expand~(10)
so that~(8) can become defined after both its children became defined.
Since~(9) fulfils all its eventualities,
we choose to define~(8) instead and set~$\sts_8 := \open(\bot, \emptyset)$.
Next we define~(7) and then~(6)
which sets~$\sts_7 := \open(\bot, \emptyset)$ and~$\sts_6 := \open(\bot, \emptyset)$.
The status of~(4) is not affected since~(6) was defined after~(4),
giving ``(6) $\not\nords$ (4)'' in \detstscint(4, 6, $\pea{a}{\phi}$).

We expand~(10) which inserts the edge from~(10) to~(1).
Then we define~(10) which creates~(11)
and sets~$\sts_{10} := \open(\bot, \emptyset)$.
Note that the invocation of \detstscint(10, 1, $\pea{a}{\phi}$)
in the invocation \detstsd(10)
leads to the recursive invocation \detstscint(10, 6, $\pea{a}{\phi}$).
Expanding and defining~(11) yields~$\sts_{11} := \open(\bot, \emptyset)$.
Finally, no rule is applicable in the shown subgraph.

\bibliographystyle{splncs}
\bibliography{cpdl}

\clearpage

\appendix

\section{Soundness, Completeness, and Complexity}

We start by some more definitions.

\begin{definition}
  We define the size~$|\cdot|$ on formulae and programs
  in negation normal form inductively as follows:
  \begin{itemize}
  \item $|p| := |\pnot{p}| := |a| := |\pcv{a}| := 1$
  \item $|\pand{\varphi}{\psi}| := |\por{\varphi}{\psi}| := 1 + |\varphi| + |\psi|$
  \item $|\pea{\gamma}{\varphi}| := |\paa{\gamma}{\varphi}| := |\gamma| + |\varphi|$
  \item $|\psp{\gamma}{\delta}| := |\pup{\gamma}{\delta}| := 1 + |\gamma| + |\delta|$
  \item $|\pip{\varphi}| := 1 + |\varphi|$
  \item $|\prp{\gamma}| := 1 + |\gamma| \enspace.$
  \end{itemize}
\end{definition}

\begin{definition}
  Let~$\varphi \in \fml$ be a formula in negation normal form.
  The \emph{closure}~$\pcl(\varphi)$ is the least set of formulae such that:
  \begin{itemize}
  \item $\varphi \in \pcl(\phi)$
  \item $\pea{l}{\psi} \in \pcl(\varphi) \xor \paa{l}{\psi} \in \pcl(\varphi) \ximp \psi \in \pcl(\varphi)$
  \item $\alpha \in \pcl(\varphi) \ximp \alpha_1 \in \pcl(\varphi) \xand \alpha_2 \in \pcl(\varphi)$
  \item $\beta \in \pcl(\varphi) \ximp \beta_1 \in \pcl(\varphi) \xand \beta_2 \in \pcl(\varphi) \enspace.$
  \end{itemize}
\end{definition}
It is easy to see that all formula in~$\pcl(\varphi)$ are also in negation normal form.

\begin{definition}
  A \emph{transition frame} is a pair~$(W, R)$
  where~$W$ is a non-empty set of worlds and~$R: \aprg \to W \times W$ is a function
  mapping each atomic program~$a \in \aprg$ to a binary relation~$R_a$ over~$W$.
  We extend~$R$ to~$\lprg$ by defining~$R_{\pcv{a}} := \{ (w, v) \mid (v, w) \in R_a \}$.
  A \emph{model}~$(W, R, \evv)$ is a transition frame~$(W, R)$
  and a valuation function~$\evv: \afml \to 2^W$
  mapping each propositional variable~$p \in \afml$ to a set~$\evv(p)$ of worlds
  where~$p$ is ``true''.
\end{definition}

\begin{definition}
  A formula~$\varphi \in \fml$ is \emph{satisfiable}
  iff there exists a model~$M = (W, R, \evv)$ and a world~$w \in W$ s.t.~$\psr{M, w}{\varphi}$,
  and \emph{valid} iff~$\pnot{\varphi}$ is unsatisfiable.
\end{definition}

\begin{proposition}
 \label{propd:axioms}
  In the notation of Table~\ref{tabd:alphabeta},
  the formulae of the form~$\peq{\alpha}{\pand{\alpha_1}{\alpha_2}}$ and~$\peq{\beta}{\por{\beta_1}{\beta_2}}$ are valid.
\end{proposition}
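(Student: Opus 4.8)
\textbf{Proof plan for Proposition~\ref{propd:axioms}.}
The plan is to verify each claimed equivalence directly against the semantics of CPDL, relying on the standard meaning of the program constructors. For the Boolean cases $\pand{\varphi}{\psi}$ and $\por{\varphi}{\psi}$ the equivalences $\peq{\alpha}{\pand{\alpha_1}{\alpha_2}}$ and $\peq{\beta}{\por{\beta_1}{\beta_2}}$ are immediate since $\alpha$ is literally $\pand{\alpha_1}{\alpha_2}$ and $\beta$ is literally $\por{\beta_1}{\beta_2}$. The remaining cases split according to the top-level program constructor of the modality, so I would treat each row of Table~\ref{tabd:alphabeta} in turn, using the fact that for a model $M=(W,R,\evv)$ and world $w$, the relation $R$ extends to composite programs by: $R_{\psp{\gamma}{\delta}} = R_\gamma \circ R_\delta$, $R_{\pup{\gamma}{\delta}} = R_\gamma \cup R_\delta$, $R_{\prp{\gamma}} = (R_\gamma)^*$ (reflexive-transitive closure), and $R_{\pip{\psi}} = \{(w,w) \mid \psr{M,w}{\psi}\}$.

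The concrete verifications I would carry out are as follows. For $\pea{\psp{\gamma}{\delta}}{\varphi}$: $\psr{M,w}{\pea{\psp{\gamma}{\delta}}{\varphi}}$ iff there is $v$ with $(w,v)\in R_\gamma\circ R_\delta$ and $\psr{M,v}{\varphi}$, iff there exist $u,v$ with $(w,u)\in R_\gamma$, $(u,v)\in R_\delta$, $\psr{M,v}{\varphi}$, iff $\psr{M,w}{\pea{\gamma}{\pea{\delta}{\varphi}}}$; the $\paa{}{}$-version is dual. For $\paa{\pup{\gamma}{\delta}}{\varphi}$: $\psr{M,w}{\paa{\pup{\gamma}{\delta}}{\varphi}}$ iff every $R_\gamma\cup R_\delta$-successor of $w$ satisfies $\varphi$, iff every $R_\gamma$-successor and every $R_\delta$-successor does, iff $\psr{M,w}{\pand{\paa{\gamma}{\varphi}}{\paa{\delta}{\varphi}}}$; similarly $\psr{M,w}{\pea{\pup{\gamma}{\delta}}{\varphi}}$ iff $\psr{M,w}{\por{\pea{\gamma}{\varphi}}{\pea{\delta}{\varphi}}}$. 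For $\pea{\pip{\psi}}{\varphi}$: holds at $w$ iff $(w,w)\in R_{\pip{\psi}}$ and $\psr{M,w}{\varphi}$, iff $\psr{M,w}{\psi}$ and $\psr{M,w}{\varphi}$, i.e.\ $\psr{M,w}{\pand{\varphi}{\psi}}$; dually $\psr{M,w}{\paa{\pip{\psi}}{\varphi}}$ iff ($(w,w)\notin R_{\pip{\psi}}$ or $\psr{M,w}{\varphi}$) iff $\psr{M,w}{\por{\pneg{\psi}}{\varphi}}$, using that $\pneg{\psi}$ is equivalent to $\pnot{\psi}$ by the definition of $\pnnf$.

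The only case requiring a genuine argument rather than unfolding definitions is the star, where the semantics involves the reflexive-transitive closure. For $\paa{\prp{\gamma}}{\varphi}$ I would show $\psr{M,w}{\paa{\prp{\gamma}}{\varphi}}$ iff $\psr{M,w}{\pand{\varphi}{\paa{\gamma}{\paa{\prp{\gamma}}{\varphi}}}}$ by using the identity $(R_\gamma)^* = \mathrm{Id} \cup (R_\gamma \circ (R_\gamma)^*)$: every $(R_\gamma)^*$-successor of $w$ is either $w$ itself or an $R_\gamma$-successor of some $R_\gamma$-successor-reachable world, which gives the left-to-right direction, and conversely the right-hand side forces $\varphi$ at $w$ and at every world reached by one $R_\gamma$-step followed by any number of further steps, which exhausts $(R_\gamma)^*$. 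The $\pea{}{}$-version, $\psr{M,w}{\pea{\prp{\gamma}}{\varphi}}$ iff $\psr{M,w}{\por{\varphi}{\pea{\gamma}{\pea{\prp{\gamma}}{\varphi}}}}$, follows by the same decomposition of $(R_\gamma)^*$, or simply by duality. I expect the fixpoint-unfolding identity for $(R_\gamma)^*$ to be the main (and only mildly nontrivial) obstacle; it is a routine fact about reflexive-transitive closures and needs no induction on formula structure here, since $\prp{\gamma}$ appears only at the outermost position in the relevant rows of Table~\ref{tabd:alphabeta}.
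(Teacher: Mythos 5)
Your proposal is correct and is exactly the standard argument: the paper states Proposition~\ref{propd:axioms} without proof, treating it as the routine semantic validity of the usual PDL/CPDL decomposition axioms, and your case-by-case unfolding of the relations $R_{\psp{\gamma}{\delta}}$, $R_{\pup{\gamma}{\delta}}$, $R_{\pip{\psi}}$ and the fixpoint identity $(R_\gamma)^* = \mathrm{Id} \cup (R_\gamma \circ (R_\gamma)^*)$ is precisely the verification the paper implicitly relies on. The only reading convention worth making explicit is that for the rows with an empty $\alpha_2$ (the $\psp{\gamma}{\delta}$ cases) the equivalence degenerates to $\peq{\alpha}{\alpha_1}$, which your argument already establishes.
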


\begin{definition}
  \label{defd:event}
  For a given~$\varphi \in \fml$, the (infinite) set~$\ppre(\varphi)$ is defined as:
  \[ \ppre(\varphi) := \{ \psi \in \fml \mid \exists k \in \Nat.\: \exists \gamma_1, \dotsc, \gamma_k \in \prg.\: \psi = \pea{\gamma_1}{\dotsc \pea{\gamma_k}{\varphi}} \} \enspace.\]
  The set of all \emph{eventualities} is defined as:
  \[ \fmlev := \bigcup_{\varphi \in \Delta}\ppre(\varphi) \text{ where } \Delta := \{ \pea{\prp{\gamma}}{\psi} \mid \gamma \in \prg \xand \psi \in \fml \} \enspace. \]
  For all~$\varphi, \psi \in \fml$, the binary relation~$\pzz$ on formulae is defined as:
  $\varphi \pzz \psi$ iff (exactly) one of the following conditions is true:
  \begin{itemize}
  \item $\exists \chi \in \fml.\: \exists \gamma, \delta \in \prg.\: \varphi = \pea{\psp{\gamma}{\delta}}{\chi} \xand \psi = \pea{\gamma}{\pea{\delta}{\chi}}$
  \item $\exists \chi \in \fml.\: \exists \gamma, \delta \in \prg.\: \varphi = \pea{\pup{\gamma}{\delta}}{\chi} \xand
    \big( \psi = \pea{\gamma}{\chi} \xor \psi = \pea{\delta}{\chi} \big)$
  \item $\exists \chi \in \fml.\: \exists \gamma \in \prg.\: \varphi = \pea{\prp{\gamma}}{\chi} \xand
    \big( \psi = \chi \xor \psi = \pea{\gamma}{\pea{\prp{\gamma}}{\chi}} \big)$
  \item $\exists \chi, \phi \in \fml.\: \varphi = \pea{\pip{\phi}}{\chi} \xand \psi = \chi \enspace.$
  \end{itemize}
\end{definition}

Intuitively, using Table~\ref{tabd:alphabeta},
the~``$\pzz$'' relates a $\pea{}{}$-formulae~$\alpha$ (respectively~$\beta$),
to~$\alpha_1$ (respectively~$\beta_1$ and~$\beta_2$)
while~$\ppre(\varphi)$ captures
that~$\pea{\prp{\gamma}}{\varphi}$ can be ``reduced'' to~$\pea{\gamma}{\pea{\prp{\gamma}}{\varphi}}$,
which can be reduced to~$\pea{\gamma_1}{\dotsc \pea{\gamma_k}{\pea{\prp{\gamma}}{\varphi}}}$.
Note that~$\varphi \in \ppre(\varphi)$.

\begin{definition}
  A \emph{structure}~$(W, R, L)$ $[$for~$\varphi \in \fml]$ is a
  transition frame~$(W, R)$ and a labelling function~$L: W \to 2^{\fml}$
  which maps each world~$w \in W$ to a set~$L(w)$ of formulae
  $[$and has~$\varphi \in L(v)$ for some world~$v \in W]$.
\end{definition}

\begin{definition}
  Let~$H = (W, R, L)$ be a structure, $w \in W$, and~$\varphi, \psi \in \fml$.
  A \emph{fulfilling chain} for~$(H, w, \varphi, \psi)$
  is a finite sequence~$(w_0, \psi_0), \dotsc, (w_n, \psi_n)$
  of world-formula pairs with~$n \geq 0$ such that:
  \begin{itemize}
  \item $w_0 = w$, $\psi_0 = \varphi$, $\psi_n = \psi$, and~$\psi_i \in L(w_i)$ for all~$0 \leq i \leq n$
  \item for all~$0 \leq i < n$:
    if~$\psi_i = \pea{l}{\chi}$ for some~$l \in \lprg$ and some~$\chi \in \fml$
    then~$\psi_{i+1} = \chi$ and~$\prel{w_i}{R_b}{w_{i+1}}$, else~$\psi_i \pzz \psi_{i+1}$ and~$w_i = w_{i+1}$.
  \end{itemize}
\end{definition}

\begin{definition}
  A \emph{Hintikka structure}~$H = (W, R, L)$ $[$for~$\varphi \in \fml]$
  is a structure $[$for~$\varphi]$
  which satisfies the following conditions for every~$w \in W$,
  where~$\alpha$ and~$\beta$ are formulae as defined in Table~\ref{tabd:alphabeta}:
  \begin{displaymath}
    \begin{array}{ll}
      \mathrm{H1:} & \pnot{p} \in L(w) \ximp p \not\in L(w)\\
      \mathrm{H2:} & \alpha \in L(w) \ximp \alpha_1 \in L(w) \xand \alpha_2 \in L(w)\\
      \mathrm{H3:} & \beta \in L(w) \ximp \beta_1 \in L(w) \xor \beta_2 \in L(w)\\
      \mathrm{H4:} & \pea{l}{\varphi} \in L(w) \ximp \exists v \in W.\: \prel{w}{R_b}{v} \xand \varphi \in L(v)\\
      \mathrm{H5:} & \paa{l}{\varphi} \in L(w) \ximp \forall v \in W.\: \prel{w}{R_b}{v} \ximp \varphi \in L(v)\\
      \mathrm{H6:} & \pea{\prp{\gamma}}{\varphi} \in L(w) \ximp
      \text{there exists a fulfilling chain for~$(H, w, \pea{\prp{\gamma}}{\varphi}, \varphi)$} \enspace.
    \end{array}
  \end{displaymath}
\end{definition}

H3 ``locally unwinds'' the fix-point semantics of~$\pea{\prp{\gamma}}{\varphi}$,
but does not guarantee a \emph{least} fix-point
which requires~$\varphi$ be true eventually.
H6 ``globally'' ensures all $\pea{\prp{}}{}$-formulae are fulfilled.
H2 captures the \emph{greatest} fix-point semantics of~$\paa{\prp{\gamma}}{\varphi}$.

\begin{proposition}
  \label{propd:satisfiable}
  A formula~$\varphi \in \fml$ in negation normal form is satisfiable
  iff there exists a Hintikka structure for~$\varphi$.
\end{proposition}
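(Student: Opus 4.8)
The plan is to prove both directions. For the forward direction, suppose $\phi$ is satisfiable, fix a model $M = (W,R,\evv)$ and a world $v$ with $\psr{M,v}{\phi}$, and define the structure $H := (W,R,L)$ by $L(w) := \{\psi \in \fml \mid \psr{M,w}{\psi}\}$; then $\phi \in L(v)$, so $H$ is a structure for $\phi$. I would verify H1--H6. H1 is immediate from the semantics, and H4, H5 are exactly the semantic clauses for the literal modalities. H2 and H3 follow from Proposition~\ref{propd:axioms}: since $\alpha \leftrightarrow (\alpha_1 \land \alpha_2)$ and $\beta \leftrightarrow (\beta_1 \lor \beta_2)$ are valid, truth of $\alpha$ (resp.\ $\beta$) forces both $\alpha_1,\alpha_2$ (resp.\ at least one of $\beta_1,\beta_2$) to be true, hence to lie in $L(w)$. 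The only substantial point is H6: from $\psr{M,w}{\pea{\prp{\gamma}}{\varphi}}$ the least fix-point semantics yields a finite $R_\gamma$-path to a world satisfying $\varphi$, which I would turn into a fulfilling chain by recursion on~$\gamma$, inserting a $\pzz$-step (staying at the same world) for each program decomposition of $\psp{}{}$, $\pup{}{}$, $\pip{}$ and each unfolding of $\prp{}$, and a literal transition for each atomic or converse step. Every intermediate formula is true, hence in~$L$, and the chain is finite because the underlying path is.

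For the converse, given a Hintikka structure $H=(W,R,L)$ for~$\phi$, I would define the model $M := (W,R,\evv)$ with $\evv(p) := \{w \mid p \in L(w)\}$ and prove the \emph{Truth Lemma}: for all $\psi,w$, if $\psi \in L(w)$ then $\psr{M,w}{\psi}$, by well-founded induction on the size $|\psi|$. Since $\phi \in L(v)$ for some~$v$, this gives $\psr{M,v}{\phi}$ and hence satisfiability. The propositional cases use H1 and the valuation; the $\land/\lor$ cases use H2/H3 with Proposition~\ref{propd:axioms} and the induction hypothesis on the strictly smaller $\alpha_i,\beta_i$. For a literal diamond $\pea{l}{\chi}$, H4 supplies a witness and the hypothesis on~$\chi$ finishes; for a literal box $\paa{l}{\chi}$, H5 propagates to successors and again the hypothesis on~$\chi$ finishes. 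For the composite \emph{non-star} modalities (both diamond and box over $\psp{}{}$, $\pup{}{}$, $\pip{}$) the relevant $\alpha$/$\beta$ decomposition is strictly smaller in size (e.g.\ $|\pea{\psp{\gamma}{\delta}}{\chi}| > |\pea{\gamma}{\pea{\delta}{\chi}}|$), so H2/H3, the induction hypothesis, and Proposition~\ref{propd:axioms} apply directly.

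The two genuinely hard cases are the star modalities, where the decomposition \emph{increases} size and plain size induction breaks. For the greatest fix-point $\paa{\prp{\delta}}{\chi}$ I would first establish a \emph{Box Lemma} by induction on program structure: if $\paa{\gamma}{\xi} \in L(v)$ and $(v,u) \in R_\gamma$ then $\xi \in L(u)$, universally over~$\xi$. Here the literal case is H5; the $\psp{}{}$ and $\pup{}{}$ cases use H2 and the program induction hypothesis; the test case uses H3 together with the Truth-Lemma hypothesis on the strictly smaller test formula to rule out $\pneg{\psi} \in L(v)$ against $(v,v) \in R_{\pip{\psi}}$; and the $\prp{}$ case uses a secondary induction on path length with H2 and the program hypothesis for the strictly smaller inner program. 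The Box Lemma applied to $\prp{\delta}$ gives $\chi \in L(u)$ for every $R_{\prp{\delta}}$-reachable~$u$, and the hypothesis on the smaller~$\chi$ then yields $\psr{M,u}{\chi}$, so $\psr{M,v}{\paa{\prp{\delta}}{\chi}}$. For the least fix-point $\pea{\prp{\delta}}{\chi}$ I would invoke H6 to obtain a fulfilling chain $(w_0,\psi_0),\dots,(w_n,\psi_n)$ and prove $\psr{M,w_i}{\psi_i}$ by a \emph{backward} induction on $n-i$: the base case $\psi_n = \chi$ uses the Truth-Lemma hypothesis since $|\chi| < |\pea{\prp{\delta}}{\chi}|$; a literal step lifts $\psr{M,w_{i+1}}{\chi'}$ to $\psr{M,w_i}{\pea{l}{\chi'}}$; and a $\pzz$-step lifts along a valid implication $\psi_{i+1} \Rightarrow \psi_i$, the only delicate subcase being $\pea{\pip{\psi}}{\chi} \pzz \chi$, where I additionally recover $\psr{M,w_i}{\psi}$ from H2 and the Truth-Lemma hypothesis on the smaller~$\psi$. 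Then $\psi_0 = \pea{\prp{\delta}}{\chi}$ at $w_0 = w$ gives the conclusion.

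The main obstacle is untangling the mutual dependence between the Truth Lemma and the Box Lemma in the presence of tests and of the two fix-points: the Box Lemma for a modal formula needs the Truth Lemma on its strictly smaller test subformulae, while the star cases of the Truth Lemma need the Box Lemma, and neither star decomposition is size-decreasing. I would resolve this by making formula size the single well-founded measure of the primary induction and treating the Box Lemma as a program-structural sub-induction invoked only at strictly smaller formula sizes, and by carrying the fix-point witnesses externally --- the fulfilling chain of H6 for the least fix-point and an explicit path-length count for the greatest fix-point --- rather than through the formula decomposition. As preliminaries I would record the easy facts that $|\pneg{\psi}| = |\psi|$ and that each $\pzz$-reduction corresponds to a valid one-directional implication, both used silently above.
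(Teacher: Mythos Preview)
The paper does not actually prove Proposition~\ref{propd:satisfiable}; it is stated without proof as a standard background result (the surrounding text simply says ``Proposition~\ref{propd:satisfiable} implies that we can check\dots''), so there is no paper-proof to compare against. Your proposal is therefore a genuine addition rather than a reconstruction.

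On its own merits your argument is correct and follows the standard route for PDL/CPDL Hintikka structures. The forward direction via $L(w)=\{\psi\mid \psr{M,w}{\psi}\}$ is the canonical one; your treatment of H6 is essentially what the paper packages elsewhere as Proposition~\ref{propd:chain} (existence of a model chain), so you are in good company. For the converse, your organisation---a single well-founded induction on $|\psi|$, with the Box Lemma as a nested program-structural induction that only appeals to the outer hypothesis on strictly smaller test formulae, and with the two star cases handled externally (path-length for $\paa{\prp{}}{}$, the H6 chain walked backwards for $\pea{\prp{}}{}$)---is exactly the right way to untangle the circularity, and your size observations ($|\pneg{\psi}|=|\psi|$, the non-star $\alpha/\beta$-decompositions strictly decrease size) are the checks that make it go through. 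The one place to be slightly more explicit is the test step in the backward chain walk: you correctly note that $\pea{\pip{\psi}}{\chi}\in L(w_i)$ (because fulfilling chains carry $\psi_i\in L(w_i)$), so H2 gives $\psi\in L(w_i)$, and $|\psi|$ is bounded by the size of the program part of the original eventuality, hence strictly below the outer induction parameter---this is fine, but worth spelling out since the intermediate $\psi_i$ themselves can be larger than the formula you started from.
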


Proposition~\ref{propd:satisfiable} implies
that we can check whether a formula~$\varphi$ in negation normal form is satisfiable
by systematically trying to build a Hintikka structure for~$\varphi$.
Our tableau rules are designed specifically for this purpose.

\begin{definition}
  Let~$G = (V, E)$ be a directed graph and~$x, y \in V$ two of its nodes.
  Then~$y$ is a \emph{child} of~$x$ iff~$(x, y) \in E$.
  A \emph{path}~$\pi$ in~$G$
  is a finite or infinite sequence~$x_0, x_1, x_2, \dotsc$ of nodes in~$G$
  such that~$x_{i+1}$ is a child of~$x_i$ for all~$x_i$
  except the last node if~$\pi$ is finite.
  An \emph{$x$-path}~$\pi$ is a path in~$G$ that has~$x_0 = x$.
\end{definition}

We now list some facts about the algorithm
which are needed in the subsequent proofs.
They can be verified by careful inspection of the procedures.
\renewcommand{\theenumi}{(\roman{enumi})}
\renewcommand{\labelenumi}{\theenumi}
\begin{proposition}
  \label{propd:facts}
  Let~$x, y, z \in V$ be nodes and~$\varphi, \psi \in \fmlev$.
  \begin{enumerate}
  \item\label{enumd:facts:b} If~$x$ is open and~$\prs_x(\varphi)$ is defined
    then for each~$(y, \psi) \in \prs_x(\varphi)$, we have~$x \nords y$ and~$\psi \in \fset_y$.
    Moreover, if~$y$ is a state then~$\psi$ is a \feal-formula.
  \item\label{enumd:facts:f} Let $\prs' :=$ \mbox{\rm \filter($x, \prs$)}.
    Then~$\prs(\varphi) = \prs(\psi) \ximp \prs'(\varphi) = \prs'(\psi)$.
  \item\label{enumd:facts:d} The number of consecutive non-states in a path in~$G$ is bounded.
  \item\label{enumd:facts:e} if~$x$ and~$y$ are states with~$\fset_x = \fset_y$ then~$x = y$;
  \item\label{enumd:facts:c} If~$x$ is a state
    then its parents are exactly the special nodes~$y$ with~$\fset_y = \fset_x$.
  \item\label{enumd:facts:g} If~$x$ is a special node,
    it has the state~$y$ with~$\fset_y = \fset_x$ as child.
  \item\label{enumd:facts:a} If~$y$ is a child of~$x$ and neither of them are states
    then~$\pst_x = \pst_y$ and~$\pap_x = \pap_y$ and~$\fset_x \subsetneqq \fset_y$.
  \item\label{enumd:facts:h} We have~$\alt_\rt = \emptyset$ and~$\emptyset \notin \alt_x$.
  \end{enumerate}
\end{proposition}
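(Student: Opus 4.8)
The plan is to establish the eight facts as invariants of the graph $G$ maintained throughout a run of \issat{}, proving them by induction on the number of rule applications and supplementing this with direct inspection of the individual procedures wherever a fact is purely local. I would first dispatch the structural facts, which depend only on where nodes and edges are created; then the counting fact, which rests on the structural ones together with finiteness of the closure; and finally the two facts about the stored functions $\prs$ and $\alt$, which carry the real content.

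For the purely structural facts~\ref{enumd:facts:e}--\ref{enumd:facts:a} I would track the three places that create nodes and edges: the root in \issat{}, the children built by \expand{}, and the alternative children built by \detstsd{}. Fact~\ref{enumd:facts:e} (state uniqueness) holds because the only branch that ever creates a state is the special-node case of \expand{}, which is guarded by an explicit test for an existing state with $\fset_y=\fset_x$; and the root's set $\{\pea{d}{\phi}\}$ is never reproduced, since the dummy $d$ disappears already at the root's successors. For~\ref{enumd:facts:c} and~\ref{enumd:facts:g} the key observation is that the \emph{only} edges entering a state are the $\mrk$-edges that \expand{} draws from a special node to the (freshly created or reused) state carrying the same set, and that every special node draws exactly one such edge when expanded; every other created child carries a defined $\pst$ and is hence a non-state. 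Fact~\ref{enumd:facts:a} is read off from the creation calls in the $\alpha$-, $\beta$- and alternative-child cases, each of which passes $\pst_x$ and $\pap_x$ through unchanged; strictness $\fset_x\subsetneqq\fset_y$ for an alternative child follows because \detsts{} subtracts $\fset_{\pst_x}$ when forming an alternative set, so each non-empty $\fset_i$ is disjoint from $\fset_x$, while the only residual case is an $\alpha$/$\beta$-step that reduces an already-present eventuality and so extends only the annotation. That last case is exactly why fact~\ref{enumd:facts:d} needs the lexicographic measure $(|\pcl(\phi)\setminus\fset_x|,\ \#\{\text{unannotated eventualities in }\fset_x\})$: this measure strictly decreases along every non-state edge, and since (apart from the root) every $\fset_x$ lies in the finite closure $\pcl(\phi)$ --- an easy side induction on creation, as \expand{} and \detstsd{} add only members of $\pcl(\phi)$ --- the length of any run of consecutive non-states is bounded.

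Fact~\ref{enumd:facts:f} is local to \filter{} and needs no global induction: if $\prs(\varphi)=\prs(\psi)$ then $\prsreach(\prs,x,\varphi)=\prsreach(\prs,x,\psi)$, since the reachable set is determined by its seed $\prs(\cdot)$ and the common $\prs$-structure at $x$; hence the two sets $\Delta$ agree except possibly on the elements $\varphi$ and $\psi$ themselves, whose respective contributions to $\prsval$ are drawn from the equal sets $\prs(\varphi)$ and $\prs(\psi)$, and the test ``some $\chi\in\Delta$ has $\prs(\chi)=\bot$'' has the same outcome for both, so $\prs'(\varphi)=\prs'(\psi)$. For fact~\ref{enumd:facts:h} I would induct over the status computations: $\alt_x$ is $\emptyset$ for undefined nodes and when \expand{} assigns $\closed(\emptyset)$, is the singleton $\{\fset_\alt\}$ with $\fset_\alt\neq\emptyset$ when a special node closes for incompatibility in \detsts{}, and is otherwise a union of children's $\alt$-sets, so $\emptyset\notin\alt_x$ is preserved. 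The equation $\alt_\rt=\emptyset$ is where the dummy program is essential: by~\ref{enumd:facts:a} every non-state in the root's saturation phase, and in every phase hanging off its alternative children, inherits $\pap=d$, so each such special node forms $\fset_\alt$ against $\pnnp{d}=\pcv{d}$; as $d$ does not occur in $\phi$ there are no $\paa{\pcv{d}}{\cdot}$ formulae, $\fset_\alt=\emptyset$ throughout, and nothing non-empty ever propagates up to $\rt$.

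The main obstacle is fact~\ref{enumd:facts:b}, which I would prove by a nested induction: an outer induction over the run, so that the invariant is already available for every node $y$ with a defined $\prs_y$, and an inner induction on the recursion depth of \detstscint{}. Every pair entering a $\prs$-set is emitted by the base case of \detstscint{}, which returns $(y,\varphi')$ exactly when $y$ is undefined or $\neg(y\nords x)$; since an open $x$ is defined and hence carries a timestamp, and since timestamps are globally distinct, $\neg(y\nords x)$ together with $y\neq x$ forces $\tms_x<\tms_y$, that is $x\nords y$ --- and the residual self-loops with $y=x$ are precisely the pairs that \filter{} deletes, so they are absent from the stored, post-\filter{} $\prs_x$. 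Membership $\varphi'\in\fset_y$ is threaded through the recursion from the precondition $\varphi\in\fset_y$ of \detstscint{}, using $\fset_x\subseteq\fset_{y_i}$ at the initial child calls; the ``state implies \feal-formula'' clause follows because \detstsd{} hands the state child $y_0$ the value $\varphi'=\annchn(x,\varphi)$, which is a \feal-formula whenever it is an eventuality, with the recursive calls preserving the property by the outer hypothesis; and \filter{} preserves all three properties because it only forms unions of, and removes $x$-pairs from, sets that already satisfy them. The care required is entirely in the timestamp bookkeeping and in attributing each emitted pair to the correct \detstscint{} base case under a single invocation of \detstsb{}, \detstss{}, or \detstsd{}.
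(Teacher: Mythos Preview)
The paper does not actually prove this proposition: it states only that the facts ``can be verified by careful inspection of the procedures'' and leaves the inspection to the reader. Your proposal is precisely such an inspection, carried out in considerably more detail than the paper supplies, so the approach is the same and your write-up goes well beyond what the paper itself offers.

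One point worth tightening: in your treatment of~\ref{enumd:facts:a} you correctly observe that an $\alpha$/$\beta$-step may merely annotate an already-present eventuality and leave $\fset$ unchanged, and you absorb this into the lexicographic measure for~\ref{enumd:facts:d}. But that observation is in tension with the \emph{strict} inclusion $\fset_x\subsetneqq\fset_y$ that~\ref{enumd:facts:a} asserts; the paper's own prose after the description of \expand{} likewise says only ``$\fset_y\supsetneq\fset_x$ \emph{or} $\alpha$ is an eventuality which is annotated in $\ann_y$ but not in $\ann_x$''. So the strictness clause of~\ref{enumd:facts:a} is arguably overstated as written, and you have in effect proved the correct version (strict growth of the pair $(\fset,\ann)$) rather than the literal claim. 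This does not affect any later use of~\ref{enumd:facts:a} in the paper, but you may want to flag it explicitly rather than let it pass silently.
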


Let~$\phi \in \fml$ be a formula in negation normal form.
Furthermore let~$G = (V, E)$ be the final graph, with root node~$\rt$,
which was created by invoking \issat($\phi$).
Note that~$\fset_\rt = \{ \pea{d}{\phi} \}$
and that all nodes in~$G$ are defined once our algorithm terminates.
Hence ``not closed'' then means ``open'' and vice versa,
and~$\nords$ is a total strict order.
Of course, we have to show first that our algorithm terminates,
but this is an immediate consequence from Theorem~\ref{theod:term}.

\begin{theorem}
  \label{theod:term}
  The algorithm runs in \exptime{} in~$n := |\phi|$.
\end{theorem}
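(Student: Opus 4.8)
The plan is to bound the size of the final graph $G$ and then bound the cost of each rule application and the total number of rule applications, all in terms of $n := |\phi|$. The first ingredient is a bound on $|\pcl(\phi)|$: by induction on formula structure one shows $|\pcl(\phi)| = \bigo(n)$, since each closure step strips a connective or decomposes an $\alpha/\beta$-formula, and the eventuality-unwinding $\pea{\prp{\gamma}}{\chi} \pzz \pea{\gamma}{\pea{\prp{\gamma}}{\chi}}$ only ever produces formulae built from subformulae of $\phi$ prefixed by subprograms of $\phi$. Since $\fset_x \subseteq \pcl(\phi)$ for every node $x$ that ever appears (an invariant easily checked against \expand{}), there are at most $2^{\bigo(n)}$ distinct formula sets, hence at most $2^{\bigo(n)}$ states by Proposition~\ref{propd:facts}\ref{enumd:facts:e}.

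Next I would bound the number of non-states. By Proposition~\ref{propd:facts}\ref{enumd:facts:a}, along any chain of non-states with a common parent state the formula sets strictly increase, and by \ref{enumd:facts:c}--\ref{enumd:facts:d} the number of consecutive non-states on any path is bounded by $\bigo(n)$ (each $\alpha/\beta$-step either enlarges $\fset$ within $\pcl(\phi)$ or resolves an annotation, and annotations also range over a set of size $\bigo(n)$). A non-state is determined by its parent state together with the $\alpha/\beta/$special-expansion history leading to it, and the annotation function $\ann_x : \fmlev \to \opt{\fml}$ takes at most $2^{\bigo(n)}$ values; a special node additionally has at most $2^{\bigo(n)}$ alternative children (one per element of $\alt_{y_0} \subseteq \pows{\fml}$), and each such child is again within $\pcl(\phi)$. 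Counting these together, $|V| = 2^{\bigo(n)}$, and since each node has at most $2^{\bigo(n)}$ children (bounded by the number of \feal-formulae, the two $\beta$-children, or the alternative sets), $|E| = 2^{\bigo(n)}$ as well.

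Then I would argue termination and the per-rule cost. Rule~1 expands each node once, so it fires $|V| = 2^{\bigo(n)}$ times. Rule~2 fires once per node, setting the time stamp. For Rules~3 and~4, the key is a monotonicity/progress measure: a node's status can only move $\unexp \to \undef \to \open(\cdot,\cdot) \to \dots \to \open(\cdot,\cdot) \to \closed(\cdot)$, and once $\closed$ it is frozen; moreover the relevant well-founded measure is obtained from the $\nords$-order on defined nodes together with the fact that \detstscint{} only follows edges into strictly $\nords$-later nodes (the ``recursion is well-defined'' remark) and that each $\prs_x(\varphi)$ is a subset of $V \times \fmlev$, a set of size $2^{\bigo(n)}$, which under repeated recomputation can only shrink in a suitable sense before the node closes. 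This gives a polynomial-in-$|V|$ bound on the number of status changes per node, hence $2^{\bigo(n)}$ Rule~3/4 firings overall. Each individual call to \detsts{} (and its subprocedures \detstsb{}, \detstss{}, \detstsd{}, \filter{}, and the memoised \detstscint{}) inspects only $G$ and runs in time polynomial in $|V| + |E| = 2^{\bigo(n)}$: here the caching stipulation for \detstscint{} is essential, since without it the mutual recursion could blow up, whereas with memoisation there are at most $|V| \cdot |\fmlev \cap \pcl(\phi)| = 2^{\bigo(n)}$ distinct triples per top-level invocation.

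The main obstacle I anticipate is the bound on the number of Rule~3 applications: unlike Rules~1 and~2, Rule~3 may fire repeatedly on the same node, and one must exhibit a genuine termination measure showing that the cascade of status recomputations triggered along ``update''-edges cannot loop. The right measure combines (a) the number of still-undefined nodes, which only decreases, with (b) for the defined nodes, a lexicographic combination over the $\nords$-order of the ``information content'' of $\sts_x$ — e.g. a node closing is a strict decrease, and any $\open$-to-$\open$ recomputation strictly shrinks the reachable potential-rescuer structure because \detstscint{} replaces entries $(y,\varphi)$ with $y$ undefined by the (smaller, $\nords$-later) sets $\prs_y(\varphi)$ once $y$ becomes defined, and filtering removes self-loops. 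Making this measure precise and checking it decreases at every Rule~3 step against the actual pseudocode is the delicate part; the rest is careful but routine bookkeeping with $\pcl(\phi)$ and Proposition~\ref{propd:facts}.
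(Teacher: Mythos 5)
Your overall architecture matches the paper's proof: bound $|V|$ via $\pcl(\phi)$ and global state caching, observe that Rules~1 and~2 fire once per node, reduce everything else to bounding Rule~3, and use the memoisation of \detstscint{} to make each \detsts{} call run in exponential time. (Your claim $|V| = 2^{\bigo(n)}$ is not justified by your own counting --- a saturation tree below a state has depth $\bigo(n)$ but branching degree $2^{\bigo(n)}$ at special nodes, so the paper only obtains $2^{\bigo(n^2)}$ --- but this discrepancy is harmless for an \exptime{} bound.)

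The genuine gap is in your termination measure for Rule~3. You claim that an $\open$-to-$\open$ recomputation ``strictly shrinks the reachable potential-rescuer structure'' and that $\prs_x(\varphi)$ ``can only shrink in a suitable sense before the node closes.'' The monotonicity goes the other way: the graph grows as the algorithm runs, and new potential rescuers appear. In the paper's own worked example, $\prs_4(\pea{a}{\phi})$ changes from $\{(1,\pea{a}{\phi})\}$ to $\{(1,\pea{a}{\phi}),(6,\pea{a}{\phi})\}$ when node~(6) is created, so your proposed measure would have to \emph{increase} at that Rule-3 step. The paper's actual argument is: (a) the $\alt$-component only grows (the one place it could shrink is a state becoming closed, and special nodes do not inherit the alternative sets of their $\mrk$-child, so this does not propagate); (b) a $\prs$-entry can only be \emph{lost}, and a value $\bot$ can only become defined, if some node of $G$ was \emph{closed} between the two recomputations --- this is proved by an induction along $\nords$ threaded through the recursion of \detstscint{}; and (c) since only exponentially many closures can ever occur, and between two consecutive closures the remaining changes merely ``fill up'' subsets of the finite universe $V \times \fmlev$ and are therefore bounded in number, each node changes status only exponentially often. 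Step~(b) is the real content of the Rule-3 bound; without it your lexicographic measure does not decrease at every Rule-3 application, and the proof does not go through as proposed.
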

\begin{proof}
It is easy to see
that every node in~$G$ can contain only formulae of the closure~$\pcl(\phi)$.
Furthermore it is known
that~$|\pcl(\phi)| \leq n$ and that~$|\varphi| \leq n^2$ for all~$\varphi \in \pcl(\phi)$.
Hence there are at most~$2^n$ different sets of formulae
that can be assigned to the nodes of~$G$.
As a state is uniquely identified by its assigned set of formulae
due to Prop.~\ref{propd:facts}\ref{enumd:facts:e},
the total number of states in~$G$ is also in~$2^{\bigo(n)}$.

If we fix one state~$x \in V$, 
it is not too hard to see that the nodes
which are not states and which have~$x$ as their parent state
form several disjoint trees
whose roots are exactly the children of~$x$.
Because of Prop.~\ref{propd:facts}\ref{enumd:facts:a},
the depth of these trees is in~$\bigo(n)$.
The branching degree of $\alpha$- and $\beta$-nodes is constant,
the branching degree of a special node, however, is in~$2^{\bigo(n)}$.
Hence the size of each tree is in~$\left( 2^{\bigo(n)} \right)^{\bigo(n)} \in 2^{\bigo(n^2)}$.
Since the number of children for each state is clearly in~$\bigo(n)$,
the total number of non-states 
which have~$x$ as their parent state is in~$\bigo(n) \cdot 2^{\bigo(n^2)} \in 2^{\bigo(n^2)}$.
As all non-states have a parent state,
we conclude that the total number of nodes in~$G$
is in~$2^{\bigo(n)} \cdot 2^{\bigo(n^2)} \in 2^{\bigo(n^2)}$.

Bearing that in mind, it is fairly obvious
that Rule~1, 2, and~4 can only be applied an exponential number of times.
Next we show that Rule~3 can only be applied an exponential number of times.
We do this by showing that each node can change from open to open
only an exponential number of times.

Assume for the sake of notation that the status of a node~$x$
changes from $\open(\alt, \prs)$ to~$\open(\alt', \prs')$.
In particular we have~$\alt \not= \alt'$ or~$\prs \not= \prs'$.

First we note that we have~$\alt \subseteq \alt'$
which basically follows from the fact
that the set of alternative sets of a node is always the union
of the sets of alternative sets of its children which are not states,
even when the nodes are closed.
The only exception are states
where the set of alternative sets can become smaller,
but only if the node becomes closed.
However, all parents of states are special nodes,
see Prop.~\ref{propd:facts}\ref{enumd:facts:e},
and special nodes do not inherit
the set of alternative of its corresponding state, see \detstsd{}.
As a consequent, $\alt_x$ can only change an exponential number of times
since there are only exponential many alternative sets.

Next we show that if there are~$\varphi, \psi \in \fmlev$ and~$y \in V$
such that either~$\prs(\varphi) = \bot \not= \prs'(\varphi)$
or~$(y, \psi) \in \prs(\varphi)$ but~$(y, \psi) \notin \prs'(\varphi) \not= \bot$,
then some node~$z \in V$ must have been closed
between the time when~$\prs$ was calculated
and the time when~$\prs'$ was calculated.
Hence, there can only be an exponential number of those changes of~$\prs_x$.
Changes of~$\prs_x$ which do not fall into the described category
``fill up'' the sets of potential rescuers.
Therefore, there can only be an exponential number of such changes of~$\prs_x$ in a row
before one of the described changes must happen.
It follows that~$\prs_x$ can only change an exponential number of times.

To show that some node was closed between the times
when~$\prs_x$ and~$\prs_x'$ were calculated, we use induction on~$\nords$.
That is, for each node~$y$ with~$y \nords x$,
we can assume the induction hypothesis.
When~$\prs_x$ was calculated via \detstsb{}, \detstss{}, or \detstsd{},
it can be shown -- basically because of the ``monotonicity'' of \filter{} --
that there must exist~$\varphi_i, \psi' \in \fmlev$ and~$y_i, y' \in V$
such that one of the following statements holds.
\begin{itemize}
\item $\detstscint(x, y_i, \varphi_i) = \bot$ at the time of calculating~$\prs_x$,
  but we have $\detstscint(x, y_i, \varphi_i) \not= \bot$ at the time of calculating~$\prs_x'$; or
\item $(y', \psi') \in \detstscint(x, y_i, \varphi_i)$ at the time of calculating~$\prs_x$,
  but we have $(y', \psi') \notin \detstscint(x, y_i, \varphi_i) \not= \bot$ at the time of calculating~$\prs_x'$.
\end{itemize}
We only cover the second case, the first one is similar.
Using the definition of \detstscint{},
the second case implies one of the following three cases:
\begin{description}
\item[Case~1] the node~$y_i$ was open at the time of calculating~$\prs_x$,
  but is closed at the time of calculating~$\prs_x'$;
\item[Case~2] $y_i \nords x$ and there exists a pair~$(z_j, \varphi_j)$
  such that~$(z_j, \varphi_j) \in \prs_{y_i}(\varphi_i)$ at the time of calculating~$\prs_x$,
  but~$(z_j, \varphi_j) \notin \prs_{y_i}(\varphi_i) \not= \bot$ at the time of calculating~$\prs_x'$;
\item[Case~3] $y_i \nords x$ and there exists a pair~$(z_j, \varphi_j)$
  such that we have $(y', \psi') \in \detstscint(x, z_j, \varphi_j)$ at the time of calculating~$\prs_x$,
  but $(y', \psi') \notin \detstscint(x, z_j, \varphi_j) \not= \bot$ at the time of calculating~$\prs_x'$.
\end{description}
In \emph{Case~1} we have found the node we are looking for, namely~$y_i$.
In \emph{Case~2} we can apply the induction hypothesis on~$y_i$
which gives us the desired node.
In \emph{Case~2} we can use an inductive argument 
which is well-defined because of Prop.~\ref{propd:facts}\ref{enumd:facts:b}.
We conclude that each rule of the algorithm can only be applied
an exponential number of times.

Next we show that applying a rule can be done in \exptime{}.
To do this it is obviously enough to show
that \detstsb{}, \detstss{}, and \detstsd{}
-- and hence \detsts{} -- run in \exptime{}.
It is not too hard to see that the runtime for \filter{} is in \exptime{}.
Hence we are left to show
that the direct invocations of \detstscint{}
in \detstsb{}, \detstss{}, or \detstsd{} run in \exptime{}.
We have already explained
that under each invocation of \detstsb{}, \detstss{}, and \detstsd{},
the procedure \detstscint{} is invoked
at most once for each pair~$(y, \varphi) \in V \times \fmlev$ as second and third argument.
As there exists only an exponential number of such pairs
and the runtime for \detstscint{} is clearly in \exptime{}
when ignoring recursive invocations,
the direct invocations of \detstscint{}
in \detstsb{}, \detstss{}, or \detstsd{} run in \exptime{}.

Since the number of nodes is exponential and \detsts{} runs in \exptime{},
checking whether one of the rules is applicable can clearly be done in \exptime{};
even with the most naive way of simply trying the conditions for all nodes.
We can therefore conclude that the algorithm runs in \exptime{}.
\qed
\end{proof}

\begin{lemma}
  \label{lemmad:uptodate}
  Let~$x \in V$ be an open node.
  \begin{enumerate}
  \item\label{enumd:uptodate:b} If~$x$ is a state then all of its children are open.
  \item\label{enumd:uptodate:c} If~$x$ is an $\alpha$- or a $\beta$-node
    or a special node then some child of~$x$ is open.
  \end{enumerate}
\end{lemma}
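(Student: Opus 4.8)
The plan is to verify both parts by careful inspection of the status-computation procedures \detstss{}, \detstsb{}, and \detstsd{}, tracing what it means for a node's status to be \open{}. Since $x$ is open, its status equals $\open(\cdot,\cdot)$, and since Rule~3 is not applicable to the final graph (all nodes are up-to-date), $\sts_x = \detsts(x)$. So it suffices to ask: under what circumstances does each of the sub-procedures return a status of the form $\open(\cdot,\cdot)$ rather than $\closed(\cdot)$?

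For part~\ref{enumd:uptodate:b}, suppose $x$ is a state. Then $\sts_x = \detstss(x)$, and by inspection of \detstss{}, the procedure returns $\open(\cdot,\cdot)$ only in the \textbf{else}-branch, which is entered precisely when \emph{no} child $y_i = \chld(x, \pea{l_i}{\varphi_i})$ has $\sts_{y_i} = \closed(\cdot)$. Since every child of a state is a successor of some \feal-formula in $\fset_x$ (this is how states are expanded in \expand{}), every child is one of the $y_i$, and hence no child is closed; as the graph is final, every node is defined, so every child is open. For part~\ref{enumd:uptodate:c}, suppose $x$ is an $\alpha$- or $\beta$-node; then $\sts_x = \detstsb(x)$, and inspection of \detstsb{} shows that it returns $\open(\cdot,\cdot)$ only in the \textbf{else}-branch, entered exactly when it is \emph{not} the case that all children are closed --- i.e.\ some child is open (again using that the graph is final). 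If instead $x$ is a special node, then by \detsts{} we have $\sts_x = \detstsd(x)$ provided the compatibility set $\fset_\alt$ is empty (otherwise $x$ would be closed, contradicting openness of $x$); and \detstsd{} returns $\open(\cdot,\cdot)$ only in its \textbf{else}-branch, which is entered exactly when not all of $y_0, \dots, y_k$ are closed, so some child of $x$ is open.

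The only genuinely delicate point is bookkeeping about \emph{which} nodes count as ``children'' in each case, so that the disjunction in the code's \textbf{if}-condition translates correctly into the statement. For a state, one must note that \detstss{} considers exactly the \feal-successors and that, by the structure of \expand{} for states, these are \emph{all} the children --- so ``some $y_i$ is closed'' is equivalent to ``some child is closed''. For a special node, \detstsd{} augments the children set with the alternative nodes $y_1, \dots, y_j$ coming from $\alt_{y_0}$ and then explicitly takes $y_{j+1}, \dots, y_k$ to be ``all the remaining children'', so again the code's condition ranges over exactly the children (including the $\mrk$-child $y_0$). I expect this matching-up of the procedures' local variables with the graph-theoretic notion of child to be the main --- though still routine --- obstacle; once it is in place, both parts follow immediately from reading off the \textbf{else}-branch guards, together with the fact that in the final graph every node is defined and hence ``not closed'' coincides with ``open''.
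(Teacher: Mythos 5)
Your proposal is correct and follows essentially the same route as the paper: both arguments rest on the node being up-to-date (Rule~3 not applicable, so $\sts_x = \detsts(x)$) plus the observation that in the final graph ``not closed'' means ``open'', and then read off from \detstss{}, \detstsb{}, and \detstsd{} that an open result is only returned when no child (respectively, not every child) is closed. The extra bookkeeping you do about which nodes count as children is sound but not needed beyond what the paper's two-line contrapositive already uses.
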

\begin{proof}
Since Rule~3 is not applicable, the node~$x$ is up-to-date.

\noindent\ref{enumd:uptodate:b}:
If some child of~$x$ were not open then it must be closed.
But then~$x$ would also be closed by definition of \detstss{}.
\noindent\ref{enumd:uptodate:c}:
If all children of~$x$ were not open then all children must be closed.
But then~$x$ would also be closed by definition of \detstsb{} or \detstsd{}.
\qed
\end{proof}

\begin{definition}
  Let~$x, y \in V$ and~$\varphi, \psi \in \fml$.
  A \emph{graph chain} for~$(x, \varphi, y, \psi)$
  is a finite sequence~$(y_0, \psi_0), \dotsc, (y_n, \psi_n)$
  of node-formula pairs with~$n \geq 0$ such that:
  \begin{itemize}
  \item $y_0 = x$, $\psi_0 = \varphi$, $y_n = y$, $\psi_n = \psi$,
    $y_i$ is open, and~$\psi_i \in \fset_{y_i}$ for all~$0 \leq i \leq n$
  \item $y_i = y_{i+1}$ or $y_{i+1}$ is a child of~$y_i$ for all~$0 \leq i < n$
  \item for all~$0 \leq i < n$, (exactly) one of the following conditions is true:
    \begin{itemize}
    \item $\psi_i = \psi_{i+1}$ and ($y_i = y_{i+1}$ or~$y_i$ is not a state);
    \item if~$\psi_i = \pea{l}{\chi}$ for some~$l \in \lprg$ and some~$\chi \in \fml$
      then~$\psi_{i+1} = \chi$ and~$y_i$ is a state and~$y_{i+1}$ is the successor of~$\pea{l}{\chi}$,
      else~$\psi_i \pzz \psi_{i+1}$ and~$y_i$ is not a state.
    \end{itemize}
  \end{itemize}
\end{definition}

\begin{lemma}
  \label{lemmad:ex-fulfilling}
  For every open node~$x \in V$ and every eventuality~$\varphi \in \fset_x \cap \fmlev$,
  where~$\varphi$ is a \feal-formula if~$x$ is a state, we have:
  \begin{enumerate}
  \item If~$\prs_x(\varphi) = \bot$ then there exists a node~$z \in V$,
    a formula~$\psi \in \fml \setminus \fmlev$ and a graph chain~$\fchn$ for~$(x, \varphi, z, \psi)$.
  \item If~$\prs_x(\varphi) \not= \bot$, we have for all~$(z, \psi) \in \prs_x(\varphi)$
    that there exists a graph chain~$\fchn$ for~$(x, \varphi, z, \psi)$.
  \end{enumerate}
\end{lemma}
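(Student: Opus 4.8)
The plan is to prove both parts of Lemma~\ref{lemmad:ex-fulfilling} simultaneously by a single induction on the strict order $\nords$, since the construction of a graph chain for an open node $x$ and an eventuality $\varphi \in \fset_x$ will have to invoke the corresponding statement for nodes $y$ with $y \nords x$ (these are precisely the nodes that can occur in $\prs_x(\cdot)$ by Prop.~\ref{propd:facts}\ref{enumd:facts:b}). So for the inductive step I would fix an open node $x$ and an eventuality $\varphi \in \fset_x \cap \fmlev$, with $\varphi$ a \feal-formula if $x$ is a state, and assume both parts hold for every $y \nords x$.

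First I would do a case analysis on the kind of node $x$ is, mirroring the structure of \detsts{}. If $x$ is an $\alpha$- or $\beta$-node, then $\prs_x = \prs' = \filter(x, \prs)$ where $\prs(\varphi) = \bigcup_i \detstscint(x, y_i, \varphi)$ over the children $y_i$ (or $\bot$ if some child returns $\bot$); I would first show that the pre-\filter{} value $\prs$ already admits graph chains (of the required two flavours) by an inner induction on the recursion depth of \detstscint{}, and then argue that \filter{} only replaces a pair $(x,\psi)$ by following a self-loop back to $x$ and continuing — which corresponds to concatenating graph chains that start and end at $x$, using $\prsreach$ and the fact (from the definition of a graph chain) that staying at the same non-state node with the same formula, or stepping $\pzz$ at a non-state, are both legal moves. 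For the base of the \detstscint{} recursion: if $\sts_y = \closed(\cdot)$ it is not reachable (we are in the ``open'' branch); if $y$ is undefined or $y \not\nords x$ the procedure returns $\{(y,\varphi)\}$, giving the length-one graph chain; if $y$ is open with $y \nords x$ and $\prs_y(\varphi) = \bot$, the induction hypothesis (part~1) on $y$ gives a graph chain from $y$ to some $z$ with $\psi \in \fml \setminus \fmlev$, which I prepend with the edge $x \to y$; and if $\prs_y(\varphi) \neq \bot$ then \detstscint{} recurses on each $(z_j, \varphi_j) \in \prs_y(\varphi)$, and the induction hypothesis (part~2) on $y$ supplies a graph chain from $y$ to $(z_j,\varphi_j)$, again prepended with the $x \to y$ edge. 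If $x$ is a state, the argument is the same except the legal move from $x$ to its successor $y_i$ consumes the $\pea{l}{\varphi_i}$ at the front, which is exactly why the hypothesis restricts to \feal-formulae at states. If $x$ is a special node, \detstsd{} first rewrites $\varphi$ to $\varphi' := \annchn(x,\varphi)$; here I would note that following the $\ann_x$-chain from $\varphi$ to $\varphi'$ is a sequence of $\pzz$-steps at the single non-state node $x$ (legal graph-chain moves), so it suffices to build a graph chain from $(x,\varphi')$ and then prepend the $\ann_x$-chain; if $\varphi' \notin \fmlev$ we are already in part~1 with $z = x$, $\psi = \varphi'$, and otherwise we proceed as in the $\alpha$/$\beta$ case over the children $y_0, \dots, y_k$ of the special node.

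The main obstacle I expect is the \filter{} step: closing up a provisional $\prs$ under self-loops through $x$. The pairs $(x,\psi)$ with $\psi \neq \varphi$ are discarded by \filter{} but their content is folded into $\varphi$'s entry via $\prsreach$, so I must show that for every $(z,\psi') \in \prs'_x(\varphi)$ there is a chain $\varphi = \chi_0, \chi_1, \dots, \chi_m$ of eventualities in $\fset_x$ with $(x,\chi_{i+1}) \in \prs(\chi_i)$ and $(z,\psi') \in \prs(\chi_m)$, and then splice together: a graph chain witnessing $(x,\chi_{i+1}) \in \prs(\chi_i)$ is (by the already-handled pre-\filter{} analysis) a graph chain from $x$ back to $x$, and concatenating these with the final chain from $x$ to $(z,\psi')$ yields what we want — the concatenation being legal precisely because at a non-state node $x$ a graph chain may make $\pzz$-steps or stay put, and the endpoints match up. I would also need to make sure the termination of the inner \detstscint{} recursion is genuinely well-founded: the remark in the procedure and Prop.~\ref{propd:facts}\ref{enumd:facts:b} guarantee each recursive call is on a pair $(z_i,\varphi_i)$ with $z_i$ undefined or $z_i$ defined later than $y$, but since here we only ever recurse while staying below $x$ in $\nords$, the recursion does terminate and the inner induction is on its call tree. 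Everything else is routine bookkeeping against the definitions of graph chain and the procedures.
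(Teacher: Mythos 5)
Your proposal is correct and follows essentially the same route as the paper's proof: an outer induction on $\nords$ proving both parts at once, a case split on whether $x$ is an $\alpha$/$\beta$-node, a state, or a special node, an inner induction along the \detstscint{} call tree (well-founded via Prop.~\ref{propd:facts}\ref{enumd:facts:b}) to handle the pre-\filter{} value, concatenation of self-loops through $x$ via $\prsreach$ for the \filter{} step, and prefixing the $\ann_x$-chain of $\pzz$-steps for special nodes. The only point worth making explicit is that for a special node the reduced eventuality $\annchn(x,\varphi)$, if still an eventuality, is a \feal-formula, which is what licenses applying the induction hypothesis to the state child $y_0$.
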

\begin{proof}
We use induction on~$\nords$.
That is, for a node~$x \in V$ we can assume
that all nodes~$y \in V$ with~$y \nords x$ fulfil the lemma already.
Recall that~$y \nords x$ iff~$y$ becomes defined before~$x$ does.

Let~$x \in V$ be open and~$\varphi \in \fset_x$ be an eventuality.
We distinguish whether~$x$ is an $\alpha$/$\beta$-node or a state or a special node.

\noindent\emph{Case~1 ($x$ is $\alpha$/$\beta$-node):}
We distinguish whether~$\prs_x(\varphi) \not= \bot$ or~$\prs_x(\varphi) = \bot$.

\noindent\emph{Case~1.1 ($\prs_x(\varphi) \not= \bot$):}
Let~$(z, \psi) \in \prs_x(\varphi)$.
We distinguish whether or not there exists an open child~$y := y_i$ of~$x$
such that~$(z, \psi)$ is an element of the corresponding~$\prsval_{\varphi, i}$ in \detstsb($x$).

\noindent\emph{Case~1.1.1: ($y$ exists)}
In this case~$(z, \psi)$ must be in \detstscint($x, y, \varphi$).
Our final case distinction is whether or not~$y \nords x$.

\noindent\emph{Case~1.1.1.1: ($\xnot y \nords x$)}
In this case, we have~$y = z$ and~$\varphi = \psi$ by construction in \detstscint{}.
Hence~$(x, \varphi), (y, \varphi)$ is a graph chain for~$(x, \varphi, z, \psi)$.

\noindent\emph{Case~1.1.1.2: ($y \nords x$)}
In this case, we have~$\prs_y(\varphi) \not= \bot$ and
there must exist a pair~$(y', \varphi') \in \prs_y(\varphi)$
such that~$(z, \psi)$ is contained in \detstscint($x, y', \varphi'$).
Using the induction hypothesis on~$y$,
we obtain a graph chain~$\fchn_1$ for~$(y, \varphi, y', \varphi')$.
In particular~$y'$ is open
and we have~$y \nords y'$ by Prop.~\ref{propd:facts}\ref{enumd:facts:b}.
We can now inductively repeat the same arguments for~$y'$ and~$\varphi'$
that we have used for~$y$ and~$\varphi$.
Since~$y \nords y'$, we must eventually end up in \emph{Case~1.1.1.1}.
Hence the induction is  well-defined
and yields a graph chain~$\fchn_2$ for~$(y', \varphi', z, \psi)$.
Thus~$(x, \varphi), \fchn_1, \fchn_2$ is a graph chain for~$(x, \varphi, z, \psi)$.

\noindent\emph{Case~1.1.2: ($y$ does not exist)}
In this case we know that~$(z, \psi) \notin \prs(\varphi)$ in \detstsb($x$)
but must have been inserted in \filter($x, \prs$),
which is invoked at the end of \detstsb($x$).
Hence there exists a~$\chi \in \prsreach(\prs, x, \varphi)$ such that~$(z, \psi) \in \prs(\chi)$.
According to the definition of~$\prsreach$
there exist~$\varphi_0, \dotsc, \varphi_k \in \fmlev$
such that~$\chi = \varphi_k$ and~$(x, \varphi_0) \in \prs(\varphi)$
and~$(x, \varphi_{i+1}) \in \prs(\varphi_i)$ for all~$0 \leq i < k$.
Since~$(x, \varphi_0) \in \prs(\varphi)$ there exists an open child~$y_i$
such that~$(x, \varphi_0)$ is an element
of the corresponding~$\prsval_{\varphi, i}$ in \detstsb($x$).
Hence we can obtain a graph chain~$\fchn_{-1}$ for~$(x, \varphi, x, \varphi_0)$
exactly as in \emph{Case~1.1.1}.
Using the same arguments, we can also get
graph chains~$\fchn_i$ for~$(x, \varphi_i, x, \varphi_{i+1})$ for all~$0 \leq i < k$,
and a graph chain~$\fchn_k$ for~$(x, \chi, z, \psi)$.
Because we have~$\chi = \varphi_k$,
their concatenation~$\fchn_0, \dotsc, \fchn_k$ is a graph chain for~$(x, \varphi, z, \psi)$.

\noindent\emph{Case~1.2 ($\prs_x(\varphi) = \bot$):}
We distinguish whether or not there exists an open child~$y := y_i$ of~$x$
such that the corresponding set~$\prsval_{\varphi, i}$ in \detstsb($x$) is undefined.

\noindent\emph{Case~1.2.1: ($y$ exists)}
In this case \detstscint($x, y, \varphi$) must return undefined.
Hence we must have~$y \nords x$ by construction in \detstscint{}.
Our final case distinction is whether or not~$\prs_y(\varphi) = \bot$.

\noindent\emph{Case~1.2.1.1: ($\prs_y(\varphi) = \bot$)}
In this case then we can use the induction hypothesis on~$y$
to get a graph chain~$\fchn$ for~$(y, \varphi, z, \psi)$
for some~$z \in V$ and some~$\psi \in \fml \setminus \fmlev$.
Thus~$(x, \varphi), \fchn$ is a graph chain for~$(x, \varphi, z, \psi)$.

\noindent\emph{Case~1.2.1.2: ($\prs_y(\varphi) \not= \bot$)}
In this case there must exist a~$(y', \varphi') \in \prs_y(\varphi)$
such that \detstscint($x, y', \varphi'$) returns undefined.
Using the second part of the induction hypothesis on~$y$,
we obtain a graph chain~$\fchn_1$ for~$(y, \varphi, y', \varphi')$.
In particular~$y'$ is open
and we have~$y \nords y'$ by Prop.~\ref{propd:facts}\ref{enumd:facts:b}.
We can now inductively repeat the same arguments for~$y'$ and~$\varphi'$
that we have used for~$y$ and~$\varphi$.
Since~$y \nords y'$, we must eventually end up in \emph{Case~1.1.2.1}.
Hence the induction is  well-defined
and yields a graph chain~$\fchn_2$ for~$(y', \varphi', z, \psi)$
for some~$z \in V$ and some~$\psi \in \fml \setminus \fmlev$.
Thus~$(x, \varphi), \fchn_1, \fchn_2$ is a graph chain for~$(x, \varphi, z, \psi)$.

\noindent\emph{Case~1.2.2: ($y$ does not exist)}
In this case we know that~$\prs(\varphi)$ in \detstsb($x$) is defined.
Therefore~$\prs_x(\varphi)$ became undefined in \filter($x, \prs$),
which is invoked at the end of \detstsb($x$).
Hence there exists a~$\chi \in \prsreach(\prs, x, \varphi)$ such that~$\prs(\chi) = \bot$.
According to the definition of~$\prsreach$
there exist~$\varphi_0, \dotsc, \varphi_k \in \fmlev$
such that~$\chi = \varphi_k$ and~$(x, \varphi_0) \in \prs(\varphi)$
and~$(x, \varphi_{i+1}) \in \prs(\varphi_i)$ for all~$0 \leq i < k$.
Since~$(x, \varphi_0) \in \prs(\varphi)$, there exists an open child~$y := y_i'$
such that~$(x, \varphi_0)$ is an element
of the corresponding~$\prsval_{\varphi, i}$ in \detstsb($x, y_i'$).
Hence we can obtain a graph chain~$\fchn_{-1}$ for~$(x, \varphi, x, \varphi_0)$
exactly as in \emph{Case~1.1.1}.
Using the same arguments, we can also get
graph chains~$\fchn_i$ for~$(x, \varphi_i, x, \varphi_{i+1})$ for all~$0 \leq i < k$.
Since~$\prs(\chi) = \bot$, there exists an open child~$y := y_j'$
such that the corresponding~$\prsval_{\varphi, j}$ in \detstsb($x$) is undefined.
Thus we can obtain a graph chain~$\fchn_k$ for~$(x, \chi, z, \psi)$
for some~$z \in V$ and some~$\psi \in \fml \setminus \fmlev$ exactly as in \emph{Case~1.2.1}.
Because we have~$\chi = \varphi_k$,
the concatenation~$\fchn_0, \dotsc, \fchn_k$ is a graph chain for~$(x, \varphi, z, \psi)$.

\noindent\emph{Case~2 ($x$ is a state):}
By assumption~$\varphi$ is of the form~$\pea{l}{\chi}$
for some~$l \in \lprg$ and some~$\chi \in \fmlev$.
By distinguishing whether or not~$\prs_x(\pea{l}{\chi}) = \bot$,
we obtain the appropriate graph chain almost exactly as in \emph{Case~1}.
The main difference is that we have only one child of~$x$ to consider,
namely the successor of~$\pea{l}{\chi}$.

\noindent\emph{Case~3 ($x$ is a special node):}
First we inductively build a sequence~$\fchn_0$ as follows.
We start with~$\fchn_0 := (x, \varphi)$.
If~$\varphi$ is a \feal-formula then we stop.
Otherwise we extend~$\fchn_0$ with~$\varphi' := \ann_x(\varphi)$
which must be defined as~$x$ is a special node
and~$\ann_x$ is hence a full annotation.
Remember that we have~$\varphi \pzz \varphi'$.
If~$\varphi'$ is not an eventuality nor a \feal-formula then we stop.
Otherwise we extend~$\fchn_0$ with~$\varphi'' := \ann_x(\varphi')$ and so on.
The termination of this iteration is guaranteed by the fact
that~$x$ is open and hence~$\ann_x$ is non-cyclic.
It is not hard to see
that the final~$\fchn_0$ is of the form~$(x, \varphi), (x, \varphi'), \dotsc, (x, \chi)$
where~$\chi = \annchn(x, \varphi)$.
Hence we have~$\prs_x(\varphi) = \prs_x(\chi)$
by definition of \detstss{} and because of Prop.~\ref{propd:facts}\ref{enumd:facts:f}.
If~$\chi \notin \fmlev$ then we know~$\prs_x(\chi) = \prs_x(\varphi) = \bot$.
Furthermore~$\fchn_0$ is a graph chain for~$(x, \varphi, x, \chi)$ and we are done.
Otherwise~$\chi$ is of the form~$\pea{l}{\chi'}$
for some~$l \in \lprg$ and some~$\chi' \in \fmlev$.

By distinguishing whether or not~$\prs_x(\pea{l}{\chi'}) = \bot$,
we obtain the appropriate graph chain~$\fchn_1$ for~$(x, \pea{l}{\chi'}, z, \psi)$
exactly as in \emph{Case~1}.
Note in particular that we can use the induction hypothesis on the child of~$x$
which is a state since our eventuality is of the form~$\pea{l}{\chi'}$.
We can then conclude
that~$\fchn_0, \fchn_1$ is the appropriate graph chain for~$(x, \varphi, z, \psi)$.
\qed
\end{proof}

\begin{definition}
  For~$\varphi \in \fmlev$, let~$\extr(\varphi) \in \fml$ be the largest subformula of~$\varphi$
  such that~$\extr(\varphi) \notin \fmlev$ and~$\varphi \in \ppre(\extr(\varphi))$.
\end{definition}

It is easy to see that~$\extr(\varphi)$ is well-defined, in particular, that it is unique.

\begin{proposition}
  \label{propd:pzz}
  Let~$\psi_0, \dotsc, \psi_n$ be a finite sequence of formulae with~$n \geq 0$
  such that~$\psi_n \notin \fmlev$ and for every~$0 \leq i \leq n$:
  if~$\psi_i = \pea{l}{\chi}$ for some~$l \in \lprg$ and some~$\chi \in \fml$ then~$\psi_{i+1} = \chi$,
  else~$\psi_i \pzz \psi_{i+1}$.
  \begin{enumerate}
  \item\label{enumd:pzz:a} $\extr(\psi_j) = \psi_n$ for every~$0 \leq j \leq n$;
  \item\label{enumd:pzz:b} for every~$\chi \in \fml$
    such that~$\psi_n$ is a subformula of~$\chi$ and~$\chi$ is a subformula of~$\psi_0$,
    there exists a~$0 \leq j \leq n$ such that~$\chi = \psi_j$.
  \end{enumerate}
\end{proposition}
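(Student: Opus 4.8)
The plan is to prove both statements by induction on the length $n$ of the sequence, the engine in each case being an inspection of the first step --- either $\psi_0 = \pea{l}{\psi_1}$ for some $l \in \lprg$, or $\psi_0 \pzz \psi_1$ via one of the four clauses of Definition~\ref{defd:event}.

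For part~(i) the fact I would establish first describes how $\extr$ and membership in $\fmlev$ behave under a single reduction step. The point is that the leading program-constructors $l \in \lprg$, $\psp{}{}$, $\pup{}{}$ and $\pip{}$ all satisfy $\pea{\cdot}{\chi} \in \fmlev$ iff $\chi \in \fmlev$, whereas $\pea{\prp{\gamma}}{\chi}$ is an eventuality for every $\chi$; unfolding the definitions of $\ppre$ and $\extr$ then gives, for $\psi \in \fmlev$ reducing to $\psi'$: either $\psi' \in \fmlev$ and $\extr(\psi') = \extr(\psi)$, or $\psi' \notin \fmlev$, in which case $\psi$ is necessarily of the form $\pea{\prp{\gamma}}{\psi'}$ and $\extr(\psi) = \psi'$. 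A consequence is that $\fmlev$-membership is never regained once lost, so one checks that the sequence has the shape $\psi_0, \dots, \psi_{n-1} \in \fmlev$ and $\psi_n \notin \fmlev$, with $\psi_n$ reached from $\psi_{n-1}$ by an ``exit'' step of the form just described. Iterating the first alternative along $\psi_0, \dots, \psi_{n-1}$ keeps $\extr$ constant, and the exit step equates that constant value with $\psi_n$ --- which is exactly $\extr(\psi_0)$, the first non-eventuality met on stripping leading program-diamonds from $\psi_0$. The case $j = n$ is immediate.

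For part~(ii) I would induct on $n$; for $n = 0$ we have $\psi_0 = \psi_n$ and the only $\chi$ with $\psi_n$ a subformula of $\chi$ a subformula of $\psi_0$ is $\psi_0$ itself. For the inductive step it suffices to check that every subformula of $\psi_0$ distinct from $\psi_0$ is a subformula of $\psi_1$: given $\chi$ with $\psi_n$ a subformula of $\chi$ a subformula of $\psi_0$, either $\chi = \psi_0$ and $j = 0$ works, or $\chi$ is a subformula of $\psi_1$ and, since $\psi_n$ is still a subformula of $\chi$, the inductive hypothesis applied to the subsequence $\psi_1, \dots, \psi_n$ (which again satisfies the hypotheses) yields $\chi = \psi_j$ for some $j \ge 1$. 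The claim that the proper subformulae of $\psi_0$ are subformulae of $\psi_1$ is verified in one sweep over the handful of possible first steps: in every clause of $\pzz$, and in the literal-diamond step, the matrix of $\psi_0$ is carried over intact into $\psi_1$, possibly under extra leading diamonds, so nothing below the leading diamonds is dropped --- here it matters that ``subformula'' is the relation on the formula tree that does not descend into the test sub-programs, so that, for instance, the discarded branch of a $\pup{}{}$ or the test of a $\pip{}$ contributes no subformulae of $\psi_0$ to worry about.

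The step I expect to cost the most is the $\extr$/$\fmlev$ bookkeeping underlying part~(i): $\pzz$ has four clauses, three of the leading constructors behave one way and $\prp{}$ the other (it is the only one that can leave $\fmlev$, and the only one for which the step gives $\extr(\psi) = \psi'$ rather than $\extr(\psi') = \extr(\psi)$), so some discipline is needed to keep the sub-cases straight and, in particular, to justify the claimed shape of the sequence. Once this single-step lemma is available, both parts follow from the routine inductions above, part~(ii) needing only the easily-checked fact that a $\pzz$-step or a literal-diamond step never discards a subformula sitting below the leading diamonds.
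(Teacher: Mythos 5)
The paper states Proposition~\ref{propd:pzz} without proof, so there is no official argument to compare yours against; judging it on its own terms, your single-step analysis of $\extr$ and your treatment of part~(ii) are sound, but part~(i) has a genuine gap at the sentence ``A consequence is that $\fmlev$-membership is never regained once lost, so one checks that the sequence has the shape $\psi_0, \dots, \psi_{n-1} \in \fmlev$''. Two things go wrong there. First, the auxiliary claim is false: $\pea{\psp{\prp{a}}{b}}{p}$ is not an eventuality (it cannot be written as $\pea{\gamma_1}{\dotsc\pea{\gamma_k}{\pea{\prp{\gamma}}{\varphi}}}$, since $\psp{\prp{a}}{b}$ is not of the form $\prp{\gamma}$ and $p$ is atomic), yet it $\pzz$-reduces to $\pea{\prp{a}}{\pea{b}{p}} \in \fmlev$. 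Second, and more importantly, even granting the claim the inference does not go through: nothing in the stated hypotheses forces the exit from $\fmlev$ to occur exactly at index $n$. The sequence $\pea{\prp{a}}{\pea{b}{p}},\ \pea{b}{p},\ p$ satisfies every hypothesis of the proposition ($\pea{\prp{a}}{\pea{b}{p}} \pzz \pea{b}{p}$ by the $\prp{}$-clause, then a literal-diamond step, and $p \notin \fmlev$), but $\extr(\pea{\prp{a}}{\pea{b}{p}}) = \pea{b}{p} \neq p = \psi_n$, so part~(i) actually fails for it.

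The real diagnosis is that the proposition as literally stated is missing the hypothesis that $\psi_i \in \fmlev$ for all $0 \leq i < n$; this hypothesis does hold in the one place the proposition is used (the graph chains produced by Lemma~\ref{lemmad:ex-fulfilling} pass only through eventualities until their final formula), and your proof silently manufactures it at exactly the quoted step. Once that hypothesis is made explicit, iterating your single-step lemma along $\psi_0, \dotsc, \psi_{n-1}$ does prove~(i) correctly, and your argument for~(ii) --- including the observation that ``subformula'' must not descend into test programs, since otherwise a test buried in a discarded $\pup{}{}$-branch or in a $\pip{}$ would produce counterexamples --- is fine as it stands. I would recommend you state the missing hypothesis, prove the (corrected) single-step lemma by the case analysis you sketch, and drop the ``never regained'' claim entirely.
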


\begin{theorem}
  \label{theod:correctness}
  If root~$\rt \in V$ is open then~$\phi$ is satisfiable.
\end{theorem}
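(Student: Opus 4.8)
The plan is to extract from the final graph~$G$ a Hintikka structure for~$\phi$ and then apply Proposition~\ref{propd:satisfiable}. Take~$W$ to be the set of open states reachable from~$\rt$ (recall~$\rt$ is a state, and~$\phi$ appears in the formula set of some open state reachable from~$\rt$), and put~$L(s) := \fset_s$. Define~$R$ only on atomic programs and derive~$R_{\pcv{a}}$ as its converse, as demanded by the semantics: whenever~$s \in W$, $\pea{l}{\chi} \in \fset_s$ is a \feal-formula, and~$s'$ is an open state reached from the successor of~$\pea{l}{\chi}$ by a path of open nodes all in the saturation phase of~$s$, ending with one~$\mrk$-edge, put~$(s,s') \in R_a$ if~$l = a$ and~$(s',s) \in R_a$ if~$l = \pcv{a}$. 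Two preparatory observations are needed. First, every open special node is \emph{compatible} with its parent state, since \detsts{} closes an incompatible special node. Second, from any open state, following any of its \feal-formulae one does reach an open state: by Lemma~\ref{lemmad:uptodate} an open state has only open children and an open $\alpha/\beta$- or special node has an open child, the number of consecutive non-states on a path is bounded (Prop.~\ref{propd:facts}\ref{enumd:facts:d}), the saturation trees below a fixed state have bounded depth (proof of Theorem~\ref{theod:term}), and a special node has its corresponding state as a child (Prop.~\ref{propd:facts}\ref{enumd:facts:g}); following open children therefore stays inside one saturation tree until it reaches some open special node whose corresponding state child is open.

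Next I would verify the local conditions H1--H5. H1 holds because \expand{} closes any node containing both~$\varphi$ and~$\pneg{\varphi}$. H2 and H3 hold at the level of formula sets: $\fset_s$ equals the set of the fully saturated special node that created~$s$, and full saturation forces all $\alpha$-reductions and at least one $\beta$-reduction to be present (the annotation bookkeeping plays no role here). H4 is the second preparatory observation, using that the target~$\chi$ of~$\pea{l}{\chi}$ lies in the successor node and hence, since formula sets only grow along saturation (Prop.~\ref{propd:facts}\ref{enumd:facts:a}) and are preserved by~$\mrk$-edges, in the reached open state. For H5: if~$\paa{l}{\psi} \in \fset_s$ then~$\psi$ is placed into the successor node of every~$\pea{l}{\chi} \in \fset_s$ by \expand{}, hence lies in every~$l$-successor of~$s$; the converse direction, where~$\paa{\pcv{l}}{\psi} \in \fset_{s'}$ for an~$l$-successor~$s'$ of~$s$, is witnessed by an open, hence compatible, special node~$x$ with~$\fset_x = \fset_{s'}$, $\pst_x = s$ and~$\pap_x = l$, and compatibility says exactly~$\{\psi \mid \paa{\pnnp{l}}{\psi} \in \fset_{s'}\} \subseteq \fset_s$, which is what is needed since~$\pnnp{\pcv{a}} = a$ and~$\pnnp{a} = \pcv{a}$.

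The heart of the argument is H6. I would prove the stronger statement that for every open node~$x$ and every eventuality~$e \in \fset_x \cap \fmlev$ (a \feal-formula if~$x$ is a state), $e$ admits a fulfilling chain starting from the world attached to~$x$ (its parent state, or~$x$ if~$x$ is a state). As~$G$ is finite, $\nords$ is after termination a finite strict total order, so I can induct along the reverse of~$\nords$, assuming the statement for all~$z$ with~$x \nords z$. Since the algorithm terminated, Rule~4 is inapplicable, so~$\prs_x(e) \neq \emptyset$. If~$\prs_x(e) \neq \bot$, choose~$(z,\psi) \in \prs_x(e)$; Prop.~\ref{propd:facts}\ref{enumd:facts:b} gives~$x \nords z$, $\psi \in \fset_z$ and~$z$ open, the induction hypothesis gives a fulfilling chain for~$\psi$ from~$z$'s world, Lemma~\ref{lemmad:ex-fulfilling}(2) gives a graph chain for~$(x,e,z,\psi)$, and I splice the two after translating the graph chain into a sequence of world-formula pairs -- collapsing each maximal run of open non-states onto the open state it leads to, turning~$\pzz$-steps into same-world steps and \feal-successor steps into~$R$-edges. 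If~$\prs_x(e) = \bot$ -- which is forced when~$x$ is~$\nords$-maximal, since every element of a non-empty rescuer set lies~$\nords$-above~$x$ -- Lemma~\ref{lemmad:ex-fulfilling}(1) gives a graph chain for~$(x,e,z',\psi')$ with~$\psi' \notin \fmlev$, and translating it the same way and invoking Proposition~\ref{propd:pzz} (item~\ref{enumd:pzz:b} to locate the index at which the chain passes through~$\varphi$, item~\ref{enumd:pzz:a} to see that the reductions after the~$\prp{\gamma}$-unwinding really leave~$\fmlev$ through~$\varphi$) yields the required fulfilling chain. Applying this to~$x = \rt$ completes the construction of the Hintikka structure, and Proposition~\ref{propd:satisfiable} then gives that~$\phi$ is satisfiable.

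I expect the main obstacles to be, first, pinning down~$R$ so that the converse clause~$R_{\pcv{a}}$ of the semantics is consistent with every graph chain delivered by Lemma~\ref{lemmad:ex-fulfilling} -- a~$\pcv{a}$-diamond step must contribute~$(s',s)$ rather than~$(s,s')$ to~$R_a$ -- and, second, the careful translation of graph chains into fulfilling chains around special and alternative nodes, where a single saturation phase may branch towards several states although each individual graph chain commits to one of them; everything else is routine bookkeeping once these are settled.
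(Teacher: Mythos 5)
Your construction of the Hintikka structure, the treatment of H1--H5 (including the use of compatibility of open special nodes for the converse half of H5), and the overall plan for H6 (Rule~4 inapplicability forces $\prs_x(e)\neq\emptyset$, Prop.~\ref{propd:facts}\ref{enumd:facts:b} forces each step of the rescuer recursion strictly upward in~$\nords$, Lemma~\ref{lemmad:ex-fulfilling} supplies the graph chains) all match the paper. But your H6 argument has two genuine gaps.

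First, your strengthened statement only covers eventualities that are \feal-formulae when the node is a state, whereas H6 must be verified for every $\pea{\prp{\gamma}}{\varphi}\in L(w)$ at a world~$w$, and a saturated state routinely contains eventualities whose outermost program is complex (they are $\beta$-formulae, kept in~$\fset_w$ alongside one of their reducts). Neither your strengthened statement applied to~$w$ nor Lemma~\ref{lemmad:ex-fulfilling} applies to such an eventuality at a state, and ``applying this to $x=\rt$'' does not address it. The paper bridges this by first following the full non-cyclic annotation of an open special node with the same formula set as~$w$, producing a same-world $\pzz$-prefix $\fchn_0$ that ends either outside~$\fmlev$ or at a \feal-formula, and only then invoking Lemma~\ref{lemmad:ex-fulfilling}; this step (and the fact that openness guarantees the annotation is non-cyclic, so the prefix terminates) is missing from your proposal. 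Second, you attach to a non-state~$x$ the world of its \emph{parent state} $\pst_x$, but the formulae of~$x$ are realised at the state its saturation \emph{leads to}, not at~$\pst_x$ (indeed $e\in\fset_x$ need not lie in $L(\pst_x)$, so no fulfilling chain for~$e$ can even start there). This makes your induction statement wrong as written and, since a non-state can lead to several states, it also breaks the splice: when the rescuer $(z,\psi)$ has~$z$ a non-state, the last world of your translated graph chain and the first world of the inductively obtained fulfilling chain for~$\psi$ need not coincide. The paper avoids this seam by concatenating everything at the level of graph chains, extending the final chain (whose last formula is outside~$\fmlev$ and hence persists through saturation) until it reaches an actual state, and performing the node-to-state collapse only once at the very end.
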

\begin{proof}
We will show that there exists a Hintikka structure for~$\phi$.
The theorem then follows with Prop.~\ref{propd:satisfiable}.

Let~$\gm$ (``p'' for pruned) be the subgraph of~$G$
that consists of all open nodes of~$G$ which are not states,
and all open states for which there exists a full non-cyclic annotation.
Note that open states which are children of open special nodes
are contained in~$\gm$ since the annotation of the special child
is trivially a full non-cyclic annotation for the state.
The edges of~$\gm$ are exactly the edges of~$G$ which connect two open nodes.
Then we use~$\gm$ to generate a structure~$H = (W, R, L)$ as described next.

Let~$W$ be the set of all states of~$\gm$.
To define~$R$ we first define the auxiliary function~$R: \lprg \to W \times W$.
For every~$l \in \lprg$ and every~$s, t \in W$, let~$(s, t) \in R'(b)$
iff~$\fset_s$ contains a formula~$\pea{l}{\varphi}$ for some formula~$\varphi \in \fml$
and there exists an $s$-path~$x_0 = s, x_1, \dotsc, x_{k+1}, x_{k+2} = t$ in~$\gm$
such that~$x_1$ is the successor of~$\pea{l}{\varphi}$,
each~$x_i, 1 \leq i \leq k$, is an $\alpha$- or a $\beta$-node or a special node,
and~$x_{k+1}$ is a special node.
Thus state~$t$ is a ``saturation'' of~$x_1$.
Then we set~$R_a := R'(a) \cup \{ (t,s) \mid (s, t) \in R'(\pcv{a}) \}$ for every~$a \in \aprg$.
Finally, we set~$L(s)$ to be~$\fset_s$ for all~$s \in W$.

As~$\rt$ is open, the only child~$y$ of~$\rt$ is also open
due to Lemma~\ref{lemmad:uptodate}\ref{enumd:uptodate:b}.
Moreover we have~$\phi \in \fset_y$ because~$y$ is the successor of~$\pea{d}{\phi} \in \fset_\rt$.
Thus there exists an open state~$s$ in~$G$, and thus in~$\gm$,
such that~$\phi \in \fset_s = L(s)$ due to Lemma~\ref{lemmad:uptodate}\ref{enumd:uptodate:c}
and Prop.~\ref{propd:facts}\ref{enumd:facts:d}\ref{enumd:facts:c}\ref{enumd:facts:a}.
Hence~$H$ is a structure for~$\phi$.

Next we show that~$H$ fulfils~H1 to~H6 and is thus a Hintikka structure for~$\phi$.
Let~$w \in W$, that is~$w$ is a state in~$\gm$.
In particular~$w$ is open.

\noindent{}H1:
The set~$L(w) = \fset_w$ cannot contain a contradiction as~$w$ is not closed.

\noindent{}H2 and H3 follow from the fact
that~$w$ is a state and not an $\alpha$- or a $\beta$-node.

\noindent{}H4 follows from the treatment of states in \expand{},
Lemma~\ref{lemmad:uptodate}\ref{enumd:uptodate:c}
as well as Prop.~\ref{propd:facts}\ref{enumd:facts:d}\ref{enumd:facts:c}\ref{enumd:facts:a}.

\noindent{}H5: Let~$\paa{l}{\varphi} \in L(w)$ and~$v \in W$ such that~$(w,v) \in R_b$.
We distinguish whether $(w,v) \in R'(b)$ or~$(v,w) \in R'(\pnnp{b})$.

If~$(w,v) \in R'(b)$, we have by definition of~$R'$
that~$\fset_w$ contains a formula~$\pea{l}{\psi}$ for some~$\psi \in \fml$
and there exists a path~$x_0 = w, x_1, \dotsc, x_{k+1}, x_{k+2} = v$ in~$\gm$
such that~$x_1$ is the successor of~$\pea{l}{\psi}$,
each~$x_i, 1 \leq i \leq k$, is an $\alpha$- or a $\beta$-node or a special node,
and~$x_{k+1}$ is a special node.
By construction we have~$\varphi \in \fset_{x_1}$,
and by Prop.~\ref{propd:facts}\ref{enumd:facts:c} and~\ref{enumd:facts:a}
we have~$\varphi \in \fset_v = L(v)$.

If~$(v,w) \in R'(\pnnp{b})$ then we have by definition of~$R'$
that the set~$\fset_v$ contains a formula~$\pea{\pnnp{b}}{\psi}$ for some~$\psi \in \fml$,
and that there exists a path $x_0 = v, x_1, \dotsc, x_{k+1}, x_{k+2} = w$ in~$\gm$
such that~$x_1$ is the successor of~$\pea{\pnnp{b}}{\psi}$,
and each~$x_i, 1 \leq i \leq k$, is an $\alpha$- or a $\beta$-node or a special node,
and~$x_{k+1}$ is a special node.
By construction and Prop.~\ref{propd:facts}\ref{enumd:facts:a}
we have~$\pst_{x_{k+1}} = v$ and~$\pap_{x_{k+1}} = \pnnp{b}$.
Together with Prop.~\ref{propd:facts}\ref{enumd:facts:g} and
the fact that~$\pnnp{(\pnnp{r})} = r$ and that~$x_{k+1}$ is not closed,
we can therefore conclude~$\varphi \in \fset_v = L(v)$.

\noindent{}H6: We use Lemma~\ref{lemmad:ex-fulfilling} as is shown next.
Let~$\varphi \in L(w) = \fset_w$ be an eventuality.
We first build a graph chain~$\fchn$ for~$(w, \varphi, y, \psi)$
for some~$y \in V$ and some~$\psi \in \fml \setminus \fmlev$
and then convert~$\fchn$ into a fulfilling chain.

By assumption there exists a full non-cyclic annotation~$\ann$ for~$w$.
First we inductively build a sequence~$\fchn_0$ as follows.
We start with~$\fchn_0 := (w, \varphi)$.
If~$\varphi$ is a \feal-formula then we stop.
Otherwise we extend~$\fchn_0$ with~$\varphi' := \ann(\varphi)$
which must be defined as~$\ann$ is a full annotation.
Remember that we have~$\varphi \pzz \varphi'$.
If~$\varphi'$ is not an eventuality nor a \feal-formula then we stop.
Otherwise we extend~$\fchn_0$ with~$\varphi'' := \ann(\varphi')$ and so on.
The termination of this iteration is guaranteed by the fact that~$\ann$ is non-cyclic.
Let the final~$\fchn_0$ be of the form~$(w, \varphi), (w, \varphi'), \dotsc, (w, \chi)$.
If~$\chi \notin \fmlev$ then~$\fchn_0$ is already a graph chain for~$(w, \varphi, w, \chi)$.
and we set~$\fchn := \fchn_0$.
Otherwise~$\chi$ is of the form~$\pea{l}{\chi'}$ for some~$l \in \lprg$ and some~$\chi' \in \fmlev$.
In this case we build another graph chain~$\fchn_1$ for~$(w, \chi, y, \psi)$
for some~$y \in V$ and some~$\psi \in \fml \setminus \fmlev$
by distinguishing whether or not~$\prs_w(\chi) = \bot$,
and then set~$\fchn := \fchn_0, \fchn_1$.

\emph{Case~1:} If~$\prs_w(\chi) = \bot$ then Lemma~\ref{lemmad:ex-fulfilling} provides us
with a graph chain~$\fchn_1$ for~$(w, \chi, y, \psi)$ for some~$y \in V$ and some~$\psi \in \fml \setminus \fmlev$.

\emph{Case~2:} If~$\prs_w(\varphi) \not= \bot$ then it contains at least one pair~$(y, \psi)$
such that~$w \nords y$ because of Prop.~\ref{propd:facts}\ref{enumd:facts:b}
and the fact that Rule~4 is not applicable.
Hence Lemma~\ref{lemmad:ex-fulfilling} gives us a graph chain for~$(w, \chi, y, \psi)$.
In particular~$y$ is open.
As the $\nords$-relation is well-founded,
we can inductively apply Lemma~\ref{lemmad:ex-fulfilling} on~$y$ and~$\psi$ and so on,
and must eventually encounter \emph{Case~1}.
Joining all sequences in the obvious way gives us a graph chain~$\fchn$ for~$(w, \chi, y, \psi)$
for some~$y \in V$ and some~$\psi \in \fml \setminus \fmlev$.

To convert~$\fchn$ into a fulfilling chain, we extend~$\fchn$ as follows.
Let~$\pi = z_0, \dotsc, z_k$ be a finite $y$-path in~$\gm$
such that~$t := z_k$ is the only state in~$\pi$ and~$\psi \in \fset_{z_i}$ for all~$0 \leq i \leq k$.
The existence of~$\pi$ follows from
Lemma~\ref{lemmad:uptodate}\ref{enumd:uptodate:c}
and Prop.~\ref{propd:facts}\ref{enumd:facts:d}\ref{enumd:facts:c}\ref{enumd:facts:a}.
Note that~$\pi = y$ iff~$y$ is a state.
We extend~$\fchn$ by the sequence~$(z_1, \psi), \dotsc, (z_k, \psi)$.
Now~$\fchn$ is a graph chain for~$(w, \varphi, t, \psi)$.

Finally, we convert~$\fchn$ into a fulfilling chain as follows.
Let~$\fchn$ be of the form~$(y_0, \psi_0), \dotsc, (y_n, \psi_n)$.
Next we replace each~$y_i, 0 \leq i \leq n$ in~$\fchn$ with the first state
which appears on the path $y_i, \dotsc, y_n$.
Furthermore we contract all consecutive repetitions of pairs.
Let the resulting~$\fchn$ be of the form~$(w_0, \psi_0'), \dotsc, (w_m, \psi_m')$.

It is not too hard to check that~$\fchn$ is a fulfilling chain for~$(H, w, \varphi, \psi)$.
Moreover, Prop.~\ref{propd:pzz}\ref{enumd:pzz:a} tells us that~$\psi = \extr(\varphi)$.
If~$\varphi = \pea{\prp{\gamma}}{\chi}$ for some~$\gamma \in \prg$ and some~$\chi \in \fml$
then~$\extr(\varphi)$ is a subformula of~$\chi$ which is obviously a subformula of~$\varphi$.
According to Prop.~\ref{propd:pzz}\ref{enumd:pzz:b},
there therefore exists a~$j \in \{ 0, \dotsc, m \}$ such that~$\chi = \psi_j'$.
Thus~$(w_0, \psi_0'), \dotsc, (w_j, \psi_j')$ is a fulfilling chain for~$(H, w, \pea{\prp{\gamma}}{\chi}, \chi)$.
Hence~H6 holds which concludes the proof.
\qed
\end{proof}

We next define some concepts related to models
and state some propositions which we will need in the remaining proofs.
\begin{definition}
  Let~$M = (W, R, \evv)$ be a model, $w \in W$, and~$\varphi, \psi \in \fml$.
  A \emph{model chain for~$(M, w, \varphi, \psi)$} is a finite sequence~$(w_0, \psi_0), \dotsc, (w_n, \psi_n)$
  of world-formula pairs with~$n \geq 0$ such that:
  \begin{enumerate}
  \item $w_0 = w$, $\psi_0 = \varphi$, $\psi_n = \psi$, and~$\psr{M, w_i}{\psi_i}$ for all~$0 \leq i \leq n$
  \item for all~$0 \leq i < n$:
    if~$\psi_i = \pea{l}{\chi}$ for some~$l \in \lprg$ and some~$\chi \in \fml$
    then~$\psi_{i+1} = \chi$ and~$\prel{w_i}{R_b}{w_{i+1}}$, else~$\psi_i \pzz \psi_{i+1}$ and~$w_i = w_{i+1}$.
  \end{enumerate}
\end{definition}

\begin{proposition}
  \label{propd:chain}
  Let~$M = (W, R, \evv)$ be a model, $w \in W$,
  and~$\varphi, \psi \in \fml$ such that~$\varphi \in \ppre(\psi)$ and~$\psr{M, w}{\varphi}$.
  Then there exists a model chain for~$(M, w, \varphi, \psi)$.
\end{proposition}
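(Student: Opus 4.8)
The plan is to prove a slightly stronger statement that also records the world at which the chain ends and that talks about a single program rather than an entire $\ppre$-prefix; the proposition then follows by peeling the prefix off~$\varphi$ one program at a time.

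\smallskip\noindent\emph{Claim.} For every program~$\gamma \in \prg$, every formula~$\xi \in \fml$, and all worlds~$u, v \in W$ with~$\prel{u}{R_\gamma}{v}$ and~$\psr{M, v}{\xi}$, there is a model chain for~$(M, u, \pea{\gamma}{\xi}, \xi)$ whose last world is~$v$.

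\smallskip
I would prove the Claim by induction on~$|\gamma|$, with a secondary induction on the length of a witnessing path in the~$\prp{}$-case. If~$\gamma = l \in \lprg$, the sequence~$(u, \pea{l}{\xi}), (v, \xi)$ is the desired model chain. If~$\gamma = \psp{\delta}{\eta}$, then~$\prel{u}{R_\gamma}{v}$ yields an intermediate world~$t$ with~$\prel{u}{R_\delta}{t}$ and~$\prel{t}{R_\eta}{v}$; since~$\pea{\psp{\delta}{\eta}}{\xi}$ is equivalent to~$\pea{\delta}{\pea{\eta}{\xi}}$ (Prop.~\ref{propd:axioms}), I prepend the~$\pzz$-step~$(u, \pea{\psp{\delta}{\eta}}{\xi}), (u, \pea{\delta}{\pea{\eta}{\xi}})$, apply the induction hypothesis for~$\delta$ with target formula~$\pea{\eta}{\xi}$ (which holds at~$t$) to reach~$t$, and then the induction hypothesis for~$\eta$ to go from~$t$ to~$v$, splicing at the shared pair~$(t, \pea{\eta}{\xi})$. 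The cases~$\gamma = \pup{\delta}{\eta}$ and~$\gamma = \pip{\chi}$ are analogous but shorter: one~$\pzz$-step to~$\pea{\delta}{\xi}$ (resp.\ to~$\xi$), followed by the induction hypothesis (resp.\ by nothing). For~$\gamma = \prp{\delta}$, I fix a finite path~$u = u_0, u_1, \dots, u_n = v$ with~$\prel{u_i}{R_\delta}{u_{i+1}}$ and induct on~$n$: for~$n = 0$ the chain is~$(u, \pea{\prp{\delta}}{\xi}), (u, \xi)$; for~$n \geq 1$ we have~$\psr{M, u_1}{\pea{\prp{\delta}}{\xi}}$, so I prepend~$(u, \pea{\prp{\delta}}{\xi}), (u, \pea{\delta}{\pea{\prp{\delta}}{\xi}})$, use the primary induction hypothesis for~$\delta$ with target~$\pea{\prp{\delta}}{\xi}$ to reach~$u_1$, and the secondary induction hypothesis for the shorter path~$u_1, \dots, u_n$ to go from~$u_1$ to~$v$, splicing at~$(u_1, \pea{\prp{\delta}}{\xi})$. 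In every case the resulting sequence meets the model-chain conditions because the inserted $\pzz$-steps keep the world fixed and both formulae involved are true there, while the spliced sub-chains are already model chains ending at the prescribed world.

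\smallskip
To derive the proposition, use~$\varphi \in \ppre(\psi)$ to write~$\varphi = \pea{\gamma_1}{\dots\pea{\gamma_k}{\psi}}$ and induct on~$k$. For~$k = 0$ we have~$\varphi = \psi$ and the one-element sequence~$(w, \varphi)$ is a model chain for~$(M, w, \varphi, \psi)$. For~$k \geq 1$, set~$\xi_1 := \pea{\gamma_2}{\dots\pea{\gamma_k}{\psi}}$; from~$\psr{M, w}{\pea{\gamma_1}{\xi_1}}$ obtain a~$\gamma_1$-successor~$v_1$ of~$w$ with~$\psr{M, v_1}{\xi_1}$, apply the Claim to get a model chain from~$(w, \pea{\gamma_1}{\xi_1})$ to~$(v_1, \xi_1)$, and splice it with the model chain for~$(M, v_1, \xi_1, \psi)$ supplied by the induction hypothesis (note~$\xi_1 \in \ppre(\psi)$).

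\smallskip
I expect the~$\prp{}$-case to be the main obstacle: a naive structural induction on programs fails because unwinding~$\prp{\delta}$ reintroduces~$\prp{\delta}$ and the unwinding need not terminate, so one must reason along a concrete finite~$R_\delta$-witness path; and in order to splice the per-step sub-chains together it is essential that the statement has been strengthened so that each sub-chain is known to terminate at the prescribed world, not merely at some world satisfying the relevant formula.
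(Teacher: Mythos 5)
Your proof is correct. Note, however, that the paper itself offers no proof of Proposition~\ref{propd:chain}: it is one of several propositions about the (deliberately omitted) standard PDL semantics that the authors state as routine facts, so there is no ``paper approach'' to compare against --- your argument fills a gap rather than duplicating or diverging from one. On its own merits the proof is sound: the strengthening of the statement to a single program $\gamma$ with a prescribed terminal world $v$ is exactly what is needed to make the sub-chains spliceable, and the lexicographic induction on $(|\gamma|, n)$ --- structural on the program, with a secondary induction on the length of a witnessing $R_\delta$-path in the $\prp{\delta}$ case --- correctly sidesteps the failure of naive structural induction caused by $\pea{\prp{\delta}}{\xi} \pzz \pea{\delta}{\pea{\prp{\delta}}{\xi}}$ reintroducing $\prp{\delta}$. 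Each case matches one clause of the definition of~$\pzz$ in Definition~\ref{defd:event}, the inserted $\pzz$-steps keep the world fixed as the model-chain definition requires, the truth condition $\psr{M, w_i}{\psi_i}$ at every intermediate pair is discharged by the semantics of the diamond, and the final peeling of the $\ppre$-prefix by induction on $k$ (with the one-element chain for $k = 0$, which the definition permits since $n \geq 0$) is exactly right.
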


\begin{definition}
  Let~$X$ and~$Y$ be sets and~$f, g: X \to \opt{Y}$ be functions.
  We say that~$g$ \emph{extends}~$f$ iff $\forall x \in X.\: f(x) \not= \bot \ximp g(x) = f(x)$.

  Let~$\seq{X}$ be the set of non-empty, finite sequences of elements in~$X$.
  For a sequence~$\fchn = x_0, \dotsc, x_n$ in~$\seq{X}$,
  let~$|\fchn| := n+1$ be the length of~$\fchn$.
  Furthermore, for all~$0 \leq i < |\fchn|$,
  we define~$\fchn_i := x_i$ and~$\fchn_{\geq i} := x_i, x_{i+1}, \dotsc, x_n$.
\end{definition}

\begin{definition}
  Let~$M = (W, R, \evv)$ be a model, $w \in W$, and~$\fset \subseteq \fml$.
  We write~$\psr{M, w}{\fset}$ iff~$\psr{M, w}{\varphi}$ for all~$\varphi \in \fset$.
  Furthermore, let~$\ann: \fmlev \to \opt{\fml}$ be an annotation for~$\fset$
  and~$\mchn: \fmlev \to \opt{\big( \seq{W \times \fml} \big)}$ a function.
  We say that~$(M, w, \mchn)$ \emph{annotated-satisfies}~$(\fset, \ann)$
  iff the following conditions hold:
  \begin{itemize}
  \item $\psr{M, w}{\fset}$
  \item $\mchn(\varphi)$ is defined for all~$\varphi \in \fset \cap \fmlev$
  \item for all~$\varphi \in \fmlev$ such that~$\fchn := \mchn(\varphi)$ is defined, we have:
    \begin{itemize}
    \item the sequence~$\fchn$ is a model chain for~$(M, w, \varphi, \extr(\varphi))$
    \item for all~$0 \leq i < |\fchn|$,
      if~$\fchn_i$ is of the form~$(w, \psi)$ for some~$\psi \in \fmlev$
      then~$\bot \not= \mchn(\psi) = \fchn_{\geq i}$
    \item if~$\ann(\varphi) \not= \bot$ then~$\fchn_1 = (w, \ann(\varphi))$.
    \end{itemize}
  \end{itemize}
  If there exists such a triple~$(M, w, \mchn)$
  we call~$(\fset, \ann)$ \emph{annotated-satisfiable}.
\end{definition}

\begin{proposition}
  \label{propd:annsat}
  Let~$M = (W, R, \evv)$ be a model, $w \in W$, $\varphi \in \fml$, and~$\fset \subseteq \fml$ be a finite set,
  such that~$\psr{M, w}{\varphi}$ and~$\psr{M, w}{\fset}$.
  \begin{enumerate}
  \item\label{enumd:annsat:a} Let~$\ann: \fmlev \to \opt{\fml}$ be an annotation for~$\fset$
    and~$\mchn: \fmlev \to \opt{\big( \seq{W \times \fml} \big)}$ a function
    such that~$(M, w, \mchn)$ annotated-satisfies~$(\fset, \ann)$.
    If we have~$\varphi \notin \fmlev$ or~$\ann(\varphi) = \bot$
    then there exists an extension~$\mchn'$ of~$\mchn$
    such that~$(M, w, \mchn')$ annotated-satisfies~$(\fset \cup \{ \varphi \}, \ann)$.
  \item\label{enumd:annsat:b} Let~$\fchn$ be a model chain for~$(M, w, \varphi, \extr(\varphi))$
    such that~$\fchn_i \not= \fchn_j$ for all~$0 \leq i, j < |\fchn|, i \not= j$.
    Then there exists a function~$\mchn': \fmlev \to \opt{\big( \seq{W \times \fml} \big)}$
    such that~$(M, w, \mchn')$ annotated-satisfies~$(\fset \cup \{ \varphi \}, \eann)$
    and~$\mchn'(\varphi) = \fchn$ if~$\varphi \in \fmlev$.
  \end{enumerate}
\end{proposition}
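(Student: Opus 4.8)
The plan is to establish part~\ref{enumd:annsat:a} first and then derive part~\ref{enumd:annsat:b} from it by a finite induction over the elements of~$\fset$. For part~\ref{enumd:annsat:a}, if $\varphi \notin \fmlev$, or if $\mchn(\varphi)$ is already defined (which includes $\varphi \in \fset \cap \fmlev$), we simply take $\mchn' := \mchn$: since $\psr{M,w}{\varphi}$ and, by the hypothesis of the proposition, $\ann(\varphi) = \bot$ whenever $\varphi \in \fmlev$, every clause of ``annotated-satisfies'' for $(\fset \cup \{\varphi\}, \ann)$ either coincides with the corresponding clause for $(\fset, \ann)$ or is vacuous, and $\ann$ is still an annotation for $\fset\cup\{\varphi\}$. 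So assume $\varphi \in \fmlev$, $\ann(\varphi) = \bot$, and $\mchn(\varphi) = \bot$. Since $\varphi \in \ppre(\extr(\varphi))$ and $\psr{M,w}{\varphi}$, Proposition~\ref{propd:chain} yields a model chain $\fchn^0$ for $(M, w, \varphi, \extr(\varphi))$; by Proposition~\ref{propd:pzz}\ref{enumd:pzz:a}, every formula occurring in $\fchn^0$ has $\extr$-value $\extr(\varphi)$, which is what lets us splice fragments of such chains together.

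Next I would repair $\fchn^0$ into a chain $\fchn$ that is \emph{compatible} with $\mchn$, in two steps. First, if some position $i \geq 1$ of $\fchn^0$ carries a pair $(w, \psi)$ with $\psi \in \fmlev$ and $\mchn(\psi) \neq \bot$, let $i^*$ be the least such $i$ and $\psi^*$ the corresponding formula, and replace the suffix of $\fchn^0$ from position $i^*$ onwards by the stored chain $\mchn(\psi^*)$; this is legitimate because $\mchn(\psi^*)$ starts with the pair $(w, \psi^*) = \fchn^0_{i^*}$ and ends at $\extr(\psi^*) = \extr(\varphi)$, so the result is again a model chain for $(M, w, \varphi, \extr(\varphi))$. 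Second, repeatedly shortcut loops — whenever two positions carry the same pair, delete the segment strictly between them — until no pair $(w, \psi)$ with $\psi \in \fmlev$ is visited twice. Call the result $\fchn$. Using the minimality of $i^*$ together with the internal coherence of $\mchn(\psi^*)$, one checks that the shortcuts in the second step can only ever affect the prefix lying before the spliced-in suffix, so $\fchn$ still contains $\mchn(\psi^*)$ as a terminal segment and still ends at $\extr(\varphi)$; in particular $\fchn_{\geq i} = \mchn(\psi)$ for every position $i$ of $\fchn$ carrying $(w, \psi)$ with $\psi \in \fmlev$ and $\mchn(\psi) \neq \bot$. Finally set $\mchn'(\varphi) := \fchn$, set $\mchn'(\psi) := \fchn_{\geq i}$ for each position $i$ with $\fchn_i = (w, \psi)$, $\psi \in \fmlev$ and $\mchn(\psi) = \bot$ (well defined since each such pair occurs at most once), and let $\mchn'$ agree with $\mchn$ elsewhere. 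Checking that $(M, w, \mchn')$ annotated-satisfies $(\fset \cup \{\varphi\}, \ann)$ is then routine: $\mchn'$ extends $\mchn$ by construction; the model-chain and coherence clauses hold at newly defined arguments because these are suffixes of $\fchn$ (again by Proposition~\ref{propd:pzz}\ref{enumd:pzz:a}) and at old arguments because of the identity $\fchn_{\geq i} = \mchn(\psi)$; and the annotation clause is vacuous for $\varphi$ and for every newly defined $\psi$, since $\ann$ is non-$\bot$ only on $\fset \cap \fmlev$ and every eventuality in $\fset$ already has its $\mchn$-value defined, hence untouched.

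For part~\ref{enumd:annsat:b}, first build a witness for $(\{\varphi\}, \eann)$ directly: if $\varphi \in \fmlev$, put $\mchn_0(\varphi) := \fchn$ and $\mchn_0(\psi) := \fchn_{\geq i}$ for each position $i$ with $\fchn_i = (w, \psi)$ and $\psi \in \fmlev$ (well defined precisely because all pairs of $\fchn$ are distinct), and leave $\mchn_0$ undefined elsewhere; if $\varphi \notin \fmlev$, take $\mchn_0$ everywhere undefined. That $(M, w, \mchn_0)$ annotated-satisfies $(\{\varphi\}, \eann)$ is immediate, using Proposition~\ref{propd:pzz}\ref{enumd:pzz:a} for the chain and coherence clauses and the fact that $\eann$ is everywhere undefined for the annotation clause. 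Now enumerate $\fset = \{\chi_1, \dots, \chi_m\}$ and apply part~\ref{enumd:annsat:a} repeatedly: given a witness $\mchn_k$ for $(\{\varphi\} \cup \{\chi_1, \dots, \chi_k\}, \eann)$, part~\ref{enumd:annsat:a} applies to add $\chi_{k+1}$ (its hypothesis holds since $\eann(\chi_{k+1}) = \bot$, and $\psr{M,w}{\chi_{k+1}}$ because $\chi_{k+1} \in \fset$), producing an extension $\mchn_{k+1}$. After $m$ steps, $\mchn' := \mchn_m$ annotated-satisfies $(\fset \cup \{\varphi\}, \eann)$, and since each step only extends the previous function, $\mchn'(\varphi) = \mchn_0(\varphi) = \fchn$ when $\varphi \in \fmlev$.

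The main obstacle is entirely inside part~\ref{enumd:annsat:a}: proving that the two surgery operations on $\fchn^0$ interact correctly — that splicing in an already-stored chain $\mchn(\psi^*)$ and then eliminating loops never produces a pair visited twice with incompatible suffixes, and never contradicts a chain already fixed by $\mchn$. This hinges on choosing the \emph{least} splice point and on the internal coherence that $\mchn$ already enjoys, and it uses Proposition~\ref{propd:pzz}\ref{enumd:pzz:a} to guarantee that all the chain fragments being glued share the endpoint $\extr(\varphi)$. The remaining work — termination of loop elimination, the bookkeeping that $\ann$ remains an annotation for $\fset \cup \{\varphi\}$, and the clause-by-clause verification of ``annotated-satisfies'' — is routine.
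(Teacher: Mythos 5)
The paper states Proposition~\ref{propd:annsat} without proof, so there is no official argument to compare against; judged on its own, your proof is correct and supplies exactly the missing content. The only genuinely delicate point is the one you isolate: defining $\mchn'$ on the eventualities met along the new chain without contradicting the suffixes already frozen by $\mchn$. Your splice-at-the-least-position construction handles this soundly, because the minimality of $i^*$ guarantees that every pair $(w,\psi)$ with $\psi\in\fmlev$ occurring strictly before the splice point has $\mchn(\psi)=\bot$ (position~$0$ included, since you are in the sub-case $\mchn(\varphi)=\bot$, and every eventuality of $\fset$ already has a defined $\mchn$-value), whereas every such pair inside the spliced segment has $\mchn(\psi)\not=\bot$ by the coherence clause, and occurs there only once (two distinct positions $j<j'$ of $\mchn(\psi^*)$ carrying the same pair would force $\mchn(\psi^*)_{\geq j}=\mchn(\psi^*)_{\geq j'}$, which is impossible for suffixes of different lengths). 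Hence the two regions are disjoint and loop elimination is confined to the prefix, as you claim. One small caution: the elimination must be applied only to duplicate pairs of the form $(w,\psi)$ with $\psi\in\fmlev$ --- cutting at a duplicated pair whose world differs from $w$ could cross the splice boundary --- but your stated stopping condition indicates that this restriction is intended, and those are the only duplicates that threaten the well-definedness of $\mchn'(\psi):=\fchn_{\geq i}$. The verification that the annotation clause is vacuous at all newly defined arguments, and the reduction of part~\ref{enumd:annsat:b} to part~\ref{enumd:annsat:a} by finite iteration over $\fset$ starting from the witness built directly from the duplicate-free chain, are both sound.
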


For a node~$x \in V$, we say that~$(M, w, \mchn)$ annotated-satisfies~$x$
iff~$(M, w, \mchn)$ annotated-satisfies~$(\fset_x, \ann_x)$.
Similarly, we say that~$x$ is annotated-satisfiable
iff~$(\fset_x, \ann_x)$ is annotated-satisfiable.

\begin{definition}
  Let~$M = (W, R, \evv)$ be a model, $w, v \in W$,
  and~$\mchn: \fmlev \to \opt{\big( \seq{W \times \fml} \big)}$ be a function.
  \begin{itemize}
  \item Let~$x \in V$ be a state.
    We say that~$(M, w)$ \emph{realises}~$x$
    iff~$\psr{M, w}{\fset_x}$ and~$\npsr{M, w}{(\fset_x \cup \fset)}$ for all~$\fset \in \alt_x$.
  \item Let~$x$ be an $\alpha$- or a $\beta$-node or a special node.
    We say that~$(M, v, w, \mchn)$ \emph{realises}~$x$
    iff~$\psr{M, v}{\fset_{\pst_x}}$,
    $\npsr{M, v}{(\fset_{\pst_x} \cup \fset)}$ for all~$\fset \in \alt_x$,
    $\prel{v}{R_{\pap_x}}{w}$
    and~$(M, w, \mchn)$ annotated-satisfies~$x$.
  \end{itemize}
  We call a node~$x \in V$ \emph{realisable}
  iff there exists an appropriate tuple~$(M, w)$ or~$(M, v, w, \mchn)$ which realises~$x$.
\end{definition}  

\begin{lemma}
  \label{lemmad:annsatprop}
  Let~$M = (W, R, \evv)$ be a model, $w, v \in W$,
  and~$\mchn: \fmlev \to \opt{\big( \seq{W \times \fml} \big)}$ be a function,
  \begin{enumerate}
  \item\label{enumd:annsatprop:a}
    Let~$x \in V$ be an $\alpha$- or a $\beta$-node such that~$x$ is up-to-date
    and~$(M, v, w, \mchn)$ realises~$x$.
    Then there exists a child~$y$ of~$x$ and an extension~$\mchn'$ of~$\mchn$
    such that~$(M, v, w, \mchn')$ realises~$y$.
  \item\label{enumd:annsatprop:b}
    Let~$x \in V$ be a state such that~$(M, w)$ realises~$x$.
    Moreover, let~$\pea{l}{\psi} \in \fset_x$
    and~$y \in V$ be the successor of~$\pea{l}{\psi}$ such that~$\alt_y \subseteq \alt_x$.
    Furthermore, let~$\fchn$ be a model chain for~$(M, w, \pea{l}{\psi}, \extr(\pea{l}{\psi}))$
    such that~$\fchn_i \not= \fchn_j$ for all~$0 \leq i, j < |\fchn|, i \not= j$,
    and let~$v' \in W$ such that~$\fchn_1 = (v', \psi)$.
    Then there exists a function~$\mchn': \fmlev \to \opt{\big( \seq{W \times \fml} \big)}$
    such that~$(M, w, v', \mchn')$ realises~$y$
    and~$\mchn'(\psi) = \fchn_{\geq 1}$ if~$\psi \in \fmlev$.
  \item\label{enumd:annsatprop:c}
    Let~$x \in V$ be a special node such that~$x$ is up-to-date
    and~$(M, v, w, \mchn)$ realises~$x$.
    Then there either exists a child~$y$ of~$x$ and an extension~$\mchn'$ of~$\mchn$
    such that~$y$ is not a state and~$(M, v, w, \mchn')$ realises~$y$,
    or~$(M, w)$ realises~$\chld(x, \mrk)$.
  \end{enumerate}
\end{lemma}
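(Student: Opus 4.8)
The plan is to prove all three parts by the same ``single local step'' argument: from a tuple realising $x$ one extracts a child of $x$ together with the extra truths and model chains needed to realise it, pushing truth through the $\alpha/\beta$-decomposition with Prop.~\ref{propd:axioms} and repairing the annotation/model-chain invariants with Prop.~\ref{propd:annsat}. The one observation that makes this work: inspecting Table~\ref{tabd:alphabeta}, no $\alpha$- or $\beta$-formula is a \feal-formula, so if $(M,w,\mchn)$ annotated-satisfies $(\fset,\ann)$ and $\vartheta\in\fset\cap\fmlev$ is an $\alpha$- or $\beta$-formula with $\ann(\vartheta)=\bot$, then the first edge of the model chain $\mchn(\vartheta)$ cannot be an $R$-step, hence $\mchn(\vartheta)_1=(w,\psi)$ with $\vartheta\pzz\psi$, and by the definition of $\pzz$ and Table~\ref{tabd:alphabeta} this $\psi$ is the unique $\vartheta_1$ if $\vartheta$ is an $\alpha$-formula and lies in $\{\vartheta_1,\vartheta_2\}$ if $\vartheta$ is a $\beta$-formula. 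This is the step I expect to be the crux: when the classifying formula of a $\beta$-node is an unannotated eventuality, the child that can be realised is not free but pinned by this first reduction step of $\mchn(\vartheta)$, and the reason it is a $\pzz$-step onto some $\vartheta_i$ is exactly the delicate point where the PDL and the converse machinery must fit together. Everything else is routine bookkeeping (which formulae are genuinely new, and the side conditions of Prop.~\ref{propd:annsat}).

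For part~\ref{enumd:annsatprop:a} let $\vartheta$ be the $\alpha$- resp.\ $\beta$-formula on which \expand{} acted; as $x$ is realisable, $\fset_x$ has no immediate contradiction (it would contradict $\psr{M,w}{\fset_x}$) and no ``at a world'' cycle (a cyclic $\ann_x$ would force the chains demanded by annotated-satisfaction to be infinite), so \expand{} created $x$'s children. Pick the child $y$: for an $\alpha$-node its unique child, $\fset_y=\fset_x\cup\{\vartheta_1,\dotsc,\vartheta_k\}$; for a $\beta$-node the $y_i$ with $\fset_{y_i}=\fset_x\cup\{\vartheta_i\}$ where $\vartheta_i$ is the formula component of $\mchn(\vartheta)_1$ if $\vartheta\in\fmlev$, and otherwise any $\vartheta_i$ with $\psr{M,w}{\vartheta_i}$ (one exists by Prop.~\ref{propd:axioms}). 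Then $\psr{M,w}{\fset_y}$ by Prop.~\ref{propd:axioms}; the possibly-new annotation entry $\ann_y(\vartheta)$ is already respected by $\mchn$ by the observation and the invariants on $\ann$ (recall $\ann_x$ maps $\vartheta$ only to a $\pzz$-successor, and is undefined off $\fset_x$); and repeated use of Prop.~\ref{propd:annsat}\ref{enumd:annsat:a} for the formulae of $\fset_y\setminus\fset_x$ extends $\mchn$ to $\mchn'$ with $(M,w,\mchn')$ annotated-satisfying $(\fset_y,\ann_y)$. Finally $\pst_y=\pst_x$ and $\pap_y=\pap_x$ (Prop.~\ref{propd:facts}\ref{enumd:facts:a}) and $\alt_y\subseteq\alt_x$ (the alternative set of an $\alpha/\beta$-node is the union of those of its children, and $x$ is up-to-date), so $(M,v,w,\mchn')$ realises $y$.

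Part~\ref{enumd:annsatprop:b} is the same step into a state-successor. Since $\pea{l}{\psi}$ \emph{is} a \feal-formula, $\fchn$ makes an $R$-step at index~$0$, so $\fchn_0=(w,\pea{l}{\psi})$ and $\fchn_1=(v',\psi)$ give $\prel{w}{R_l}{v'}$; with $\psr{M,w}{\fset_x}$ and $\fset_y=\{\psi\}\cup\{\chi\mid\paa{l}{\chi}\in\fset_x\}$ this yields $\psr{M,v'}{\fset_y}$. The tail $\fchn_{\geq1}$ is a model chain for $(M,v',\psi,\extr(\psi))$ with pairwise-distinct entries (here $\extr(\psi)=\extr(\pea{l}{\psi})$ by Prop.~\ref{propd:pzz}\ref{enumd:pzz:a}), so Prop.~\ref{propd:annsat}\ref{enumd:annsat:b} produces a function annotated-satisfying $(\{\psi\},\eann)$ that maps $\psi$ to $\fchn_{\geq1}$, which I extend via Prop.~\ref{propd:annsat}\ref{enumd:annsat:a} over the finitely many $\chi$ with $\paa{l}{\chi}\in\fset_x$ to obtain $\mchn'$ annotated-satisfying $(\fset_y,\eann)$ with $\mchn'(\psi)=\fchn_{\geq1}$ still. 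With $\pst_y=x$, $\pap_y=l$ and the hypothesis $\alt_y\subseteq\alt_x$, the tuple $(M,w,v',\mchn')$ realises $y$.

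For part~\ref{enumd:annsatprop:c}, first note $x$ must be compatible with $\pst_x$: otherwise $\fset_\alt:=\{\varphi\mid\paa{\pnnp{\pap_x}}{\varphi}\in\fset_x\}\setminus\fset_{\pst_x}\neq\emptyset$ would be an alternative set of $x$, yet $\psr{M,w}{\fset_x}$ together with $\prel{v}{R_{\pap_x}}{w}$ (hence $\prel{w}{R_{\pnnp{\pap_x}}}{v}$) forces $\psr{M,v}{\varphi}$ for every such $\varphi$, contradicting $\npsr{M,v}{(\fset_{\pst_x}\cup\fset_\alt)}$. So \detsts{} ran \detstsd{} on $x$, and in the final graph (no pending rule) $x$'s children are the state $y_0=\chld(x,\mrk)$ with $\fset_{y_0}=\fset_x$, together with non-state children, each with set $\fset_x\cup\fset_i$ for some $\emptyset\neq\fset_i$, annotation $\ann_x$ and the same parent state and program as $x$, with $\alt_x$ their union and $\chld(x,\fset_i)$ present for every $\fset_i\in\alt_{y_0}$. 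Now split: if $\npsr{M,w}{(\fset_x\cup\fset_i)}$ for \emph{every} $\fset_i\in\alt_{y_0}$, then since $\psr{M,w}{\fset_{y_0}}$ (it equals $\fset_x$) the pair $(M,w)$ realises $y_0$ --- the second alternative. Otherwise fix $\fset^*\in\alt_{y_0}$ with $\psr{M,w}{(\fset_x\cup\fset^*)}$, put $y^*=\chld(x,\fset^*)$, and extend $\mchn$ to $\mchn'$ by Prop.~\ref{propd:annsat}\ref{enumd:annsat:a} over $\fset^*\setminus\fset_x$ (legitimate since $\ann_{y^*}=\ann_x$ is undefined on these); using $\pst_{y^*}=\pst_x$, $\pap_{y^*}=\pap_x$, $\alt_{y^*}\subseteq\alt_x$ and $\prel{v}{R_{\pap_x}}{w}$, the tuple $(M,v,w,\mchn')$ realises the non-state child $y^*$ --- the first alternative.
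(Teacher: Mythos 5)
Your proposal is correct and follows essentially the same route as the paper's proof: the same case split in part~\ref{enumd:annsatprop:a} on whether the principal formula is an eventuality (using the first step of $\mchn(\vartheta)$ to pin down the $\beta$-child), the same use of Prop.~\ref{propd:axioms} and Prop.~\ref{propd:annsat} to extend $\mchn$, and the same dichotomy in part~\ref{enumd:annsatprop:c} between realising $\chld(x,\mrk)$ and realising the alternative child determined by a satisfied $\fset_i\in\alt_{y_0}$. Your additional observations (that realisability forces compatibility of a special node with its parent state, and the explicit handling of $\alpha_2$ for test formulae) are correct refinements that the paper leaves implicit.
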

\begin{proof}
\noindent\ref{enumd:annsatprop:a}:
We distinguish whether or not the principal formula~$\varphi \in \fset_x$,
which is decomposed in~$x$, is an eventuality.

If~$\varphi \in \fmlev$ then~$\fchn := \mchn(\varphi)$ is a model chain for~$(M, w, \varphi, \extr(\varphi))$.
Since~$\varphi$ is not a \feal-formula,
we therefore have~$\fchn_1 = (w, \varphi')$ for some~$\varphi' \in \fml$
such that~$\varphi \pzz \varphi'$ and~$\psr{M, w}{\varphi'}$.
Due to the definition of~$\pzz$ and \expand{}, there exists a child~$y \in V$ of~$x$
such that~$\fset_y = \fset_x \uplus \{ \varphi' \}$ and~$\ann_y = \updfkt{\ann_x}{\varphi}{\varphi'}$.
Together with the fact that~$(M, w, \mchn)$ annotated-satisfies~$x$,
is is not too hard to see that~$(M, w, \mchn)$ also annotated-satisfies~$y$.
Since~$x$ is up-to-date and by definition of \detstsb{}, we have~$\alt_y \subseteq \alt_x$.
Together with Prop.~\ref{propd:facts}\ref{enumd:facts:a}
and the assumption that~$(M, v, w, \mchn)$ realises~$x$,
this implies that~$(M, v, w, \mchn)$ realises~$y$.

If~$\varphi \notin \fmlev$, let~$y$ be a child of~$x$ such that~$\psr{M, w}{\fset_y}$.
The existence of~$y$ follows from~$\psr{M, w}{\fset_x}$ and Prop.~\ref{propd:axioms}.
Furthermore, let~$\varphi_1, \dotsc, \varphi_k (1 \leq k \leq 2)$ be all formulae
which were actually added to~$\fset_x$ to create~$\fset_y$ in \expand{}.
If~$\varphi_i (1 \leq k \leq 2)$ is an eventuality,
we have~$\ann_y(\varphi_i) = \bot$ because of~$\ann_x = \ann_y$ and~$\varphi_i \notin \fset_x$.
We can then conclude that~$(M, w, \mchn')$ annotated-satisfies~$y$
for some extension~$\mchn'$ of~$\mchn$
by repeatedly applying Prop.~\ref{propd:annsat}\ref{enumd:annsat:a}.
Since~$x$ is up-to-date and by definition of \detstsb{}, we have~$\alt_y \subseteq \alt_x$.
Together with Prop.~\ref{propd:facts}\ref{enumd:facts:a}
and the assumption that~$(M, v, w, \mchn)$ realises~$x$,
this implies that~$(M, v, w, \mchn')$ realises~$y$.

\noindent\ref{enumd:annsatprop:b}:
It is easy to see that~$\fchn_{\geq 1}$ is a model chain for~$\psi$
and that we have~$\prel{w}{R_b}{v'}$ and~$\psr{M, v'}{\psi}$.
By construction in \expand{}, we have~$\ann_y = \eann$
and~$\fset_y = \{ \psi \} \cup \fset$ for some~$\fset \subseteq \fml$ such that~$\paa{l}{\fset} \subseteq \fset_x$.
Since~$(M, w)$ realises~$x$,
we have~$\psr{M, w}{\paa{l}{\fset}}$ and thus~$\psr{M, v'}{\fset}$.
We can thus apply Prop.~\ref{propd:annsat}\ref{enumd:annsat:b}
on~$M$, $v'$, $\psi$, $\fset$, and~$\fchn_{\geq 1}$
and obtain a function~$\mchn': \fmlev \to \opt{\big( \seq{W \times \fml} \big)}$
such that~$(M, v', \mchn')$ annotated-satisfies~$(\fset \cup \{ \psi \}, \eann)$, and hence~$y$.
By definition of \detstsb{} and the assumptions that~$(M, w)$ realises~$x$ and~$\alt_y \subseteq \alt_x$,
this implies that~$(M, w, v', \mchn')$ realises~$y$
and~$\mchn'(\psi) = \fchn_{\geq 1}$ if~$\psi \in \fmlev$.

\noindent\ref{enumd:annsatprop:c}:
Assume that~$(M, w)$ does not realise~$y := \chld(x, \mrk)$.
As~$(M, v, w, \mchn)$ realises~$x$, we have~$\psr{M, w}{\fset_x}$.
Furthermore, we have~$\fset_x = \fset_y$ by construction in \expand{}.
Hence there must exist a~$\Gamma \in \alt_x$ such that~$\psr{M, w}{(\fset_y \cup \Gamma)}$.
By construction in \expand{}, we have~$\fset_y = \fset_x$.
Moreover, by construction in \detstsd{} and since~$x$ is up-to-date,
the node~$x$ has a child~$z$ which is not a state
such that~$\fset_z = \fset_x \cup \Gamma$, $\ann_x = \ann_z$, $\pst_x = \pst_z$, $\pap_x = \pap_z$,
and~$\alt_z \subseteq \alt_x$.
In particular,
we have~$\ann_z(\chi) = \bot$ for all eventualities in~$\Gamma$.
We can then conclude that~$(M, w, \mchn')$ annotated-satisfies~$z$
for some extension~$\mchn'$ of~$\mchn$
by repeatedly applying Prop.~\ref{propd:annsat}\ref{enumd:annsat:a}.
Together with the assumption that~$(M, v, w, \mchn)$ realises~$x$,
this implies that~$(M, v, w, \mchn')$ realises~$z$.
\qed
\end{proof}

\begin{definition}
  For a node~$x \in V$ and an eventuality~$\varphi \in \fset_y \cap \fmlev$,
  we define the property~$\ptya(x, \varphi)$ as
  ``$x \text{ is open } \xand \prs_x(\varphi) \not= \bot \xand
  \forall (y, \psi) \in \prs_x(\varphi).\: \big(\sts_y \not= \closed(\cdot) \ximp \ptya(y, \psi) \big)$''.
\end{definition}
Note that although~$\ptya(x, \varphi)$ is recursive,
it is well-defined because of Prop.~\ref{propd:facts}\ref{enumd:facts:b}.

\begin{lemma}
  \label{lemmad:property}
  In this lemma, let~$G$ be the graph at any time between two rule applications
  when Rule~3 is not applicable.
  Let~$x \in V$ and~$\varphi \in \fset_x \cap \fmlev$ such that~$\ptya(x, \varphi)$ holds.
  Furthermore, let~$y$ be a child of~$x$ which is not closed.
  If~$x$ is a state or a special node, we require additionally
  that~$\varphi = \pea{l}{\chi}$ for some~$l \in \lprg$ and some~$\chi \in \fmlev$.
  If~$x$ is a state, we also require that~$y$ is the successor of~$\varphi$.
  If~$x$ is a state we define~$\psi := \chi$, else we define~$\psi := \varphi$.
  In all cases we have~$\ptya(y, \psi)$.
\end{lemma}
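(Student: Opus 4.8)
The plan is a case analysis on the kind of $x$ ($\alpha$- or $\beta$-node, state, or special node). In each case, since Rule~3 is not applicable, $x$ is up-to-date, so $\sts_x = \open(\prs_x,\alt_x)$ is exactly what \detstsb, \detstss, or \detstsd{} returns on the current graph, and $\prs_x = \filter(x,\prs)$ for the provisional $\prs$ built there from calls to \detstscint{} on the children of $x$. For the special-node case I would first observe that, as $\varphi = \pea{l}{\chi}$ with $l \in \lprg$, the formula $\varphi$ is neither an $\alpha$- nor a $\beta$-formula, so $\ann_x$ is undefined on it and $\annchn(x,\varphi) = \varphi$; hence \detstsd{} processes $\varphi$ via the calls $\detstscint(x,y_i,\varphi)$. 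Thus in every case $\detstscint(x,y,\psi)$ is literally one of the top-level calls to \detstscint{} that build the provisional value $\prs(\varphi)$: for a state it is the single call $\detstscint(x,y,\chi)$ with $y$ the successor of $\pea{l}{\chi}$ and $\psi = \chi$; otherwise $\psi = \varphi$ and $y$ is one of the children $y_i$.

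Next I would unpack $\ptya(x,\varphi)$ together with the definition of \filter. From $\prs_x(\varphi) \not= \bot$ and the two guards in \filter{} we get that, writing $\Delta := \{\varphi\}\cup\prsreach(\prs,x,\varphi)$, the provisional value $\prs(\rho)$ is $\not= \bot$ for every $\rho \in \Delta$; consequently every call to \detstscint{} arising while the values $\prs(\rho)$ ($\rho\in\Delta$) are computed returns a set $\not=\bot$, because $\bot$ propagates upward through the recursion of \detstscint. A direct calculation with the definitions of $\prsreach$ and \filter{} shows that $\rho\in\Delta$ implies $\{\rho\}\cup\prsreach(\prs,x,\rho)\subseteq\Delta$ (using, in particular, that $(x,\xi)$-pairs carry $\xi\in\fset_x\cap\fmlev$); hence $\prs_x(\rho)$ is $\not=\bot$ and $\prs_x(\rho)\subseteq\prs_x(\varphi)$, which together with $\ptya(x,\varphi)$ gives $\ptya(x,\rho)$ for every $\rho\in\Delta$.

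The heart of the argument is then the following sub-claim, proved by induction on the (finite, well-founded) recursion tree of the \detstscint{}-calls that compute the values $\prs(\rho)$, $\rho\in\Delta$: for every call $\detstscint(x,z,\xi)$ occurring there, if $z$ is not closed then $\ptya(z,\xi)$. A call is a leaf either when $z$ is closed (vacuous) or when $z$ is undefined or $\xnot z\nords x$, in which case it returns $\{(z,\xi)\}$; the remaining leaf possibility $\prs_z(\xi)=\bot$ does not arise here since all these calls return $\not=\bot$. In a leaf with $z\not= x$, one propagates $(z,\xi)$ upward to see that $(z,\xi)\in\prs(\rho)$ for the $\rho\in\Delta$ whose sub-tree this is; since $z\not= x$ and \filter{} keeps the non-$x$ pairs of $\rho$'s provisional value, $(z,\xi)\in\prs_x(\rho)\subseteq\prs_x(\varphi)$, so $\ptya(z,\xi)$ follows from $\ptya(x,\varphi)$. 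In a leaf with $z=x$, one gets $(x,\xi)\in\prs(\rho)$, hence $\xi\in\prsreach(\prs,x,\rho)\subseteq\Delta$, and $\ptya(x,\xi)$ is already known from the second paragraph. At an internal node, $\detstscint(x,z,\xi)=\bigcup_j\detstscint(x,z_j,\xi_j)$ over $(z_j,\xi_j)\in\prs_z(\xi)$ with $z$ open, $z\nords x$, and $\prs_z(\xi)\not=\bot$; the induction hypothesis on the sub-calls gives ``$z_j$ closed or $\ptya(z_j,\xi_j)$'' for each $j$, which is exactly the missing conjunct in the definition of $\ptya(z,\xi)$. Applying the sub-claim to the call $\detstscint(x,y,\psi)$ and using that $y$ is not closed yields $\ptya(y,\psi)$.

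I expect the bookkeeping around \filter{} to be the main obstacle: one must check that every pair produced inside the recursion of \detstscint{} is either preserved verbatim in $\prs_x(\varphi)$ (when its node is $\not= x$) or correctly absorbed into $\prs_x$ through $\prsreach$ (when its node is $x$), and this is where the inclusions $\prsreach(\prs,x,\rho)\subseteq\prsreach(\prs,x,\varphi)$ and $\prs_x(\rho)\subseteq\prs_x(\varphi)$ for $\rho\in\Delta$ carry the load; Prop.~\ref{propd:facts}\ref{enumd:facts:b} and~\ref{enumd:facts:f} are the relevant background facts. A secondary point to keep in mind is that $\ptya$ bundles openness with its other conjuncts, so one does not argue separately that $y$ is open: if $y$ were undefined, the leaf analysis would already place $(y,\psi)$ in $\prs_x(\varphi)$ and derive $\ptya(y,\psi)$, whose truth includes that $y$ is open.
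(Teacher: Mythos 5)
Your proof is correct and matches the paper's argument in substance: both hinge on unwinding the \detstscint{} recursion that produced~$\prs_x(\varphi)$, on the observation that \filter{} keeps the non-$x$ pairs and absorbs $x$-pairs through~$\prsreach$ so that~$\prs_x(\rho) \subseteq \prs_x(\varphi)$ for every~$\rho$ reachable from~$\varphi$, and on discharging the leaves of the recursion against~$\ptya(x, \varphi)$. The only difference is organisational: the paper argues by contradiction, extracting a finite chain of $\ptya$-failures and truncating it at the first node that is not~$\nords x$, which is exactly the contrapositive of your direct induction on the recursion tree.
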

\begin{proof}
We assume that~$\ptya(y, \psi)$ does not hold and derive a contradiction.
Because of Prop.~\ref{propd:facts}\ref{enumd:facts:b},
this means that there exists a finite sequence~$(y_0, \psi_0), \dotsc, (y_n, \psi_n)$
of node-eventuality pairs such that:
\begin{itemize}
\item $y_0 = y$, $\psi_0 = \psi$, and~$y_n$ is not closed;
\item $y_i$ is open and $(y_{i+1}, \psi_{i+1}) \in \prs_{y_i}(\psi_i) \not= \bot$ for all~$0 \leq i < n$;
\item either~$y_n$ is not open or~$\prs_{y_n}(\psi_n) = \bot$.
\end{itemize}
Note that~$n = 0$ is possible
and that~$\ptya(y_i, \psi_i)$ does not hold for all~$0 \leq i \leq n$.

Let~$m \in {0, \dotsc, n}$ be the smallest index such that not~$y_m \nords x$;
or~$m := n$ if~$y_i \nords x$ for all~$0 \leq i \leq n$.
Then the sequence~$(y_0, \psi_0), \dotsc, (y_m, \psi_m)$ has the following properties,
most of them are inherited from above:
\begin{itemize}
\item $y_0 = y$, $\psi_0 = \psi$, and~$y_m$ is not closed;
\item $y_i$ is open and $(y_{i+1}, \psi_{i+1}) \in \prs_{y_i}(\psi_i) \not= \bot$ for all~$0 \leq i < n$;
\item either~$y_n$ is not open or~$\prs_{y_n}(\psi_n) = \bot$ or not~$y_m \nords x$.
\end{itemize}
The following arguments rely on the fact that all nodes are up-to-date and
on the definitions of \detstscint{} and
-- depending on whether~$x$ is an $\alpha$/$\beta$-node, a state, or a special node --
\detstsb{} or \detstss{} or \detstsd{}, respectively.

If~$\prs_{y_m}(\psi_m) = \bot$, it is not too hard to see
that $\detstscint(x, \varphi) = \bot$ and hence~$\prs_x(\varphi) = \bot$
which contradicts the assumption that~$\ptya(x, \varphi)$ holds.

If~$y_n$ is not open or not~$y_m \nords x$,
it is not too hard to see
that we have $(y_m, \psi_m) \in \detstscint(x, \varphi)$.
Together with the definition of \filter{},
this implies~$(y_m, \psi_m) \in \prs_x(\varphi) \cup \{ (x, \chi) \mid \chi \in \fset_x \cap \fmlev \}$.
If~$(y_m, \psi_m) \in \prs_x(\varphi) \cup \{ (x, \varphi) \}$
then~$\ptya(y_m, \psi_m)$ follows from~$\ptya(x, \varphi)$,
but this contradicts the fact that~$\ptya(y_m, \psi_m)$ does not hold.
If~$(y_m, \psi_m) = (x, \chi)$ for some~$\chi \in \fset_x \cap \fmlev$ with~$\chi \not= \varphi$,
we have~$\bot \not= \prs_x(\chi) \subseteq \prs_x(\varphi)$
by construction in \filter{}
and because of the transitivity in the function~$\prsreach$.
Hence~$\ptya(x, \varphi)$ implies~$\ptya(x, \chi) = \ptya(y_m, \psi_m)$
which again is a contradiction.
\qed
\end{proof}

\begin{lemma}
  \label{lemmad:realise}
  If a node in~$G$ is closed then it is not realisable.
\end{lemma}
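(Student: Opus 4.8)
The plan is to prove the contrapositive by induction on the step $c(x)$ of the run of \issat($\phi$) at which a node $x$ first acquires status $\closed(\cdot)$; since each rule application closes at most one node, $c$ is injective and the induction is well-founded. Throughout, recall that Rule~3 is inapplicable in the final graph, so every node is up-to-date, as Lemmas~\ref{lemmad:property} and~\ref{lemmad:annsatprop} require. A node closed during \expand{} has $\alt_x=\emptyset$, and either $\fset_x\ni\varphi,\pneg{\varphi}$ — whence $\fset_x$ is unsatisfiable by Prop.~\ref{propd:axioms} and no $(M,w)$ or $(M,v,w,\mchn)$ realises $x$ — or $\varphi\in\fmlev\cap\fset_x$ with $\annchn(x,\varphi)=\bot$, in which case $x$ is not a state and the $\ann_x$-chain from $\varphi$ runs into a cycle $\varphi_0,\dots,\varphi_n,\varphi_0$ of eventualities; were $(M,v,w,\mchn)$ to realise $x$, the third clause of annotated-satisfaction would force $\mchn(\varphi_{i+1})=\mchn(\varphi_i)_{\ge1}$ all round the cycle, making the finite chain $\mchn(\varphi_0)$ a proper suffix of itself — impossible, as it ends at $\extr(\varphi_0)\notin\fmlev$.

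A special node closed in the incompatible branch of \detsts{} has $\alt_x=\{\fset_\alt\}$ with $\fset_\alt=\{\psi\mid\paa{\pnnp{\pap_x}}{\psi}\in\fset_x\}\setminus\fset_{\pst_x}$; if $(M,v,w,\mchn)$ realised $x$, then $(v,w)\in R_{\pap_x}$, hence $(w,v)\in R_{\pnnp{\pap_x}}$, so $\psr{M,w}{\fset_x}$ yields $\psr{M,v}{\fset_\alt}$, contradicting $\npsr{M,v}{(\fset_{\pst_x}\cup\fset_\alt)}$. A node closed by \detstsb{}, \detstss{} or \detstsd{} because all its relevant children resp.\ successors are closed has all those closed at steps $<c(x)$, hence (induction) none of them realisable; yet if $x$ were realisable, Lemma~\ref{lemmad:annsatprop}\ref{enumd:annsatprop:a} (for an $\alpha/\beta$-node), \ref{enumd:annsatprop:b} (for a state, applied to a closed successor, with a model chain supplied by $\psr{M,w}{\fset_x}$ and Prop.~\ref{propd:chain}, made duplicate-free by loop-contraction, and $\alt_y\subseteq\alt_x$ from the procedure), or~\ref{enumd:annsatprop:c} (for a special node: the non-state child if it arises, else the $\mrk$-child, which is the equally closed state) produces a realisable closed child — contradiction.

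The remaining case, closure by Rule~4, is the main obstacle. Then, just before the rule fires, $x$ is open, all nodes are up-to-date, and $\prs_x(\varphi)=\emptyset$ for some $\varphi\in\fset_x\cap\fmlev$, necessarily a \feal-formula if $x$ is a state (there $\prs$ is $\bot$ off the \feal-formulae), so $\ptya(x,\varphi)$ holds vacuously. Assuming $x$ realisable, I argue by minimal counterexample over \emph{bad configurations}: an open, realisable node $z$ with an eventuality $\chi\in\fset_z\cap\fmlev$ (a \feal-formula if $z$ is a state) such that $\ptya(z,\chi)$, ranked lexicographically by $\mu(z,\chi):=(|\fchn|,\mathrm{flag}(z))$, where $\fchn$ is the model chain $\mchn(\chi)$ of the realising $\mchn$ when $z$ is not a state and a shortest model chain for $(M,w,\chi,\extr(\chi))$ (existing by Prop.~\ref{propd:chain}) when $z$ is a state, and $\mathrm{flag}(z)$ is $0$ for a state and $|\pcl(\phi)|-|\fset_z|+1$ otherwise. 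Since $(x,\varphi)$ is a bad configuration, a $\mu$-minimal one $(z,\chi)$ exists, and I derive a strictly $\mu$-smaller one.

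If $z$ is a state, the successor $y$ of $\chi=\pea{l}{\chi'}$ is open (else \detstss{} would close $z$), $\chi'\in\fmlev$, Lemma~\ref{lemmad:property} gives $\ptya(y,\chi')$, and Lemma~\ref{lemmad:annsatprop}\ref{enumd:annsatprop:b} a realising $\mchn'$ with $\mchn'(\chi')=\fchn_{\ge1}$ — a strictly shorter chain. If $z$ is an $\alpha/\beta$-node, Lemma~\ref{lemmad:annsatprop}\ref{enumd:annsatprop:a} gives a realisable child $y$, which cannot be closed (it would be closed before $c(x)$, against the induction hypothesis), hence is open with $\ptya(y,\chi)$ by Lemma~\ref{lemmad:property}, the same chain, and $\mathrm{flag}(y)<\mathrm{flag}(z)$ since $\fset_z\subsetneqq\fset_y$ by Prop.~\ref{propd:facts}\ref{enumd:facts:a}. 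If $z$ is a special node, we may first replace $\chi$ by $\annchn(z,\chi)$ without increasing $\mu$ — it is defined ($z$ open, no ``at a world'' cycle), is a \feal-eventuality (else $\prs_z(\chi)=\bot$, against $\ptya$), has $\prs_z$ agreeing with $\chi$'s so preserving $\ptya$, and its model chain is no longer by the annotated-satisfaction clauses — and then Lemma~\ref{lemmad:annsatprop}\ref{enumd:annsatprop:c} either gives a realisable non-state child (open, $\ptya$ by Lemma~\ref{lemmad:property}, larger $\fset$, same chain) or shows $\chld(z,\mrk)$ is realised, and that state — open unless already closed, where the induction hypothesis applies — with this $\chi$ and a shortest model chain has $\mathrm{flag}=0<\mathrm{flag}(z)$ and a chain no longer than $\fchn$. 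In every case $\mu$ strictly decreases, contradicting minimality, so $x$ is not realisable. The delicate points are transforming the tracked eventuality correctly at a state (to the inner $\chi'$) and at a special node (along the $\annchn$-chain), and checking that the lexicographic measure is genuinely forced down at each node type — in particular that it tolerates the chain length staying fixed across $\alpha/\beta$- and special nodes precisely because $\fset_z$ strictly grows there.
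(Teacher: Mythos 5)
Your proof is correct and follows essentially the same route as the paper's: induction on the order in which nodes are closed, the same easy cases discharged via Prop.~\ref{propd:axioms}, Prop.~\ref{propd:chain} and Lemma~\ref{lemmad:annsatprop}, and for Rule~4 a descent along a fixed model chain through realisable open nodes with $\ptya$ maintained by Lemma~\ref{lemmad:property} --- your lexicographic measure $(|\fchn|,\,|\pcl(\phi)|-|\fset_z|)$ is an explicit repackaging of the paper's argument that an arbitrarily long such sequence would have to contain more than $|\fchn|$ states. Two small points in your favour: you explicitly handle the closure of an incompatible special node in \detsts{} (a case the paper's case analysis silently skips, though it is easy, exactly as you argue via $R_{\pnnp{\pap_x}}$) and you spell out why an ``at a world'' cycle defeats annotated-satisfiability.
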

\begin{proof}
We proceed by induction on the order in which the nodes are closed.
That is, when dealing with a node which has just become closed,
we can assume that all other nodes in~$G$ which are already closed are not realisable.
We must consider all cases in the algorithm where a node can be closed.

If a node is closed because it contains an immediate contradiction,
it cannot be realised by definition.
If a node is closed because it contains an ``at a world'' cycle,
it cannot be a state by construction.
Furthermore, it is easy to see 
that nodes containing an ``at a world'' cycle cannot be annotated-satisfiable.

If an $\alpha$- or a $\beta$-node or a special node~$x$ is closed
because all its children are closed then~$x$ is not realisable
because of the induction hypothesis on the children
and Lemma~\ref{lemmad:annsatprop}\ref{enumd:annsatprop:a} and~\ref{enumd:annsatprop:c}.

If a state~$x \in V$ is closed because one of its children is closed,
let~$\pea{l}{\varphi}$ be the \feal-formula in~$x$
such that its successor~$y$ is already closed and~$\alt_x = \alt_y$.
The claim then follows from the induction hypothesis on~$y$, Prop.~\ref{propd:chain},
and Lemma~\ref{lemmad:annsatprop}\ref{enumd:annsatprop:b}.
Note that a model chain can always be shortened
such that it does not contain a pair twice.

The last and most interesting case is
when a node~$\ini{x} \in V$ is closed in Rule~4
because there exists an eventuality~$\ini{\varphi} \in \fset_{\ini{x}} \cap \fmlev$
such that~$\prs_{\ini{x}}(\ini{\varphi}) = \emptyset$.
For the rest of the proof,
we consider~$G$ at that moment right before closing~$\ini{x}$.
Next we assume that~$\ini{x}$ is realisable and derive a contradiction.

For a contradiction, we distinguish whether or not~$\ini{x}$ is a state.
If~$\ini{x}$ is a state, let~$M = (W, R, \evv)$ be a model and~$\ini{w} \in W$
such that~$(M, \ini{w})$ realises~$\ini{x}$.
Furthermore, let~$\fchn$ be a model chain for~$(M, \ini{w}, \ini{\varphi}, \extr(\ini{\varphi}))$
which exists due to Prop.~\ref{propd:chain}.
If~$\ini{x}$ is an $\alpha$- or a $\beta$-node or a special node,
let~$M = (W, R, \evv)$ be a model, $\ini{w}, \ini{v} \in W$,
and~$\ini{\mchn}: \fmlev \to \opt{\big( \seq{W \times \fml} \big)}$ a function
such that~$(M, \ini{v}, \ini{w}, \ini{\mchn})$ realises~$\ini{x}$.
Furthermore, let~$\fchn := \ini{\mchn}(\ini{\varphi})$,
that is~$\fchn$ is a model chain for~$(M, \ini{w}, \ini{\varphi}, \extr(\ini{\varphi}))$.
In both cases, it is not too hard to see that we can assume without loss of generality
that~$\fchn_i \not= \fchn_j$ for all~$0 \leq i, j < |\fchn|, i \not= j$.

We will inductively define an arbitrarily long sequence~$\nchn \in \seq{V \times \Nat}$,
initially starting with~$\nchn := (\ini{x}, 0)$,
such that the following invariant is maintained for all~$0 \leq k < |\nchn|$:
\begin{description}
\item[Invariant:] Let~$(x, i) := \nchn_k \in V \times \Nat$.
  \begin{itemize}
  \item if~$k > 0$ and~$(y, j) := \nchn_{k-1}$ then:
    $x$ is a child of~$y$ and~$i \geq j$, and if~$z$ is a state then~$i > j$;
  \item we have~$0 \leq i < |\fchn|$, let~$(w, \varphi) := \fchn_i \in W \times \fml$;
  \item $\varphi \in \fset_y \cap \fmlev$ and~$\ptya(x, \varphi)$ holds;
  \item if~$x$ is a state then~$(M, w)$ realises~$x$ and~$\varphi$ is a \feal-formula;
  \item if~$x$ is an $\alpha$- or a $\beta$-node or a special node,
    there exists a~$v \in W$ and a function~$\mchn: \fmlev \to \opt{\big( \seq{W \times \fml} \big)}$
    such that~$\mchn(\varphi) = \fchn_{\geq i}$ and~$(M, v, w, \mchn)$ realises~$x$.
  \end{itemize}
\end{description}
It is not difficult to see
that the initial sequence~$\nchn = (\ini{x}, 0)$ fulfils the invariant.
Note in particular that if~$\ini{x}$ is a state
then~$\ini{\varphi}$ is a \feal-formula by construction in \detstss{}.

Before we describe the construction of~$\nchn$,
we show how we can use~$\nchn$ to derive a contradiction.
Because of Prop.~\ref{propd:facts}\ref{enumd:facts:d},
we can make~$\nchn$ long enough
so that it contains~$|\fchn| + 1$ (not necessarily different) states.
Hence~$\nchn$ must contain an element~$(x, j)$ with~$j \geq |\fchn|$ due to the invariant;
but this is not possible since the invariant also guarantees
that the natural numbers in~$\nchn$ are strictly smaller than~$|\fchn|$.
Next we will show how to construct~$\nchn$.

Let~$(x, i)$ be the last pair in~$\nchn$ as constructed so far and~$(w, \varphi) := \fchn_i$.
We distinguish whether~$x$ is an $\alpha$/$\beta$-node or a state or a special node.

\noindent\emph{Case~1 ($x$ is $\alpha$/$\beta$-node):}
Let~$v \in W$ and~$\mchn: \fmlev \to \opt{\big( \seq{W \times \fml} \big)}$ be a function
such that~$\mchn(\varphi) = \fchn_{\geq i}$ and~$(M, v, w, \mchn)$ realises~$x$.
Using Lemma~\ref{lemmad:annsatprop}\ref{enumd:annsatprop:a},
we obtain a child~$y$ of~$x$ and an extension~$\mchn'$ of~$\mchn$
such that~$(M, v, w, \mchn')$ realises~$y$.
In particular, we have~$\mchn'(\varphi) = \fchn_{\geq i}$.
Furthermore~$y$ cannot be closed because of the induction hypothesis.
We extend~$\nchn$ by~$(y, i)$.
It remains to show that property~$\ptya(y, \varphi)$ holds,
but this is exactly what Lemma~\ref{lemmad:property} does.

\noindent\emph{Case~2 ($x$ is a state):}
Because of the invariant, we know that~$\varphi$ is of the form~$\pea{l}{\psi} \in \fmlev$
for some~$l \in \lprg$ and some~$\psi \in \fmlev$.
Let~$v' \in W$ such that~$(v', \psi) = \mchn(\pea{l}{\psi})_1 = \fchn_{i+1}$
and~$y$ be the successor of~$\pea{l}{\psi}$.
By construction in \detstss{} and the fact that~$x$ is open and up-to-date,
we have~$\alt_y \subseteq \alt_x$.
Using Lemma~\ref{lemmad:annsatprop}\ref{enumd:annsatprop:b}
on~$y$ and~$\pea{l}{\psi}$ and~$\fchn{\geq i}$,
we obtain a function~$\mchn': \fmlev \to \opt{\big( \seq{W \times \fml} \big)}$
such that~$(M, w, v', \mchn')$ realises~$y$ and~$\mchn'(\psi) = \fchn_{\geq i+1}$.
In particular we have~$i+1 < |\fchn|$ since~$\psi \in \fmlev$.
Furthermore~$y$ cannot be closed,
either because of~$y = \ini{x}$ or because of the induction hypothesis.
We extend~$\nchn$ by~$(y, i+1)$.
It remains to show that property~$\ptya(y, \psi)$ holds
which is done in Lemma~\ref{lemmad:property}.

\noindent\emph{Case~3 ($x$ is a special node):}
Let~$v \in W$ and~$\mchn: \fmlev \to \opt{\big( \seq{W \times \fml} \big)}$ be a function
such that~$\mchn(\varphi) = \fchn_{\geq i}$ and~$(M, v, w, \mchn)$ realises~$x$.
Using Lemma~\ref{lemmad:annsatprop}\ref{enumd:annsatprop:c},
either~$(M, w)$ realises~$y := \chld(x, \mrk)$
or there exists a child~$y$ of~$x$ and an extension~$\mchn'$ of~$\mchn$
such that~$y$ is not a state and~$(M, v, w, \mchn')$ realises~$y$.
In both cases~$y$ cannot be closed because of the induction hypothesis.

If~$\varphi$ is a not a \feal-formula then we consider~$\varphi' := \ann_x(\varphi)$
which must be defined as~$x$ is a special node.
We know~$\varphi' \in \fmlev$
since we would have~$\prs_x(\varphi) = \bot$ by construction in \detstsb{} otherwise.
As~$(M, w, \mchn)$ annotated-satisfies~$x$,
we therefore have~$\mchn(\varphi)_1 = (w, \varphi')$ and hence~$\mchn(\varphi') = \mchn(\varphi)_{\geq 1}$.
Due to the invariant we have~$\mchn(\varphi) = \fchn_{\geq i}$
and thus~$\mchn(\varphi') = \fchn_{\geq i+1}$ and in particular~$i+1 < |\fchn|$.
Furthermore we have~$\prs_x(\varphi) = \prs_x(\varphi')$
by definition of \detstss{} and because of Prop.~\ref{propd:facts}\ref{enumd:facts:f}.
Thus~$\ptya(x, \varphi')$ holds since~$\ptya(x, \varphi)$ holds.
If~$\varphi'$ is a not a \feal-formula, we consider~$\varphi'' := \ann_x(\varphi')$ and so on.
Since~$x$ is open, and hence~$\ann_x$ is non-cyclic,
we will eventually obtain a \feal-formula~$\pea{l}{\psi} \in \fmlev$
for some~$l \in \lprg$ and some~$\psi \in \fml$ such that~$\ptya(x, \pea{l}{\psi})$ holds
and~$\mchn(\pea{l}{\psi}) = \fchn_{\geq i+j}$ for some~$j \in \Nat$ with~$i+j < |\fchn|$.
Since~$\mchn'$, if it is needed, is an extension of~$\mchn$,
we also have~$\mchn'(\pea{l}{\psi}) = \fchn_{\geq i+j}$.
In particular, we have~$\fchn_{\geq i+j} = (w, \pea{l}{\psi})$.
Note that~$\varphi = \pea{l}{\psi}$, and thus~$j = 0$, is possible.

We extend~$\nchn$ by~$(y, i + j)$.
It remains to show that property~$\ptya(y, \pea{l}{\psi})$ holds
which is shown by Lemma~\ref{lemmad:property} and the established fact
that~$\ptya(x, \pea{l}{\psi})$ holds.
\qed
\end{proof}

\begin{theorem}
  \label{theod:completeness}
  If~$\rt$ is closed then~$\phi$ is unsatisfiable.
\end{theorem}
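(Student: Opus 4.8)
The plan is to prove the contrapositive: I will show that if $\phi$ is satisfiable then the root~$\rt$ is realisable, whence~$\rt$ cannot be closed by Lemma~\ref{lemmad:realise}. Recall that~$\rt$ is a state (since $\pst_\rt = \pap_\rt = \bot$) with $\fset_\rt = \{ \pea{d}{\phi} \}$, and that $\alt_\rt = \emptyset$ by Prop.~\ref{propd:facts}\ref{enumd:facts:h}. Hence, by the definition of ``realises'' for states, it suffices to produce a model~$M$ and a world~$w$ with $\psr{M,w}{\pea{d}{\phi}}$; the clause ``$\npsr{M,w}{(\fset_\rt \cup \fset)}$ for all $\fset \in \alt_\rt$'' is then vacuously true.

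So suppose $\phi$ is satisfiable, say $\psr{M,w}{\phi}$ for some model $M = (W,R,\evv)$ and some world $w \in W$. Since the dummy atomic program~$d$ was chosen not to occur in~$\phi$, reinterpreting~$d$ (and hence~$\pcv{d}$) cannot change the truth value of~$\phi$, nor of any subformula of~$\phi$, at any world of~$W$; this locality property of the PDL semantics is verified by a routine simultaneous induction on the structure of formulae and programs. I would therefore pass to the model~$M'$ obtained from~$M$ by setting $R_d := W \times W$ and leaving all other components unchanged: then $\prel{w}{R_d}{w}$ and still $\psr{M',w}{\phi}$, so $\psr{M',w}{\pea{d}{\phi}}$, i.e.\ $\psr{M',w}{\fset_\rt}$. (Alternatively one may adjoin a fresh world $w_0 \notin W$ with the single new edge $(w_0,w) \in R_d$ and argue in the same way that $\phi$ is still satisfied at~$w$, giving $\psr{M',w_0}{\pea{d}{\phi}}$.)

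It then follows that $(M',w)$ realises the state~$\rt$, so~$\rt$ is realisable, and Lemma~\ref{lemmad:realise} yields that~$\rt$ is not closed. Equivalently, if~$\rt$ is closed then~$\phi$ is unsatisfiable, which is the statement of the theorem.

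I expect no real obstacle here: the theorem is a short corollary of the machinery already developed, and in particular of Lemma~\ref{lemmad:realise} (which in turn rests on Lemmas~\ref{lemmad:annsatprop} and~\ref{lemmad:property}). The only step requiring a little care is the semantic invariance observation --- that altering the interpretation of the dummy program~$d$ (or attaching an isolated predecessor world) leaves the truth of~$\phi$ unchanged --- since this is precisely what converts an arbitrary model of~$\phi$ into a realiser of~$\rt$; but this is entirely standard and should not be skipped.
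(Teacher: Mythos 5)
Your proposal is correct and follows essentially the same route as the paper: argue (contrapositively) that a model of~$\phi$ yields a model of~$\pea{d}{\phi}$ because~$d$ does not occur in~$\phi$, note that~$\alt_\rt = \emptyset$ makes the second realisability clause vacuous, and conclude via Lemma~\ref{lemmad:realise} that~$\rt$ cannot be closed. The only difference is that you spell out the model surgery (setting $R_d := W \times W$ or adjoining a fresh predecessor) that the paper dismisses as ``not too hard to see''.
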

\begin{proof}
  Assume for a contradiction that~$\phi$ is satisfiable.
  Since the dummy atomic program~$d$ does not occur in~$\phi$,
  it is not too hard to see that~$\pea{d}{\phi}$ is also satisfiable.
  Hence there exists a model~$M = (W, R, \evv)$ and a world~$w \in W$
  such that~$\psr{M, w}{\pea{d}{\phi}}$.
  Together with Prop.~\ref{propd:facts}\ref{enumd:facts:h}
  and the fact that~$\fset_\rt = \{ \pea{d}{\phi} \}$,
  this implies that~$(M, w)$ realises~$x$
  which contradicts Lemma~\ref{lemmad:realise}.
\qed
\end{proof}

\end{document}